\newtheorem{theo}{Theorem}[section]
\newtheorem{lemm}[theo]{Lemma}
\newtheorem{prop}[theo]{Proposition}
\newtheorem{cor}[theo]{Corollary}
\theoremstyle{definition}
\newtheorem{defi}[theo]{Definition}
\newtheorem{rem}[theo]{Remark}
\newtheorem{assum}{Assumption}
\newcommand{\bE}{\mathbb{E}}
\newcommand{\bF}{\mathbb{F}}
\newcommand{\bN}{\mathbb{N}}
\newcommand{\bP}{\mathbb{P}}
\newcommand{\bR}{\mathbb{R}}
\newcommand{\cA}{\mathcal{A}}
\newcommand{\cF}{\mathcal{F}}
\newcommand{\cG}{\mathcal{G}}
\newcommand{\cH}{\mathcal{H}}
\newcommand{\cX}{\mathcal{X}}
\newcommand{\cY}{\mathcal{Y}}
\newcommand{\I}{\mathbbm{1}}
\DeclareMathOperator*{\argmax}{arg\,max}
\newcommand{\ep}{\varepsilon}
\newcommand{\rd}{\,\mathrm{d}}
\newcommand{\relmiddle}[1]{\mathrel{}\middle#1\mathrel{}}
\newcommand{\1}{\mbox{\rm{1}}\hspace{-0.25em}\mbox{\rm{l}}}
\newcommand{\loc}{\mathrm{loc}}
\newcommand{\pre}{\mathrm{pre}}
\newcommand{\expo}{\mathrm{exp}}
\newcommand{\nav}{\mathrm{naive}}
\newcommand{\my}{\mathrm{myopic}}
\newcommand{\so}{\mathrm{soph}}
\providecommand{\keywords}[1]{\textbf{Keywords:} #1}
\def\widebar{\accentset{{\cc@style\underline{\mskip10mu}}}}
\numberwithin{equation}{section}
\title{Periodic portfolio selection with quasi-hyperbolic discounting}
\author{
Yushi Hamaguchi\footnote{Department of Mathematics, Kyoto University, Kyoto 606-8502, Japan (E-mail: hamaguchi@math.kyoto-u.ac.jp.). This author was supported by JSPS KAKENHI Grant Number 22K13958.}
~~ and ~~
Alex S.L. Tse\footnote{Corresponding author. Department of Mathematics, University College London, UK (E-mail: alex.tse@ucl.ac.uk). This author acknowledges the grants offered by The London Mathematical Society and The Heilbronn Institute for Mathematical Research (the UKRI/EPSRC Additional Funding Programme for Mathematical Sciences).}
}
\begin{document}
\maketitle

\begin{abstract}
We introduce an infinite-horizon, continuous-time portfolio selection problem faced by an agent with periodic S-shaped preference and present bias. The inclusion of a quasi-hyperbolic discount function leads to time-inconsistency and we characterize the optimal portfolio for a pre-committing, naive and sophisticated agent respectively. In the more theoretically challenging problem with a sophisticated agent, the time-consistent planning strategy can be formulated as an equilibrium to a static mean field game. Interestingly, present bias and naivety do not necessarily result in less desirable risk taking behaviors, while agent's sophistication may lead to excessive leverage (underinvestement) in the bad (good) states of the world.
\end{abstract}


\keywords
portfolio selection, quasi-hyperbolic discounting, present bias, time-inconsistency, S-shaped utility, equilibrium, fixed point.


\textbf{2020 Mathematics Subject Classification}: 49L99; 49N90; 91A10; 91E99; 91G10; 93E20.


\section{Introduction}

Present bias is a well-documented intertemporal behavioral bias of individuals. It refers to the tendency that a typical person enjoys receiving a smaller-and-sooner reward relative to a larger-and-later one, but such preference is reversed when both options are delayed equally. For example, a certain individual will prefer getting \$100 in one month rather than \$105 in three months, and this same individual will also prefer getting \$105 in fifteen months rather than \$100 in thirteen months. An individual with present bias is disproportionately impatient over short term outcomes while they are more patient over long term outcomes. Such phenomenon has been observed in many experimental and field studies. See, for example, \cite{thaler81}, \cite{loewenstein-prelec93}, \cite{kirby97}, \cite{ariely-wertenbroch02}, \cite{mcclure-etal04}, \cite{oster-scott05},  \cite{meier-sprenger10}, among others. 

In order to capture decreasing impatience rate, hyperbolic discounting has been widely adopted as a popular alternative to the classical exponential discounting criterion. In particular, one of its special variants known as quasi-hyperbolic discounting has found success in enabling macro-finance models to produce consumption and saving patterns that are more consistent with empirical data (\cite{phelps-pollak68}, \cite{laibson97}, \cite{laibson-repetto-tobacman98}, \cite{diamond-koszegi03}, \cite{harris-laibson01}, etc). Meanwhile, the impact of present bias on individuals' risk taking behaviors appears to be less commonly explored in the literature. While there are several theoretical studies of portfolio optimization under (quasi-)hyperbolic discounting, a typical finding is that the present bias component has no impact at all on the optimal portfolio choice. For example, \cite{palacios11}, \cite{zou-chen-wedge14}, \cite{chen-xiang-he19}, \cite{love-phelan15}, \cite{shin-roh19} and \cite{shigeta22} consider similar portfolio optimization problems formulated as some optimal consumption-and-investment problem based on Merton (\cite{Merton:69}, \cite{Merton:71}), and these papers all report that the optimal investment policy is just given by the classical Merton ratio which is independent of the agent's hyperbolic discount function. The irrelevance of present bias to risk taking is perhaps not too surprising in a consumption-based model. It is because the reward functional solely depends on the agent's lifetime consumption strategy. The investment decision and the portfolio value do not enter the agent's objective function directly but rather they only implicitly appear within the optimization problem at the level of an intertemporal budget constraint. Under a more general formulation of a consumption-and-investment problem under non-exponential discounting with objective function depending on both intertemporal consumption and terminal wealth, \cite{hamaguchi21} finds that the open-loop equilibrium strategy does not depend on how consumption utility is discounted (although it depends on the discount function applied to the utility of terminal wealth. See Remark 4.2 in \cite{hamaguchi21}). Further meaningful investigations of how present bias affects investment behaviors can also be found in applications beyond portfolio optimization. Examples include real option (\cite{grenadier-wang07}), corporate capital structure (\cite{tian16}) and production economy (\cite{liu16}).

While the ``Merton-style'' optimal consumption-and-investment problem is arguably the most important canonical approach to study portfolio selection, there are many real life applications where this framework is perhaps not too appropriate. An example is delegated portfolio management, in which the agent is a fund manager overseeing a portfolio for a client or a principal. The agent who invests on behalf of someone else cannot directly ``consume'' the underlying portfolio, but rather their incentives are tied to some remunerations derived from their trading performance assessed on a regular basis. Moreover, the agent's preference may also deviate drastically from a standard concave utility function due to managerial incentive distortion like limited liability protection as well as other psychological biases such as those described by Prospect Theory of \cite{tversky-kahneman1992}. To this end, we adopt the periodic portfolio selection model of \cite{TsZh21} under which the agent's reward functional is defined as the total discounted S-shaped utilities over the portfolio performance across some exogenously fixed periods. A novel consideration of our study is that we incorporate a quasi-hyperbolic discount function to describe the agent's intertemporal preference. Since the portfolio performance now directly enters the agent's running reward function, present bias has a first-order impact on the optimal investment strategy. Periodic portfolio selection under quasi-hyperbolic discounting therefore results in a mathematically and economically rich problem, which has not been covered in the literature yet to the best of our knowledge.

The contributions of our work are twofold. On the theoretical side, we give a complete characterization of the optimal portfolios, covering several criteria of optimality. Deviation from exponential discounting introduces time-inconsistency in a dynamic decision problem. Following \cite{strotz1956}, we consider three types of the agent: i) a pre-committing one who only solves the optimization problem once at the initial time and follows the derived strategy throughout the rest of the investment horizon; ii) a naive one who keeps re-optimizing at the beginning of each investment period, overriding any once-optimal plan made in the past; and iii) a sophisticated one who is aware of their time-inconsistency but is unable to commit to a given strategy, and hence they will act optimally at the current time point in response to the actions to be adopted by their future incarnations.

Our problem falls into the broad category of stochastic control problem with non-exponential discounting. Theoretical works in this direction include those previously cited papers on optimal consumption-and-investment problems with non-exponential discounting, as well as \cite{yong12}, \cite{hu-jin-zhou12}, \cite{hu-jin-zhou17}, \cite{bjork-khapko-murgoci17} \cite{jaskiewicz-nowak21}, \cite{ekeland-pirvu08}, \cite{ekeland-mbodji-pirvu12}, \cite{liu16}, among others. Many of these cited works employ the (extended) Hamilton-Jacobi-Bellman (HJB) equation or a flow of forward-backward stochastic differential equation (FBSDE) as a tool to analyze the time-inconsistent problem, especially for the derivation of an intrapersonal equilibrium strategy adopted by a sophisticated agent. We wish to stress that the structure of our periodic problem features a number of modeling elements which make it hard to proceed with the classical approaches. First, the periodic rewards depend on the historical values of the portfolio and therefore our problem is path-dependent. Second, the S-shaped utility function prevents us from conveniently characterizing the optimal portfolio strategy as a feedback control since it is difficult to establish the concavity/convexity behaviors of the value function upfront. Lastly, our problem has an infinite horizon without any ``terminal condition''. All these difficulties, when arising in conjunction with time-inconsistency and multiple notions of optimality, require us to seek alternative avenue to attack the problem. In this regard, we contribute to the literature of continuous-time, time-inconsistent stochastic control problem by showcasing new mathematical techniques which do not rely on the commonly adopted primal HJB nor the FBSDE approaches.

The philosophy of our approach is inspired by \cite{TsZh21} which utilizes a combination of discrete-time dynamic programming principle and martingale duality. The key insight is that for each type of agent (pre-committing, naive and sophisticated), one can attempt to characterize the optimal one-period gross return of the portfolio via a family of some finite horizon problems with maturity given by the length of the evaluation period. Each individual problem in such family can be solved by martingale duality. Then the correct element of this family can be identified by solving a suitable fixed point problem depending on the nature of the agent. Once we have found the optimal portfolio gross returns, the entire optimal wealth process can be constructed by a replication argument. Among the three types of agent we consider, the case of the sophisticated agent represents the most mathematically interesting and challenging problem. It turns out the problem resembles a static mean filed game with countably infinite number of players where the control is parametrized as a random variable. In general, the solution approaches considered in this paper are quite different from those in the existing literature. We believe this will shed lights on how non-standard time-inconsistent control problem can be analyzed by tools beyond the standard methods of HJB or FBSDE.

In parallel, our work also makes economic contributions to the portfolio optimization literature by connecting present bias and risk taking behaviors, where such connection has not received much attention to date. An agent with S-shaped utility is generally risk seeking in negative skewness, meaning that they tend to gamble aggressively in bad states of the world but reduce the risk taken in good states of the world. But the repeated nature of the periodic rewards introduces a ``continuation value component'' that distorts the agent's utility function. At a high level, the agent's degree of present bias affects their subjective weights across the short-term reward and the long-term continuation value. This in turn governs the agent's overall incentive and eventually leads to different investment behaviors. Our modeling framework provides a useful theoretical foundation to deduce some important policy and empirical implications concerning present bias and managerial risk taking. As a preview of our results, we find that: i) A present-biased agent takes more (less) negatively skewed risk relative to an exponential discounter when the investment prospect is sufficiently good (bad); and ii) A sophisticated agent takes more negatively skewed risk compared to their naive counterpart. 

The rest of the paper is organized as follows. Section \ref{sect:model} presents our modeling framework and we introduce the three concepts of optimality. Then in Section \ref{sect:dpe} we derive the dynamic programming equations for both the pre-committing and the sophisticated agent. An auxiliary family of optimization problems is studied in Section \ref{sect:auxprob}, which will then be used in Section \ref{sect:mainresults} to characterize the optimal portfolio strategies for the pre-committing, naive and sophisticated agent. Some comparative statics and further discussion of the economic intuitions can be found in Section \ref{sect:discuss}. Section \ref{sect:conclude} concludes. Miscellaneous technical materials are collected in the appendix. 

\section{The model}
\label{sect:model}

We present a periodic portfolio selection model which follows closely to that of \cite{TsZh21}.

\subsection{The economy}

Let $B$ be a one-dimensional Brownian motion on a complete probability space $(\Omega,\cF,\bP)$. For $s\geq0$, $\bF^s=(\cF^s_t)_{t\geq s}$ denotes the augmented filtration generated by $(B_t-B_s)_{t\geq s}$. Let $\bF:=\bF^0$ and $\cF_t:=\cF^0_t$ for $t\geq0$. We work with a Black-Scholes market consisting of one risky asset and one risk free money market instrument. The price process $S$ of the risky asset is a geometric Brownian motion with dynamics
\begin{equation*}
	\rd S_t=\mu S_t\rd t+\sigma S_t\rd B_t,\ t\geq0,\ S_0>0.
\end{equation*}
The price process $D$ of the risk free money market instrument is given by $D_t=e^{rt}$, $t\geq0$. We assume that $\mu,r\in\bR$ and $\sigma>0$ are constants. Write $\phi:=\frac{\mu-r}{\sigma}$ as the market price of risk or the Sharpe ratio of the risky asset. Throughout the paper, we assume that $\phi\neq0$. Then the unique pricing kernel $(Z_t)_{t\geq 0}$ of this Black-Scholes market is non-degenerate and has an expression of
\begin{equation}
	Z_t=\exp\left(-\phi B_t-\left(r+\frac{\phi^2}{2}\right)t\right),\ t\geq0.
\label{eq:bs_kernel}
\end{equation}

\subsection{Periodic evaluation and admissible strategies}

An agent forms a portfolio via investing dynamically in the risky asset and the risk free money market instrument. The portfolio value will be inspected on a sequence of evenly spaced dates of $T_i:=i\tau$ for $i\in\bN_0:=\bN\cup\{0\}$. Here, $\tau>0$ is a given constant representing the length of each evaluation period. For example, a choice of $\tau=1$ refers to an annual evaluation scheme.

Now we introduce the notion of a portfolio strategy. For $\pi=(\pi_t)_{t\geq T_n}\in L^2_{\bF^{T_n},\loc}(T_n,\infty;\bR)$, let $\pi_t$ represent the {\it relative rate} of the dollar amount invested in the risky asset at time $t$ to the wealth at the initial time of each period. Specifically, $u_t:=\pi_tX_{T_i}$ represents the dollar amount invested in the risky asset where $X_{T_i}$ denotes the agent's wealth at time $T_i$. The dynamics of the wealth process $X=X^{n,x,\pi}$ with a fixed initial wealth $X_{T_n}=x$ is then given by the following recursion
\begin{equation*}
	\rd X_t=(rX_t+(\mu-r)\pi_tX_{T_i})\rd t+\sigma\pi_tX_{T_i}\rd B_t,\ t\in[T_i,T_{i+1}),\ i\geq n,\ X_{T_n}=x,
\end{equation*}
which can be equivalently written as
\begin{equation*}
	\rd X_t=(rX_t+(\mu-r)\pi_tX_{\tau\left[\frac{t}{\tau}\right]})\rd t+\sigma\pi_tX_{\tau\left[\frac{t}{\tau}\right]}\rd B_t,\ t\geq T_n,\ X_{T_n}=x,
\end{equation*}
where $[\cdot]$ denotes the floor operator. It is easy to see that the above (path dependent) SDE has a unique continuous solution given by
\begin{equation*}
	X^{n,x,\pi}_t=X^{n,x,\pi}_{T_i}Y^{i,\pi}_t,\ t\in[T_i,T_{i+1}],\ i\geq n,\ X^{n,x,\pi}_{T_n}=x,
\end{equation*}
where $Y^{i,\pi}=(Y^{i,\pi}_t)_{t\in[T_i,T_{i+1}]}$ is the unique continuous solution to the SDE
\begin{equation*}
	\rd Y^{i,\pi}_t=(rY^{i,\pi}_t+(\mu-r)\pi_t)\rd t+\sigma\pi_t\rd B_t,\ t\in[T_i,T_{i+1}),\ Y^i_{T_i}=1.
\end{equation*}
Economically, each $Y^{i,\pi}$ is the {\it gross return rate process} of the portfolio over the $i$-th period $[T_i,T_{i+1})$.
Observe that $Y^{i,\pi}$ does not depend on $x$, but the wealth process $X^{n,x,\pi}$ and the dollar amount process $u_t=\pi_tX^{n,x,\pi}_{\tau\left[\frac{t}{\tau}\right]}$ depend on $x$.


\begin{rem}
In the literature, a usual practice to parametrize a trading strategy is via either the proportion of wealth invested in the risky asset or the dollar amount invested in the risky asset. In the former approach, one typically considers $p=(p_t)_{t\geq T_n}\in L^2_{\bF^{T_n},\loc}(T_n,\infty;\bR)$ such that $p_t$ denotes the {\it proportion of wealth} invested in the risky asset at time $t\geq T_n$. The dynamics of the wealth process $X=X^{n,x,p}$ with a given initial wealth $X_{T_n}=x\geq0$ is given by
\begin{equation*}
	\rd X_t=(r+(\mu-r) p_t)X_t\rd t+\sigma p_tX_t\rd B_t,\ t\geq T_n,\ X_{T_n}=x.
\end{equation*}
The above SDE has a $\bP$-a.s. positive unique solution given by
\begin{equation*}
	X_t=x\exp\Big(\int^t_{T_n}\Big(r+(\mu-r)p_s-\frac{1}{2}\sigma^2p^2_s\Big)\rd s+\int^t_{T_n}\sigma p_s\rd B_s\Big),\ t\geq T_n.
\end{equation*}
This therefore a priori excludes the possibility of bankruptcy (i.e. $X_t=0$ for some $t$), which could indeed arise within our problem. In the latter approach, one considers $u=(u_t)_{t\geq T_n}\in L^2_{\bF^{T_n},\loc}(T_n,\infty;\bR)$ where $u_t$ represents the {\it dollar amount} invested in the risky asset at time $t\geq T_n$. The corresponding wealth process $X=X^{n,x,u}$ with a given initial wealth $X_{T_n}=x\geq0$ has dynamics given by
\begin{equation*}
	\rd X_t=(rX_t+(\mu-r)u_t)\rd t+\sigma u_t\rd B_t,\ t\geq T_n,\ X_{T_n}=x.
\end{equation*}
In this framework, the future dollar amounts invested in the risky asset will be chosen depending on the initial wealth $x$ at time $T_n$. However, this is not suitable for our purpose as we want to allow the dollar amounts invested on each period, $[T_i,T_{i+1})$ with $i\geq n$, to be depending on the starting wealth $X_{T_i}$. In our problem, we want the control $u$ to be Markovian in $(t,X_t,L_t)$ where $L_t:=X_{\tau\left[\frac{t}{\tau}\right]}$. The notion of $\pi$ defined as the relative rate of dollar investment is a convenient way to incorporate this idea. 
\end{rem}

We introduce the following sets of admissible portfolios:
\begin{equation*}
	\Pi^i:=\left\{\pi=(\pi_t)_{t\in[T_i,T_{i+1})}\relmiddle|\begin{aligned}\bF^{T_i}\text{-progressively measurable}, \int^{T_{i+1}}_{T_i}|\pi_t|^2\rd t<\infty\ \text{a.s.},\\Y^{i,\pi}_t\geq0\ \text{for any $t\in[T_i,T_{i+1}]$ a.s.}\end{aligned}\right\}
\end{equation*}
for $i\in\bN_0$, and
\begin{equation*}
	\Pi^{(n)}:=\{\pi=(\pi_t)_{t\in[T_n,\infty)}\,|\,(\pi_t)_{t\in[T_i,T_{i+1})}\in\Pi^i\ \text{for any $i\geq n$}\}
\end{equation*}
for $n\in\bN_0$. We denote $\Pi:=\Pi^{(0)}$. Specifically, we require an admissible wealth process (or equivalently its gross return rate process) to be non-negative. For $\pi\in\Pi^{(n)}$ and $x\geq 0$, the wealth process $X^{n,x,\pi}=(X^{n,x,\pi}_t)_{t\in[T_n,\infty)}$ is not Markovian, but semi-Markovian in the sense that $(X^{n,x,\pi}_{T_i})_{i\geq n}$ is a Markov process. In our framework, wealth is allowed to hit zero with positive probability.

By standard arguments concerning a non-negative self-financing portfolio, for each $i\in\bN_0$ and $\pi\in\Pi^i$, the process $(\frac{Z_t}{Z_{T_i}}Y^{i,\pi}_t)_{t\in[T_i,T_{i+1}]}$ is a nonnegative local martingale and in turn a supermartingale. This implies that the ``static budget constraint''
\begin{equation*}
	\bE\Big[\frac{Z_{T_{i+1}}}{Z_{T_i}}Y^{i,\pi}_{T_{i+1}}\Big]\leq1
\end{equation*}
holds for any $i\in\bN_0$ and $\pi\in\Pi^i$. 

\begin{rem}
By Remark 3.4 in \cite{KaSh16}, bankruptcy is an absorbing state for $Y^{i,\pi}$: If the gross return rate becomes zero before time $T_{i+1}$, it stays there, and $\pi=0$ a.e.\ on $\{(\omega,t)\in\Omega\times[T_i,T_{i+1}]\,|\,Y^{i,\pi}_t(\omega)=0\}$. Actually, there exists an admissible $\pi\in\Pi^i$ such that $Y^{i,\pi}_{T_{i+1}}=0$. Let $\kappa^i:=\inf\{t\in[T_i,T_{i+1})\,|\,\int^t_{T_i}\frac{1}{\sqrt{T_{i+1}-s}}(\rd B_s+\theta\rd s)=-1\}\wedge T_{i+1}$, and define $\pi$ by $\pi_t:=\frac{e^{rt}}{\sigma\sqrt{T_{i+1}-t}}\1_{\{t\leq \kappa^i\}}$, $t\in[T_i,T_{i+1})$. Then $\pi\in\Pi^i$ and $Y^{i,\pi}_{T_{i+1}}=0$ a.s. This is the classical ``doubling down'' strategy.
\end{rem}


\subsection{Performance functional with quasi-hyperbolic discounting}

At time $t=T_{i+1}$ for each $i\in\bN_0$, the agent's trading performance within the period $[T_i,T_{i+1}]$ is measured as $X_{T_{i+1}}-\gamma X_{T_i}$, where $\gamma> 0$ is a performance benchmark parameter. The performance measure is positive if and only if the gross return of the portfolio $Y^{i}_{T_{i+1}}$ is larger than $\gamma$. The agent derives a burst of utility linked to their periodic performance given by $U(X_{T_{i+1}}-\gamma X_{T_i})$, where $U$ is a piecewise power utility function of \cite{tversky-kahneman1992} given by
\begin{equation*}
	U(x):=
	\begin{cases}
		x^\alpha,\ &x\geq0;\\
		-k|x|^\alpha,\ &x<0.
	\end{cases}
\end{equation*}
Here, $\alpha\in(0,1)$ is the parameter of risk aversion/seeking over the domain of gains/losses respectively, and $k\geq 0$ is the loss aversion parameter. 

We assume the agent exhibits present bias. Consider a quasi-hyperbolic discount function given by
\begin{eqnarray}
	D(s):=
	\begin{cases}
		e^{-\delta s}, & s\leq \tau;\\
		\beta e^{-\delta s}, & s>\tau.
	\end{cases}
\label{eq:qd}
\end{eqnarray}
In the above, $\delta>0$ plays the role of the discount rate while $\beta\in[0,1]$ represents the agent's myopia level. In the special case of $\beta=1$, $D(\cdot)$ degenerates to the standard exponential discount function.
\begin{rem}
Quasi-hyperbolic discount function is typically deployed in a discrete-time model. For our discount function in \eqref{eq:qd} under the current continuous-time setup, we assume the agent discounts outcomes in the distant future more heavily by an additional factor of $\beta\in[0,1]$. The duration cut-off between ``near-term'' and ``long-term'' is implicitly assumed to be $\tau$ in the definition of \eqref{eq:qd}, where the form of the discount factors changes beyond $s=\tau$. It is not an unreasonable assumption as the forthcoming evaluation date can serve as a natural ``mental anchor'' such that the agent will place more psychological focus on what is happening in the current evaluation period, and thus all outcomes within $s\in[0,\tau]$ are treated as ``near-term''.
\end{rem}

For each $n\in\bN_0$, $x\geq0$ and $\pi\in\Pi^{(n)}$, define the reward functional as the total net present value of the utilities over an infinite horizon beyond the initial time $T_n$, i.e.
\begin{align*}
	J_n(\pi;x)&:=\bE\Big[\sum^\infty_{i=n+1}D(T_i-T_n)U(X^{n,x,\pi}_{T_i}-\gamma X^{n,x,\pi}_{T_{i-1}})\Big]\\
	&=\bE\Big[e^{-\delta(T_{n+1}-T_n)}U(X^{n,x,\pi}_{T_{n+1}}-\gamma x)+\sum^\infty_{i=n+2}\beta e^{-\delta(T_i-T_n)}U(X^{n,x,\pi}_{T_i}-\gamma X^{n,x,\pi}_{T_{i-1}})\Big].
\end{align*}

If $\beta=0$, then the agent is completely myopic in the sense that they do not care about the performance in the future periods. In this case, the reward functional is reduced to a standard finite horizon maximization problem
	\begin{equation*}
		J_n(\pi;x)=\bE\Big[e^{-\delta(T_{n+1}-T_n)}U(X^{n,x,\pi}_{T_{n+1}}-\gamma x)\Big].
	\end{equation*}
Solution to the optimization problem with the above functional is well known (see \cite{berkelaar-kouwenberg-post04} for example). The optimal wealth process $\hat{X}$ is given by
\begin{equation*}
	\hat{X}_{T_{n+1}}=\hat{X}_{T_n}y\Big(\frac{Z_{T_{n+1}}}{Z_{T_n}}\Big),\ n\in\bN_0,\ \hat{X}_0=x.
\end{equation*}
Here,
\begin{equation*}
	y(z):=
	\begin{dcases}
		\gamma+\Big(\frac{\alpha}{\lambda^*z}\Big)^{\frac{1}{1-\alpha}},\ &z<z^*,\\
		0,\ &z\geq z^*,
	\end{dcases}
\end{equation*}
with $z^*\in(0,\infty)$ satisfying $\alpha^{\frac{\alpha}{1-\alpha}}(1-\alpha)(\lambda^*z^*)^{-\frac{\alpha}{1-\alpha}}-\gamma\lambda^*z^*+k\gamma^\alpha=0$ and $\lambda^*>0$ satisfying $\bE[Z_\tau y(Z_\tau)]=1$. 

If $\beta=1$, then the time-preference is reduced to the usual exponential discounting, and the problem becomes the one studied in \cite{TsZh21}:
	\begin{equation*}
		J_n(\pi;x)=\bE\Big[\sum^\infty_{i=n+1}e^{-\delta(T_i-T_n)}U(X^{n,x,\pi}_{T_i}-\gamma X^{n,x,\pi}_{T_{i-1}})\Big]. 
	\end{equation*}

If $\beta\in(0,1)$, then the agent's time-preference exhibits non-degenerate quasi-hyperbolic discounting. Portfolio optimization featuring both S-shaped utility and quasi-hyperbolic discounting has not been considered in the literature to date. The main goal of this paper is to study the corresponding optimal portfolio under different concepts of optimality.

Following \cite{TsZh21}, we impose a standing assumption which is a sufficient condition to ensure well-posedness of the problem when $\beta=1$. 
\begin{assum}
	The model parameters are such that
	\begin{equation}
		\delta>h:=r\alpha+\frac{\alpha\phi^2}{2(1-\alpha)}.
		\label{eq:assump}
	\end{equation}
\end{assum}

\begin{rem}
The constant $h$ defined in \eqref{eq:assump} is related to the solution to a finite-horizon Merton problem via $e^{h\tau}=\sup_{\pi\in\Pi^0}\bE[(Y^{0,\pi}_\tau)^\alpha]$.
\end{rem}

\subsection{Optimality criteria}
It is well known that non-exponential discounting induces {\it time-inconsistency}. If $\beta\in (0,1)$, then the (discrete) family of optimization problems $\{J_n(\cdot;x)\}_{n\in\bN_0,x\geq0}$ is time-inconsistent in the following sense: even if $\pi^{(n)}=(\pi^{(n)}_t)_{t\in[T_n,\infty)}$ is an optimal portfolio for $J_n(\cdot;x)$, the restriction $\pi^{(n+1)}=(\pi^{(n)}_t)_{t\in[T_{n+1},\infty)}$ of $\pi^{(n)}$ on the future periods $[T_{n+1},\infty)$ might not be optimal for $J_{n+1}(\cdot;X^{n,x,\pi^{(n)}}_{T_{n+1}})$. It is therefore not even clear upfront what the meaning of optimality is in presence of quasi-hyperbolic discounting. Following \cite{strotz1956}, we will consider three different notions of optimality: pre-committing, naive, and sophisticated.

\begin{defi}[Pre-committing agent]
Fix a reference time $T_n$ for some $n\in\bN_0$. $\pi^{\pre,(n)}=(\pi^{\pre,(n)}_t)_{t\geq T_n}\in\Pi^{(n)}$ is said to be an optimal pre-committing strategy with respect to time $T_n$ if 
\begin{equation*}
	J_n(\pi^{\pre,(n)};x)=\sup_{\pi\in\Pi^{(n)}} J_n(\pi;x)
\end{equation*}
for all $x>0$.
\label{def:precomm}
\end{defi}

The pre-committing agent solves the optimization problem only once at some initial reference time point, say time zero. Then the agent is able to adhere to this derived optimal strategy $(\pi^{\pre,(0)}_{t})_{t\geq 0}$ throughout the rest of the investment horizon. If this agent ever attempts to reevaluate this strategy in the future, say at $t=T_1$, they will find that $(\pi^{\pre,(0)}_{t})_{t\geq T_1}$ is not the strategy that gives them the maximized value of $J_1$. But for as long as they are able to commit, they will stick to $\pi^{\pre,(0)}$ perpetually even though it becomes sub-optimal when reviewed again in the future.

\begin{defi}[Naive agent]
For $\pi^{\pre,(i)}=(\pi^{\pre,(i)}_t)_{t\geq T_i}$ being the optimal pre-committing strategy with respect to time $T_i$ as defined in Definition \ref{def:precomm}, $(\pi^{\nav}_t)_{t\geq 0}\in\Pi$ is said to be an optimal naive strategy if
\begin{equation*}
	\pi^{\nav}_t=\pi^{\pre,(i)}_t,\ t\in[T_i,T_{i+1})
\end{equation*}
for all $i\in\bN_0$.
\label{def:naive}
\end{defi}
Unlike a pre-committing agent, a naive agent always reoptimizes at the beginning of each period. Their trading strategy to be taken in the period $[T_{i},T_{i+1})$ is guided by the first-period segment of the solution to the problem $\sup_{\pi\in\Pi^{(i)}} J_i(\pi;x)$, overriding any planned strategy derived in the past. An implicit assumption we are making in Definition \ref{def:naive} is that the naive agent is still able to commit to a derived strategy $\pi^{\pre,(i)}_t$ for the duration of one period $[T_i,T_{i+1})$, and reoptimization only takes place at the beginning of each period. It is a reasonable assumption as we expect individuals are indeed capable of self-control over a relatively short time horizon.

The last notion of optimality is an intrapersonal equilibrium strategy adopted by a  sophisticated agent. In this case, the agent is aware that they will suffer from time-inconsistency and is not able to commit to a strategy for more than one period. They view the future incarnations of themself (whom they cannot control) as opponents in a sequential game. They then act optimally in the current period in response to the strategies adopted by their future selves. Equilibrium is achieved when each incarnation of the agent at each time point has no incentive to deviate from their chosen action.

Before stating the formal definition, we need to introduce some further notations. For each $\pi=(\pi_t)_{t\in[0,\infty)}\in\Pi=\Pi^{(0)}$ and $n\in\bN$, $\pi^{(n)}:=(\pi_t)_{t\in[T_n,\infty)}\in\Pi^{(n)}$ denotes the restriction of $\pi$ on $[T_n,\infty)$ and $\pi^n:=(\pi_t)_{t\in[T_{n},T_{n+1})}\in\Pi^n$. Let $n\in\bN_0$. For each $\pi^{(n+1)}=(\pi^{(n+1)}_t)_{t\in[T_{n+1},\infty)}\in\Pi^{(n+1)}$ and $\pi^n=(\pi^n_t)_{t\in[T_n,T_{n+1})}\in\Pi^n$, define the concatenated portfolio strategy $\pi^n\oplus\pi^{(n+1)}\in\Pi^{(n)}$ by
\begin{equation*}
		(\pi^n\oplus\pi^{(n+1)})_t:=
		\begin{cases}
			\pi^n_t,\ &t\in[T_n,T_{n+1}),\\
			\pi^{(n+1)}_t,\ &t\in[T_{n+1},\infty).
		\end{cases}
\end{equation*}

\begin{defi}[Sophisticated agent]
$\hat{\pi}\in\Pi$ is said to be a subgame perfect equilibrium portfolio strategy of a sophisticated agent if 
\begin{equation*}
	J_n(\hat{\pi}^{(n)};x)=\sup_{\pi^n\in\Pi^n}J_n(\pi^n\oplus\hat{\pi}^{(n+1)};x)
\end{equation*}
for any $n\in\bN$ and any $x\geq0$. We call the corresponding wealth process $\hat{X}=X^{0,x,\hat{\pi}}$ (with an initial wealth $x\geq0$) the equilibrium wealth process, and the function $V^{\hat{\pi}}_n(x):=J_n(\hat{\pi}^{(n)};x)$ the equilibrium value function corresponding to $\hat{\pi}$.
\label{def:soph}
\end{defi}

\begin{rem}
	The problem in Definition \ref{def:soph} is a game of countably infinite number of players. Compared to the finite horizon time-inconsistent problem, the existence of subgame perfect equilibria is a highly nontrivial issue since we have no boundary condition given by the terminal reward. Concerning the uniqueness, we conjecture that we may generally have \emph{multiple equilibria}. See the discussion in Section 6 of \cite{BjMu14}. Moreover, unlike the time-consistent problem, the equilibrium value function may not be unique, and it depends on each subgame perfect equilibrium portfolio strategy.
\end{rem}

\section{Dynamic programming equation for pre-committing and sophisticated agent}
\label{sect:dpe}

In this section, we derive the dynamic programming principle associated with the optimization problem faced by both the pre-committing agent and the sophisticated agent.

\subsection{Pre-committing agent}

For each $n\in\bN_0$, $x\geq0$ and $\pi\in\Pi^{(n)}$, define
\begin{equation*}
	\tilde{J}_n(\pi;x):=\bE\Big[\sum^\infty_{i=1}e^{-\delta\tau i}U(X^{n,x,\pi}_{T_{n+i}}-\gamma X^{n,x,\pi}_{T_{n+i-1}})\Big],
\end{equation*}
which represents the reward functional faced by an exponential discounter (i.e. an agent with $\beta=1$). One can then express the reward functional of a possibly myopic agent with $\beta\in[0,1]$ in terms of that of an exponential discounter. In particular, observe that
\begin{align*}
	J_n(\pi;x)&=J_n(\pi^n\oplus\pi^{(n+1)};x)\\
	&=\bE\Big[e^{-\delta\tau}U(X^{n,x,\pi^n}_{T_{n+1}}-\gamma x)+\beta e^{-\delta\tau}\tilde{J}_{n+1}(\pi^{(n+1)};X^{n,x,\pi^n}_{T_{n+1}})\Big].
\end{align*}
Define the pre-committed value function by
\begin{align*}
	V_\pre(x)&:=\sup_{\pi\in\Pi^{(n)}}J_n(\pi^{(n)};x)=\sup_{\pi\in\Pi}J_0(\pi;x),
\end{align*}
which does not depend on $n$ due to the time-homogeneous structure of the problem. Similarly, let
\begin{align*}
	V_\expo(x):=\sup_{\pi\in\Pi^{(n)}}\tilde{J}_n(\pi;x)=\sup_{\pi\in\Pi}\tilde{J}_0(\pi;x),
\end{align*}	
which is the value function of an exponential discounter. By dynamic programming principle, we have 
\begin{equation}
	\begin{dcases}
		V_\pre(x)=\sup_{\pi\in\Pi^0}\bE\Big[e^{-\delta\tau}U(X^{0,x,\pi}_\tau-\gamma x)+\beta e^{-\delta\tau}V_\expo(X^{0,x,\pi}_\tau)\Big],\\
		V_\expo(x)=\sup_{\pi\in\Pi^0}\bE\Big[e^{-\delta\tau}U(X^{0,x,\pi}_\tau-\gamma x)+e^{-\delta\tau}V_\expo(X^{0,x,\pi}_\tau)\Big].
	\end{dcases}
	\label{eq:precom_dpp}
\end{equation}

By the scaling property of the utility function $U$ where $U(cz)=c^\alpha U(z)$ for any $c\geq 0$ and $z\in\mathbb{R}$, we have
\begin{equation*}
	V_\pre(x)=A_\pre x^\alpha,\ V_\expo(x)=A_\expo x^\alpha,
\end{equation*}
where the constants $A_\pre,A_\expo\in\bR$ are defined by
\begin{equation*}
	A_\pre:=\sup_{\pi\in\Pi}J_0(\pi;1),\ A_\expo:=\sup_{\pi\in\Pi}\tilde{J}_0(\pi;1).
\end{equation*}
Moreover, under standard martingale duality argument, optimization in \eqref{eq:precom_dpp} can be performed over the one-period stochastic gross return rates $Y:=X^{0,1,\pi}_{\tau}$ rather than the trading strategies $\pi$. Formally, define a set of random variables
\begin{equation}
	\cY:=\{Y|Y\in\mathcal{F}_\tau,\ Y\geq 0, \ \bE[Z_\tau Y]\leq 1\},
\label{eq:setY}
\end{equation}
where $Z=(Z_t)_{t\geq 0}$ is defined in \eqref{eq:bs_kernel}. Economically, $\cY$ contains all terminal wealth variables at time $\tau$ that can be attained by a non-negative self-financing portfolio in the Black-Scholes economy starting with one unit of initial capital. On substituting the ansatzes, dividing both sides of \eqref{eq:precom_dpp} by $x^\alpha$ and taking $Y=X_\tau^{0,x,\pi}\in\cY$ as the decision variable, \eqref{eq:precom_dpp} becomes
\begin{equation}
	\begin{dcases}
		A_\pre=\sup_{Y\in\cY}\bE\Big[e^{-\delta\tau}U(Y-\gamma)+\beta e^{-\delta\tau}A_\expo Y^\alpha\Big],\\
		A_\expo=\sup_{Y\in\cY}\bE\Big[e^{-\delta\tau}U(Y-\gamma)+e^{-\delta\tau}A_\expo Y^\alpha\Big].
	\end{dcases}
	\label{eq:precomm_sys}
\end{equation}

Our goal is to solve for $(A_\pre,A_\expo)$ from the system \eqref{eq:precomm_sys}, and to identify $Y_\pre\in\cY$ and $Y_\expo\in\cY$ satisfying
\begin{equation*}
	\begin{dcases}
		A_\pre=\bE\Big[e^{-\delta\tau}U(Y_\pre-\gamma)+\beta e^{-\delta\tau}A_\expo Y^\alpha_\pre\Big],\\
		A_\expo=\bE\Big[e^{-\delta\tau}U(Y_\expo-\gamma)+e^{-\delta\tau}A_\expo Y_\expo^\alpha\Big].
	\end{dcases}
\end{equation*}
Then $Y_\pre$ will represent the optimal gross return variable for the first period, and $Y_\expo$ will be the optimal gross return variable for all the subsequent periods. In particular, for $n\in\bN_0$ and $x\geq0$, the portfolio $\pi\in\Pi^{(n)}$ satisfying
\begin{equation*}
	X^{n,x,\pi}_{T_n}=x,\ X^{n,x,\pi}_{T_{n+1}}=x Y_{\pre,n+1},\ X^{n,x,\pi}_{T_{n+i}}=X^{n,x,\pi}_{T_{n+i-1}}Y_{\expo,n+i},\ i\geq2,
\end{equation*}
is an optimal pre-committed portfolio for $J_n(\cdot,x)$. Here, for $Y_{\expo}\in\cY$, $Y_{\pre}\in\cY$ and each $k\in\bN$, $Y_{\expo,k}$ and $Y_{\pre,k}$ denote the $\cF^{T_{k-1}}_{T_k}$-measurable copy of $Y_{\expo}$ and $Y_{\pre}$ respectively. 

\subsection{Sophisticated agent}

We now consider the case with a sophisticated agent via deriving an extended Bellman equation similar to the one studied in \cite{BjMu14}. Inspired by the benchmark problem without present bias, we focus on searching for an equilibrium strategy $\hat{\pi}\in\Pi$ which is {\it periodic}, in the sense that $(\hat{\pi}_t)_{t\in[T_n,T_{n+1})}$ with $n\in\bN_0$ are identically distributed.

\begin{prop}
Define 
\begin{equation*}
	W^{\pi}(x):=\bE\Big[\sum^\infty_{i=1}e^{-\delta\tau i}U(X^{0,x,\pi}_{T_i}-\gamma X^{0,x,\pi}_{T_{i-1}})\Big],
\end{equation*}
which represents the reward functional of an exponential discounter under a given strategy $\pi$. A periodic portfolio $\hat{\pi}\in\Pi$ is a subgame perfect equilibrium strategy if and only if the following holds:
	\begin{equation}
		\begin{dcases}
			V^{\hat{\pi}}(x)=\bE\Big[e^{-\delta\tau}U(X^{0,x,\hat{\pi}}_\tau-\gamma x)+\beta e^{-\delta\tau}W^{\hat{\pi}}(X^{0,x,\hat{\pi}}_\tau)\Big]\\
			\hspace{1.05cm}=\sup_{\pi^0\in\Pi^0}\bE\Big[e^{-\delta\tau}U(X^{0,x,\pi^0}_\tau-\gamma x)+\beta e^{-\delta\tau}W^{\hat{\pi}}(X^{0,x,\pi^0}_\tau)\Big],\\
			W^{\hat{\pi}}(x)=V^{\hat{\pi}}(x)+(1-\beta)e^{-\delta\tau}\bE[W^{\hat{\pi}}(X^{0,x,\hat{\pi}}_\tau)],
		\end{dcases}
	\label{eq_exHJB}
	\end{equation}
	where $V^{\hat{\pi}}(x):=J_0(\hat{\pi};x)$. 
\end{prop}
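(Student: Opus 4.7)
The idea is to combine a one-period decomposition of the infinite-horizon reward with the shift-invariance induced by the periodicity of $\hat{\pi}$.

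\emph{Setup.} For any periodic $\hat{\pi}\in\Pi$, the time-homogeneity of the Black-Scholes economy and the identical distribution of $(\hat{\pi}_t)_{t\in[T_i,T_{i+1})}$ across $i\in\bN_0$ yield the identity $\tilde{J}_{n+1}(\hat{\pi}^{(n+1)};y)=W^{\hat{\pi}}(y)$ for every $n\in\bN_0$ and every $y\geq0$. Splitting $J_n(\pi^n\oplus\hat{\pi}^{(n+1)};x)$ into its first-period utility and its exponentially discounted continuation reward then gives
\begin{equation*}
J_n(\pi^n\oplus\hat{\pi}^{(n+1)};x)=\bE\bigl[e^{-\delta\tau}U(X^{n,x,\pi^n}_{T_{n+1}}-\gamma x)+\beta e^{-\delta\tau}W^{\hat{\pi}}(X^{n,x,\pi^n}_{T_{n+1}})\bigr]
\end{equation*}
for every $\pi^n\in\Pi^n$. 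By the same shift argument the right-hand side depends on $\pi^n$ only through the law of its time-shifted copy $\tilde{\pi}^0\in\Pi^0$, so the equilibrium condition at any $n\in\bN$ reduces to the corresponding condition at $n=0$.

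\emph{Necessity.} Assume $\hat{\pi}$ is a subgame perfect equilibrium. The first equation of \eqref{eq_exHJB} is the decomposition above with $n=0$ and $\pi^n=\hat{\pi}^0$. Taking the supremum over $\pi^0\in\Pi^0$ of both sides and invoking the equilibrium condition (transferred to $n=0$ via the reduction just described) yields the second equation. The third equation follows by applying the analogous one-period splitting with $\beta=1$ directly to $W^{\hat{\pi}}$ itself, namely $W^{\hat{\pi}}(x)=\bE[e^{-\delta\tau}U(X^{0,x,\hat{\pi}}_\tau-\gamma x)+e^{-\delta\tau}W^{\hat{\pi}}(X^{0,x,\hat{\pi}}_\tau)]$, and then subtracting the first equation of \eqref{eq_exHJB}.

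\emph{Sufficiency and main obstacle.} Conversely, if \eqref{eq_exHJB} holds, then for every $n\in\bN$ and every $\pi^n\in\Pi^n$, the decomposition combined with shift-invariance gives $J_n(\pi^n\oplus\hat{\pi}^{(n+1)};x)\leq V^{\hat{\pi}}(x)$ via the supremum equation, with equality at $\pi^n=\hat{\pi}^n$ by the first equation; this is exactly the equilibrium condition. The only delicate step is a rigorous justification of $\tilde{J}_{n+1}(\hat{\pi}^{(n+1)};y)=W^{\hat{\pi}}(y)$ together with the associated bijection between $\Pi^n$ and $\Pi^0$: one needs to shift the filtration $\bF^{T_n}\to\bF^0$ along the independent increments $(B_t-B_{T_n})_{t\geq T_n}$ and verify that the admissibility constraints and the periodicity/identical-distribution of the per-period controls are preserved under this shift. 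Once this is granted, the remaining arguments are routine manipulation of conditional expectations.
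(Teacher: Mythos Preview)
Your proposal is correct and follows essentially the same approach as the paper: both arguments hinge on the one-period decomposition $J_n(\pi^n\oplus\hat{\pi}^{(n+1)};x)=\bE[e^{-\delta\tau}U(X^{n,x,\pi^n}_{T_{n+1}}-\gamma x)+\beta e^{-\delta\tau}W^{\hat{\pi}}(X^{n,x,\pi^n}_{T_{n+1}})]$, obtained via periodicity and independence of the post-$T_{n+1}$ dynamics from $\cF_{T_{n+1}}$, together with an analogous splitting of $W^{\hat{\pi}}$. The only cosmetic difference is that the paper derives the third equation by writing $W^{\hat{\pi}}(x)=V^{\hat{\pi}}(x)+(1-\beta)\bE[\sum_{i\geq 2}\cdots]$ and then conditioning, whereas you first obtain the $\beta=1$ recursion for $W^{\hat{\pi}}$ and subtract; these are equivalent.
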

\label{prop:sophis_system}
\begin{proof}

Thanks to the periodicity of $\hat{\pi}$, $V^{\hat{\pi}}(x):=J_0(\hat{\pi};x)=J_n(\hat{\pi};x)$ for any $n$. Moreover, for each $n\in\bN_0$ and $x>0$, $(X^{n,x,\hat{\pi}}_{T_{n+i}})_{i\in\bN_0}$ is independent of $\cF_{T_n}$, and the joint distribution does not depend on $n$. Therefore, for each $n\in\bN_0$, $x>0$ and $\pi^n\in\Pi^n$, we have
\begin{align*}
	&J_n(\pi^n\oplus\hat{\pi}^{(n+1)};x)\\
	&=\bE\Big[e^{-\delta(T_{n+1}-T_n)}U(X^{n,x,\pi^n\oplus\hat{\pi}^{(n+1)}}_{T_{n+1}}-\gamma x)+\sum^\infty_{i=n+2}\beta e^{-\delta(T_i-T_n)}U(X^{n,x,\pi^n\oplus\hat{\pi}^{(n+1)}}_{T_i}-\gamma X^{n,x,\pi^n\oplus\hat{\pi}^{(n+1)}}_{T_{i-1}})\Big]\\
	&=\bE\Big[e^{-\delta(T_{n+1}-T_n)}U(X^{n,x,\pi^n}_{T_{n+1}}-\gamma x)+\sum^\infty_{i=n+2}\beta e^{-\delta(T_i-T_n)}U(X^{n+1,X^{n,x,\pi^n}_{T_{n+1}},\hat{\pi}}_{T_i}-\gamma X^{n+1,X^{n,x,\pi^n}_{T_{n+1}},\hat{\pi}}_{T_{i-1}})\Big]\\
	&=\bE\Big[e^{-\delta\tau}U(X^{n,x,\pi^n}_{T_{n+1}}-\gamma x)+\beta e^{-\delta\tau}\bE\Big[\sum^\infty_{i=1}e^{-\delta\tau i}U(X^{n+1,X^{n,x,\pi^n}_{T_{n+1}},\hat{\pi}}_{T_{n+i+1}}-\gamma X^{n+1,X^{n,x,\pi^n}_{T_{n+1}},\hat{\pi}}_{T_{n+i}})|\cF_{T_{n+1}}\Big]\Big]\\
	&=\bE\Big[e^{-\delta\tau}U(X^{n,x,\pi^n}_{T_{n+1}}-\gamma x)+\beta e^{-\delta\tau}W^{\hat{\pi}}(X^{n,x,\pi^n}_{T_{n+1}})\Big].
\end{align*}
Then, by Definition \ref{def:soph}, $\hat{\pi}$ is a subgame perfect equilibrium if and only if
\begin{align*}
	V^{\hat{\pi}}(x)&=\bE\Big[e^{-\delta\tau}U(X^{n,x,\hat{\pi}}_{T_{n+1}}-\gamma x)+\beta e^{-\delta\tau}W^{\hat{\pi}}(X^{n,x,\hat{\pi}}_{T_{n+1}})\Big]\\
	&=\sup_{\pi^n\in\Pi^n}\bE\Big[e^{-\delta\tau}U(X^{n,x,\pi^n}_{T_{n+1}}-\gamma x)+\beta e^{-\delta\tau}W^{\hat{\pi}}(X^{n,x,\pi^n}_{T_{n+1}})\Big]\\
	&=\sup_{\pi^0\in\Pi^0}\bE\Big[e^{-\delta\tau}U(X^{0,x,\pi^0}_\tau-\gamma x)+\beta e^{-\delta\tau}W^{\hat{\pi}}(X^{0,x,\pi^0}_\tau)\Big].
\end{align*}
On the other hand,
\begin{align*}
	W^{\hat{\pi}}(x)&=V^{\hat{\pi}}(x)+(1-\beta)\bE\Big[\sum^\infty_{i=2}e^{-\delta\tau i}U(X^{0,x,\hat{\pi}}_{T_i}-\gamma X^{0,x,\hat{\pi}}_{T_{i-1}})\Big]\\
	&=V^{\hat{\pi}}(x)+(1-\beta)e^{-\delta\tau}\bE\Big[\bE\Big[\sum^\infty_{i=1}U(X^{1,X^{0,x,\hat{\pi}}_\tau,\hat{\pi}}_{T_{i+1}}-\gamma X^{1,X^{0,x,\hat{\pi}}_\tau,\hat{\pi}}_{T_i})|\cF_\tau\Big]\Big]\\
	&=V^{\hat{\pi}}(x)+(1-\beta)e^{-\delta\tau}\bE[W^{\hat{\pi}}(X^{0,x,\hat{\pi}}_\tau)].
\end{align*}

Thus, a periodic portfolio $\hat{\pi}$ is a subgame perfect equilibrium strategy if and only if
\begin{equation*}
	\begin{dcases}
		V^{\hat{\pi}}(x)=\bE\Big[e^{-\delta\tau}U(X^{0,x,\hat{\pi}}_\tau-\gamma x)+\beta e^{-\delta\tau}W^{\hat{\pi}}(X^{0,x,\hat{\pi}}_\tau)\Big]\\
		\hspace{1.05cm}=\sup_{\pi^0\in\Pi^0}\bE\Big[e^{-\delta\tau}U(X^{0,x,\pi^0}_\tau-\gamma x)+\beta e^{-\delta\tau}W^{\hat{\pi}}(X^{0,x,\pi^0}_\tau)\Big],\\
		W^{\hat{\pi}}(x)=V^{\hat{\pi}}(x)+(1-\beta)e^{-\delta\tau}\bE[W^{\hat{\pi}}(X^{0,x,\hat{\pi}}_\tau)].
	\end{dcases}
\end{equation*}
This system is an extended HJB equation in the spirit of \cite{BjMu14}.	
\end{proof}

To proceed further, we again exploit the scaling property of the utility function $U(\cdot)$. $V^{\hat{\pi}}$ and $W^{\hat{\pi}}$ are in form of 
\begin{equation*}
	V^{\hat{\pi}}(x)=\hat{A} x^\alpha,\ W^{\hat{\pi}}(x)=\hat{B} x^\alpha,
\end{equation*}
for some constants $\hat{A},\hat{B}$ to be determined which depend on $\hat{\pi}$. Using these ansatzes followed by division of $x^\alpha$ on both sides of \eqref{eq_exHJB}, we obtain
\begin{equation}
	\begin{dcases}
		\hat{A}=\bE\Big[e^{-\delta\tau}U(X^{0,1,\hat{\pi}}_\tau-\gamma )+\beta e^{-\delta\tau}\hat{B}(X^{0,1,\hat{\pi}}_\tau)^\alpha\Big]\\
		\hspace{1.05cm}=\sup_{\pi^0\in\Pi^0}\bE\Big[e^{-\delta\tau}U(X^{0,1,\pi^0}_\tau-\gamma )+\beta e^{-\delta\tau}\hat{B}(X^{0,1,\pi^0}_\tau)^\alpha\Big],\\
		\hat{B}=\hat{A}+(1-\beta)e^{-\delta\tau}\hat{B}\bE[(X^{0,1,\hat{\pi}}_\tau)^\alpha].
	\end{dcases}
\end{equation}
As in the analysis of the pre-committing problem, one can replace the decision variable of $\pi^0\in\Pi^0$ by $Y:=X^{0,1,\pi^0}\in\cY$. Then if we write $\hat{Y}:=X^{0,1,\hat{\pi}}_\tau$, the extended Bellman system can be expressed as
\begin{equation*}
	\begin{dcases}
		\hat{A}=\bE\Big[e^{-\delta\tau}U(\hat{Y}-\gamma)+\beta e^{-\delta\tau}\hat{B}\hat{Y}^\alpha\Big]=\sup_{Y\in\cY}\bE\Big[e^{-\delta\tau}U(Y-\gamma)+\beta e^{-\delta\tau}\hat{B}Y^\alpha\Big],\\
		\hat{B}=\hat{A}+(1-\beta)e^{-\delta\tau}\hat{B}\bE[\hat{Y}^\alpha].
	\end{dcases}
\end{equation*}
Observe that, by the standing assumption \eqref{eq:assump}, $\sup_{Y\in\cY}\bE[Y^\alpha]\leq e^{h\tau}$, and hence $(1-\beta)e^{-\delta\tau}\bE[Y^\alpha]\in[0,1)$ for any $Y\in\cY$. We can then express $\hat{B}$ by $(\hat{A},\hat{Y})$ as
\begin{equation*}
	\hat{B}=\frac{\hat{A}}{1-(1-\beta)e^{-\delta\tau}\bE[\hat{Y}^\alpha]}.
\end{equation*}
After eliminating $\hat{B}$ from the system, we get
\begin{align}
	\hat{A}&=\bE\Big[e^{-\delta\tau}U(\hat{Y}-\gamma)+\frac{\beta e^{-\delta\tau}}{1-(1-\beta)e^{-\delta\tau}\bE[\hat{Y}^\alpha]}\hat{A}\hat{Y}^\alpha\Big]
	=\sup_{Y\in\cY}\bE\Big[e^{-\delta\tau}U(Y-\gamma)+\frac{\beta e^{-\delta\tau}}{1-(1-\beta)e^{-\delta\tau}\bE[\hat{Y}^\alpha]}\hat{A}Y^\alpha\Big].
	\label{eq:sophi_sys}
\end{align}

Our main goal is to find $\hat{A}\in\bR$ and $\hat{Y}\in\cY$ solving system \eqref{eq:sophi_sys}. If such $(\hat{A},\hat{Y})$ exists, then $\hat{Y}$ is an {\it equilibrium gross return} variable. By the standard replication arguments, there exists $\hat{\pi}\in\Pi$ such that
\begin{equation*}
	X^{0,x,\hat{\pi}}_{T_n}=X^{0,x,\hat{\pi}}_{T_{n-1}}\hat{Y}_n,\ n\in\bN,\ X^{0,x,\hat{\pi}}_0=x,
\end{equation*}
where $\hat{Y}_n$ is an $\cF^{T_{n-1}}_{T_n}$-measurable independent copy of $\hat{Y}$. This strategy $\hat{\pi}$ is then a periodic equilibrium strategy solving the problem faced by the sophisticated agent. Moreover, $\hat{V}(x):=\hat{A} x^{\alpha}$ is the corresponding equilibrium value function under such $\hat{\pi}$.

For $\beta\in(0,1)$, it is challenging to solve system \eqref{eq:sophi_sys} because the optimization problem
\begin{align*}
	\sup_{Y\in\cY}\bE\Big[e^{-\delta\tau}U(Y-\gamma)+\frac{\beta e^{-\delta\tau}}{1-(1-\beta)e^{-\delta\tau}\bE[\hat{Y}^\alpha]}\hat{A}Y^\alpha\Big]
\end{align*}
involves both an unknown constant $\hat{A}$ and an unknown random variable $\hat{Y}$ which have to be simultaneously determined as a part of the solution. Interestingly, \eqref{eq:sophi_sys}  is analogous to a static mean field game in the following sense: The current-self of the agent is responding to a countable but infinite number of players (the agent's infinite copies of their future-selves in all subsequent periods), whose collective strategy induces a value and a probability distribution that are eventually fed back to the objective function of the current-self. 

\begin{rem}
	In the special case of $\beta=1$, \eqref{eq:sophi_sys} simplifies to
	\begin{align*}
		\hat{A}&=\bE\Big[e^{-\delta\tau}U(\hat{Y}-\gamma)+e^{-\delta \tau}\hat{A}\hat{Y}^\alpha\Big]
		=\sup_{Y\in\cY}\bE\Big[e^{-\delta\tau}U(Y-\gamma)+e^{-\delta \tau}\hat{A}Y^\alpha\Big].
	\end{align*}
	The objective function in the last term no longer depends on $\hat{Y}$, and the problem degenerates to the one faced by an exponential discounter as considered in \cite{TsZh21}. In another special case of $\beta=0$ where the agent is completely myopic, \eqref{eq:sophi_sys} becomes
	\begin{align*}
		\hat{A}&=\bE\Big[e^{-\delta\tau}U(\hat{Y}-\gamma)\Big]
		=\sup_{Y\in\cY}\bE\Big[e^{-\delta\tau}U(Y-\gamma)\Big].
	\end{align*}
	This only involves solving a standard one-period portfolio optimization problem with terminal utility function $U$.
\end{rem}

\section{A family of one-period optimization problems}
\label{sect:auxprob}

From the analysis in Section \ref{sect:dpe}, characterization of the optimal portfolio entails solving system \eqref{eq:precomm_sys} (for the precommitting agent) or \eqref{eq:sophi_sys} (for the sophisticated agent). In this section, we study a family of one-period portfolio optimization problems which will serve as an important building block for construction of solutions to the problems introduced in Section \ref{sect:dpe}.

Define
\begin{equation}
	F(y;\theta):=U(y-\gamma)+\theta y^\alpha
\label{eq:F}
\end{equation}
over $(y,\theta)\in[0,\infty)\times\bR$.
Now, consider a family of optimization problems parametrized by $\theta\in\bR$ as
\begin{equation}
	\Phi(\theta):=\sup_{Y\in\cY}\bE[F(Y;\theta)],
\label{eq:axu}
\end{equation}
where $\cY$ is defined in \eqref{eq:setY}. 

Problem \eqref{eq:axu} is studied in details in \cite{TsZh21}. In the rest of this section, we will briefly recap some of the key results but will also state some new ones along the way. The approach to solve problem \eqref{eq:axu} is to consider a concavified problem $\sup_{Y\in\cY}\bE[\bar{F}(Y;\theta)]$, where $y\mapsto \bar{F}(y;\theta)$ is the smallest concave majorant of $y\mapsto F(y;\theta)$.\footnote{For $f:[0,\infty)\to\bR$, $\bar{f}(y):=\inf_{(a,b)\in\cA_f}\{ay+b\}$ with $\cA_f:=\{(a,b)\in\bR^2\,;\,f(y)\leq ay+b\ \forall\,y\in[0,\infty)\}$ is the smallest concave majorant of $f$. The function $\bar{f}$ is real-valued if and only if the positive part of $f(y)$ has at most linear growth.} The subtlety here is that the value of $\theta$ heavily influences the monotonicity and concavity/convexity behaviors of $F(y;\theta)$ and in turn $\bar{F}(y;\theta)$. We recall the following results from \cite{TsZh21} where there are four canonical cases.

\begin{lemm}[Lemma EC.2 of \cite{TsZh21}]

Fix $\theta$ and let $\bar{F}(y;\theta)$ be the smallest concave majorant of $F(y;\theta)$. Define
\begin{equation*}
	\underline{\theta}:=-(1+k^{\frac{1}{1-\alpha}})^{1-\alpha}\leq -1.
\end{equation*}

\begin{enumerate}
	\item If $\theta\in(0,\infty)$, then $\bar{F}(y;\theta)=F(y;\theta)$ on $y\in [0,c_1\gamma]\cup[c_2\gamma,\infty)$, and $\bar{F}(y;\theta)$ is a straight line with slope $m_1>0$ joining $(c_1\gamma,F(c_1\gamma;\theta))$ and $(c_2\gamma,F(c_2\gamma;\theta))$ on $y\in [c_1\gamma,c_2\gamma]$. Here
	\begin{equation}
		m_1=m_1(\theta):=\gamma^{\alpha-1}\frac{(c_2-1)^{\alpha}+\theta c_2^{\alpha}+k(1-c_1)^{\alpha}-\theta c_1^{\alpha}}{c_2-c_1},
		\label{eq:m1}
	\end{equation}
	and $c_1=c_1(\theta),c_2=c_2(\theta)$ are two constants which are the unique solutions to the system of equations
	\begin{equation}
		\frac{(c_2-1)^{\alpha}+\theta c_2^{\alpha}+k(1-c_1)^{\alpha}-\theta c_1^{\alpha}}{c_2-c_1}=\alpha[(c_2-1)^{\alpha-1}+\theta   c_2^{\alpha-1}]=\alpha[k (1-c_1)^{\alpha-1}+\theta   c_1^{\alpha-1}]
		\label{eq:c1c2}
	\end{equation}
 on $(c_1,c_2)\in(0,\frac{1}{1+(k/\theta)^{1/(2-\alpha)}})\times (1,\infty)$.

	\item If $\theta\in[-1,0]\cap(\underline{\theta},0]$,  then $\bar{F}(y;\theta)=F(y;\theta)$ on $y\in [c_3\gamma,\infty)$, and $\bar{F}(y;\theta)$ is a straight line with slope $m_2>0$ joining $(0,F(0;\theta))$ and $(c_3\gamma,F(c_3\gamma;\theta))$ on $y\in[0,c_3\gamma]$. Here
	\begin{align}
		m_2=m_2(\theta):=\gamma^{\alpha-1}\frac{(c_3-1)^{\alpha}+\theta c_3^{\alpha}+k}{c_3},
		\label{eq:m2}
	\end{align}
	and $c_3=c_3(\theta)$ is the unique solution to the equation
	\begin{align}
		\frac{(c_3-1)^{\alpha}+\theta c_3^{\alpha}+k}{c_3}=\alpha[(c_3-1)^{\alpha-1}+\theta   c_3^{\alpha-1}]
		\label{eq:c3}
	\end{align}
on $c_3\in(1,\infty)$.
	
	\item If $\theta\in(\underline{\theta},-1)$, then $\bar{F}(y;\theta)=F(y;\theta)$ on $y\in [c_3\gamma,c_4\gamma]$, $\bar{F}(y;\theta)$ is a straight line with slope $m_2>0$ joining $(0,F(0;\theta))$ and $(c_3\gamma,F(c_3\gamma;\theta))$ on $[0,c_3\gamma]$ and $\bar{F}(y;\theta)=F(c_4\gamma;\theta)$ on $y\in[c_4\gamma, \infty)$. Here $m_2$ is defined in \eqref{eq:m2},
	\begin{equation}
		c_4=c_4(\theta):=\frac{1}{1-|\theta|^{-1/(1-\alpha)}},
	\end{equation}
and $c_3=c_3(\theta)$ is the unique solution to equation \eqref{eq:c3} on $c_3\in(1,c_4)$.
	
	\item If $\theta\in(-\infty, \underline{\theta}]$, then $\bar{F}(y;\theta)=F(0;\theta)=-k\gamma^\alpha$.
\end{enumerate}
\label{lem:shape_F}
\end{lemm}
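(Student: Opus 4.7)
My plan is to prove the lemma by a direct case-by-case construction of $\bar{F}(\cdot;\theta)$ followed by verification that the candidate is the smallest concave majorant. I would begin by computing, separately on each smooth piece, $F'(y;\theta)$, $F''(y;\theta)=\alpha(1-\alpha)[k(\gamma-y)^{\alpha-2}-\theta y^{\alpha-2}]$ on $(0,\gamma)$ and $F''(y;\theta)=-\alpha(1-\alpha)[(y-\gamma)^{\alpha-2}+\theta y^{\alpha-2}]$ on $(\gamma,\infty)$, together with the one-sided limits $F'(\gamma^{\pm};\theta)=+\infty$, the value of $F'(0^+;\theta)$ (which is $+\infty$ if $\theta>0$, finite positive if $\theta=0$, and $-\infty$ if $\theta<0$), and the sign of $F(\infty;\theta)$. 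These identify the shape of $F(\cdot;\theta)$ in the four regimes of the lemma: (i) for $\theta>0$, $F$ is concave on $[0,y^{\ast}]$ with $y^{\ast}=\gamma/(1+(k/\theta)^{1/(2-\alpha)})$, convex on $[y^{\ast},\gamma]$, and concave on $[\gamma,\infty)$; (ii) for $\theta\in[-1,0]$, $F$ is convex on $[0,\gamma]$ and concave on $[\gamma,\infty)$; (iii) for $\theta\in(\underline{\theta},-1)$, $F$ is convex on $[0,\gamma]$, concave on $[\gamma,c_4\gamma]$, and strictly decreases to $-\infty$ past its interior maximum at $c_4\gamma$; (iv) for $\theta\leq\underline{\theta}$, direct computation using $c_4=1/(1-|\theta|^{-1/(1-\alpha)})$ gives $F(c_4\gamma;\theta)=-\gamma^{\alpha}((-\theta)^{1/(1-\alpha)}-1)^{1-\alpha}\leq -k\gamma^{\alpha}=F(0;\theta)$, with equality precisely at $\theta=\underline{\theta}$.

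Given the shape analysis, I would propose $\bar{F}$ in each regime as the obvious piecewise construction and derive the systems \eqref{eq:c1c2} and \eqref{eq:c3}. In Case (i), $\bar{F}$ agrees with $F$ on the two disjoint concave pieces and is the common tangent line joining them on $[c_1\gamma,c_2\gamma]$; equating (left-tangent slope)$=$(chord slope)$=$(right-tangent slope) produces \eqref{eq:c1c2} and \eqref{eq:m1}. In Case (ii), the convex loss side must be bypassed by a chord pinned at $(0,-k\gamma^{\alpha})$ and tangent to $F$ at a point $c_3\gamma$ in the right concave region, yielding \eqref{eq:c3} and \eqref{eq:m2}. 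Case (iii) uses the same chord-plus-tangent construction on $[0,c_3\gamma]$, coincides with $F$ on $[c_3\gamma,c_4\gamma]$, and is capped at the constant $F(c_4\gamma;\theta)$ on $[c_4\gamma,\infty)$; the constraint $c_3<c_4$ guarantees the tangency occurs before the maximum. Case (iv) trivially gives $\bar{F}\equiv -k\gamma^{\alpha}$ since the constant line dominates $F$ by the shape analysis.

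The main technical content is the existence and uniqueness of the tangency parameters $c_1,c_2,c_3$. For Cases (ii) and (iii), I would reduce \eqref{eq:c3} to a single scalar equation and apply the intermediate-value theorem using the strict monotonicity of $F'$ on the concave region $(\gamma,\infty)$ or $(\gamma,c_4\gamma)$ together with the limits at $c_3\downarrow 1$ and $c_3\uparrow\infty$ (respectively $c_3\uparrow c_4$). Case (i) is harder because $c_1$ and $c_2$ are coupled through two equations. My plan is to fix $c_2\in(1,\infty)$, solve for $c_1=c_1(c_2)$ from the slope-matching condition (well-posed because $F'$ is strictly decreasing on the left concave arc $(0,y^{\ast})$ from $+\infty$ to $F'(y^{\ast};\theta)$, which in turn constrains the feasible range to $c_2\in(1,c_2^{\max})$), and then treat the remaining chord-slope equation as a one-dimensional root-finding problem in $c_2$; strict monotonicity of the residual combined with sign evaluation at the endpoints delivers the unique root. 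Finally, one verifies routinely that each candidate $\bar{F}$ is concave, lies above $F$ (tangents to a concave arc dominate the arc locally, and on each convex region the chord lies above the function by the sign of the second derivative), and is minimal (any smaller concave majorant must agree with $F$ at the tangency points and therefore, by concavity, must coincide with the common tangent line between them). I expect the coupled two-parameter existence argument in Case (i) to be the main obstacle.
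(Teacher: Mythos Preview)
The paper does not supply its own proof of this lemma; it is quoted verbatim as Lemma~EC.2 of \cite{TsZh21} and the reader is referred there (and to Figure~EC.1 in that reference) for the argument and graphical illustrations. So there is nothing in the paper to compare against beyond the statement itself.

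Your plan is a correct and standard way to establish the result from scratch. The shape analysis via $F''$ and the endpoint derivative behaviour correctly splits the four regimes; the tangent-chord constructions produce exactly the systems \eqref{eq:c1c2} and \eqref{eq:c3}; and the computation $F(c_4\gamma;\theta)=-\gamma^{\alpha}(|\theta|^{1/(1-\alpha)}-1)^{1-\alpha}$ together with the comparison against $-k\gamma^{\alpha}$ is precisely what singles out $\underline{\theta}$. One small imprecision: in Case~(iii) you describe $F$ as ``concave on $[\gamma,c_4\gamma]$ and strictly decreasing to $-\infty$ past its interior maximum at $c_4\gamma$''. In fact $F$ remains concave on a slightly larger interval $[\gamma,y_{\mathrm{infl}}]$ with $y_{\mathrm{infl}}=\gamma/(1-|\theta|^{-1/(2-\alpha)})>c_4\gamma$, and only turns convex afterwards; but since the global maximum on $(\gamma,\infty)$ sits at $c_4\gamma$ and $F$ is strictly decreasing beyond it, the constant cap $F(c_4\gamma;\theta)$ is still the correct concave majorant on $[c_4\gamma,\infty)$, so your conclusion is unaffected. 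The coupled existence/uniqueness for $(c_1,c_2)$ in Case~(i) is indeed the only genuinely delicate step, and your reduction (solve $c_1=c_1(c_2)$ from the derivative-matching condition using strict monotonicity of $F'$ on $(0,y^{\ast})$, then treat the chord equation as a scalar problem in $c_2$) is the right strategy; the monotonicity of the residual follows from the strict concavity of $F$ on each arc.
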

Readers are referred to Figure EC.1 in \cite{TsZh21} for the graphical illustrations of these four cases.

\begin{rem}
In the corner case of $k=0$, we have $\underline{\theta}=-1$ and therefore Case 3 in Lemma \ref{lem:shape_F} vanishes while the range of $\theta$ in Case 2 becomes $\theta\in(-1,0]$. 
\end{rem}

To describe the solution of problem \eqref{eq:axu}, we need to introduce several more notations. For  $\theta>0$ and $q\in[\tilde{n},\infty)$, define $I_1(q;\theta)\in(0,\tilde{c}\gamma]$ as the unique solution to the equation
\begin{equation}
	\alpha[k(\gamma-y)^{\alpha-1}+\theta y^{\alpha-1}]=q, \ y\in(0,\tilde{c}\gamma],
\end{equation} 
where $\tilde{c}:=\frac{1}{1+(k/\theta)^{1/(2-\alpha)}}$ and $\tilde{n}:=	\alpha[k(\gamma-\tilde{c}\gamma)^{\alpha-1}+\theta (\tilde{c}\gamma)^{\alpha-1}]$.

Similarly, for $q>0$ and $\theta\in\mathbb{R}$, let $I_2(q;\theta)\in(\gamma,\infty)$ be the unique solution to the equation
\begin{equation}
	\alpha[(y-\gamma)^{\alpha-1}+\theta y^{\alpha-1}]=q, \ y\in(\gamma,\infty).
\label{eq:I2_eq}
\end{equation} 

\begin{rem}
With simple calculus, it is easy to verify that
\begin{align*}
	(0,\tilde{c}\gamma)\ni y\mapsto \alpha[k(\gamma-y)^{\alpha-1}+\theta y^{\alpha-1}]\in(\tilde{n},\infty)
\end{align*}
is a strictly decreasing bijection when $\theta>0$, and
\begin{align}
	(\gamma,\infty)\ni y\mapsto	\alpha[(y-\gamma)^{\alpha-1}+\theta y^{\alpha-1}]\in(0,\infty)
\label{eq:I1_map}
\end{align}
is a strictly decreasing bijection when $\theta\in[-1,\infty)$. If $\theta<-1$, then the map in \eqref{eq:I1_map} is not monotonic in $y$ but it is not hard to show that $\alpha[(y-\gamma)^{\alpha-1}+\theta y^{\alpha-1}]=q$ still admits a unique solution on $y>\gamma$ for any $q\geq 0$ (including $q=0$). Hence, although unnecessary, we can extend the domain of $I_2(q;\theta)$ to $(q,\theta)\in\{[0,\infty)\times (-\infty,-1)\}\cup\{ (0,\infty)\times [-1,\infty)\}$ to cater the possibility that $I_2(0;\theta)$ is well-defined when $\theta<-1$. It is also not hard to verify that $I_i(q;\theta)$ is non-increasing in $q$ and non-decreasing in $\theta$ for $i\in\{1,2\}$. Finally, $I_1$ and $I_2$ are also jointly continuous in $(q,\theta)$. See Corollary \ref{cor:cont_fundamental_quantities} in Appendix \ref{app:H_cont}.
\end{rem}

\begin{lemm}[Proposition EC.1 of \cite{TsZh21}]
	
	Recall the definitions of $\{m_i\}_{i\in\{1,2\}}$, $\{c_i\}_{i\in\{1,2,3,4\}}$ and $\{I_i(q)\}_{i\in\{1,2\}}$ (where we have suppressed the dependence of $I_1$, $I_2$, $c_i$ and $m_i$ on $\theta$). $Y^*:=y(Z_\tau)$ is an optimizer to problem \eqref{eq:axu}, where the function $y(\cdot)=y(\cdot;\theta)$ is defined as follows depending on the value of $\theta$:
	
	\begin{enumerate}	
	
	\item If $\theta\in(0,\infty)$, then
	\begin{equation}
		y(z)=y_{\lambda^*}(z)=I_1(\lambda^* z)\I_{\{\lambda^*z>m_1\}}+I_2(\lambda^* z)\I_{\{\lambda^*z\leq m_1\}}
	\label{eq:maximizer_case1}
	\end{equation}
	and we have $\bP(Y^*\in (0,c_1\gamma)\cup(c_2\gamma,\infty))=1$.
	
	\item If $\theta \in[-1,0]\cap (\underline{\theta},0]$, or $\theta \in(\underline{\theta}, -1)$ and $|\theta|^{-\frac{1}{1-\alpha}}> 1-\gamma e^{-r\tau}$, then
	\begin{equation*}
		y(z)=y_{\lambda^*}(z):=I_2(\lambda^* z)\I_{\{\lambda^*z\leq m_2\}}
	\end{equation*}
and we have
\begin{equation*}
	\begin{dcases}
		\bP(Y^*\in \{0\}\cup(c_3\gamma,\infty))=1,& \theta\in[-1,0];\\
		\bP(Y^*\in \{0\}\cup(c_3\gamma,c_4\gamma))=1,& \theta \in(\underline{\theta}, -1) \text{ and } |\theta|^{-\frac{1}{1-\alpha}}> 1-\gamma e^{-r\tau}.
	\end{dcases}
\end{equation*}
	
	\item If $\theta \in (-\infty, \underline{\theta})$, then $y(z)=0$ and $Y^*=0$.
	
	\item If
	\begin{align}
		\theta \in(\underline{\theta}, -1) \ \text{ and } \ |\theta|^{-\frac{1}{1-\alpha}}\leq  1-\gamma e^{-r\tau},
		\label{eq:theta_impossible}
	\end{align}
	then $y(z)=\frac{\gamma}{1-|\theta|^{-1/(1-\alpha)}}$ and $Y^*=\frac{\gamma}{1-|\theta|^{-1/(1-\alpha)}}$.

	
\end{enumerate}
In Case 1 and 2, $\lambda^*>0$ is a constant defined as the unique solution to the equation in $\lambda>0$ given by $\bE[Z_\tau y_{\lambda}(Z_\tau)]=1$.

\label{lem:auxsol}
\end{lemm}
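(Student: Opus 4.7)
The plan is to solve \eqref{eq:axu} by static Lagrangian duality applied to the concavified integrand $\bar F(\cdot;\theta)$ whose explicit structure is supplied by Lemma \ref{lem:shape_F}. The cornerstone is the weak-duality chain
\begin{equation*}
    \bE[F(Y;\theta)] \;\leq\; \bE[\bar F(Y;\theta)] \;\leq\; \bE\!\left[\sup_{y\geq 0}\bigl(\bar F(y;\theta) - \lambda Z_\tau y\bigr)\right] + \lambda,\qquad Y \in \cY,\ \lambda \geq 0,
\end{equation*}
obtained from $F\leq \bar F$ and the budget bound $\bE[Z_\tau Y]\leq 1$. The overall task is then, case by case in $\theta$, to exhibit $(\lambda^\ast,Y^\ast)$ such that (a) $Y^\ast$ attains the inner pointwise supremum $\bP$-a.s., (b) the budget binds, $\bE[Z_\tau Y^\ast]=1$, and (c) $Y^\ast$ takes values only where $\bar F(\cdot;\theta)=F(\cdot;\theta)$. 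These three items together would collapse the chain into equalities and identify $Y^\ast$ as the maximiser.

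For Case 1, Lemma \ref{lem:shape_F} shows that $\bar F(\cdot;\theta)$ is smooth and strictly concave on $(0,c_1\gamma)$ and on $(c_2\gamma,\infty)$, connected by a single linear bridge of slope $m_1$. Hence for almost every realisation of $Z_\tau$ (the event $\{\lambda^\ast Z_\tau = m_1\}$ being null by absolute continuity of the law of $Z_\tau$), the pointwise maximiser of $\bar F(y;\theta)-\lambda^\ast Z_\tau y$ in $y\geq 0$ lives on one of the two smooth branches, solves the corresponding first-order condition, and delivers $I_1(\lambda^\ast Z_\tau)$ when $\lambda^\ast Z_\tau>m_1$ and $I_2(\lambda^\ast Z_\tau)$ otherwise. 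This is exactly \eqref{eq:maximizer_case1}, and item (c) is automatic since the resulting $Y^\ast$ avoids $[c_1\gamma,c_2\gamma]$ a.s. Case 2 proceeds in an identical spirit with a single linear segment anchored at the origin with slope $m_2$, the pointwise maximiser being either $0$ (when $\lambda^\ast Z_\tau > m_2$) or $I_2(\lambda^\ast Z_\tau)$. The Lagrange multiplier $\lambda^\ast$ is chosen by solving $\Psi(\lambda^\ast) = 1$, where $\Psi(\lambda) := \bE[Z_\tau y_\lambda(Z_\tau)]$; continuity and strict monotonicity of $\Psi$ follow from the joint continuity of $I_1,I_2$ given by Corollary \ref{cor:cont_fundamental_quantities} together with dominated convergence, while the limits $\Psi(0+)=+\infty$ and $\Psi(+\infty)=0$ are immediate, so the intermediate value theorem supplies a unique root $\lambda^\ast > 0$.

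In Case 3 one simply observes that $\bar F(\cdot;\theta) \equiv F(0;\theta) = -k\gamma^\alpha$, hence $\bE[F(Y;\theta)]\leq -k\gamma^\alpha$ for every $Y \in \cY$ and equality holds at $Y^\ast = 0$. In Case 4 the envelope is non-decreasing with a plateau beginning at $c_4\gamma$; taking $\lambda^\ast=0$ in the weak-duality bound above shows that the deterministic choice $Y^\ast = c_4\gamma = \gamma/(1-|\theta|^{-1/(1-\alpha)})$ is optimal. It belongs to $\cY$ precisely when $c_4\gamma\,e^{-r\tau}\leq 1$, which is the standing hypothesis \eqref{eq:theta_impossible}, and it sits at the unique interior maximiser of $F(\cdot;\theta)$ on $(\gamma,\infty)$ so that $\bar F(Y^\ast;\theta) = F(Y^\ast;\theta)$, closing the argument.

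The main obstacle is ensuring item (c) — that the pointwise Lagrange maximiser lives on $\{\bar F(\cdot;\theta)=F(\cdot;\theta)\}$ almost surely — which relies on the absolute continuity of the law of $Z_\tau$ induced by the non-degeneracy hypothesis $\phi\neq 0$; without it, the maximiser could concentrate on a linear bridge where $\bar F$ strictly dominates $F$, breaking the gap between the concavified and original problems. A secondary delicacy is Case 4, which is degenerate in that the budget need not bind and the Lagrange multiplier vanishes, so the standard duality template has to be adjusted by handling the boundary $\lambda^\ast=0$ separately.
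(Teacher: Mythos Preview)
Your proposal is correct and follows essentially the same route as the paper: concavification via Lemma \ref{lem:shape_F}, the Lagrangian weak-duality chain, pointwise identification of the maximiser through the first-order conditions giving $I_1,I_2$, and the intermediate-value argument to pin down $\lambda^\ast$. The only differences are cosmetic: for Cases 3 and 4 the paper bypasses duality entirely and simply verifies that $y=0$ (resp.\ $y=c_4\gamma$) is the unique global maximiser of $F(\cdot;\theta)$ itself and is feasible, whereas you route these through the envelope with $\lambda^\ast=0$; and for the continuity of $\lambda\mapsto\bE[Z_\tau y_\lambda(Z_\tau)]$ the paper invokes only the elementary continuity of $I_1,I_2$ in $q$ rather than the joint continuity of Corollary \ref{cor:cont_fundamental_quantities}, which is stronger than needed here.
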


\begin{proof}
While the full proof can be found in \cite{TsZh21}, we will provide a brief sketch of the proof here as to introduce a few notations to be used in some of the subsequent results. We will suppress the argument $\theta$ in $F(y;\theta)$ and $\bar{F}(y;\theta)$ for brevity.

Suppose $\theta\in(0,\infty)$ is fixed and let $\bar{F}(y)$ be the smallest concave majorant of $F(y)$. Then
\begin{align*}
	\bar{F}(y)=
	\begin{cases}
		-k(\gamma-y)^{\alpha}+\theta y^{\alpha},& 0\leq y< c_1\gamma;\\
		-k\gamma^{\alpha}(1-c_1)^{\alpha}+\theta c_1^{\alpha}\gamma^{\alpha}+m_1(y-c_1\gamma ),& c_1\gamma\leq y\leq c_2\gamma;\\
		(y-\gamma)^{\alpha}+\theta  y^{\alpha},& y>c_2\gamma.
	\end{cases}
\end{align*}
The Legendre-Fenchel transformation of $\bar{F}(\cdot)$ is defined as $$J(q):=\sup_{y\geq 0}(\bar{F}(y)-qy),\qquad q>0,$$ and the corresponding maximizer is characterized by the set-valued function
\begin{align}
	f^*(q):=\argmax_{y\geq 0} (\bar{F}_{\xi}(y)-qy)=
	\begin{cases}
		\{I_2(q)\},& 0<q<m_1 ;\\
		[I_1(q),I_2(q)],& q=m_1;\\
		\{I_1(q)\},& q> m_1.
	\end{cases}
\label{eq:LF_maximizer}
\end{align}
For $\lambda>0$, define $y_{\lambda}(z)$ via
\begin{align*}
	y_{\lambda}(z)=I_1(\lambda z)\I_{\{\lambda z>m_1\}}+I_2(\lambda z)\I_{\{\lambda z\leq m_1\}}.
\end{align*}
Note that $y_{\lambda}(Z_\tau)=y_{\lambda}(Z_\tau(\omega))\in f^*(\lambda Z_{\tau}(\omega))$ for all $\omega\in\Omega$.

On the one hand, by monotone convergence theorem and the facts that $I_1\leq \tilde{c}\gamma<\gamma<I_2$ and $I_1,I_2$ are strictly deceasing in $q$ with $I_2(0+)=+\infty$ and $I_1(+\infty)=0$, we have $\zeta:\lambda\mapsto \mathbb{E}[Z_{\tau}y_{\lambda}(Z_\tau)]$ being strictly decreasing  with $\zeta(0+)=+\infty$ and $\zeta(+\infty)=0$. On the other hand, $Z_\tau$ being atomless together with continuity of $I_1,I_2$ in $q$ suggest $\zeta(\lambda)$ is continuous due to dominated convergence theorem.
Then there exists a unique $\lambda^*>0$ such that $\zeta(\lambda^*)=1$ and in turn $Y^*$ is admissible. For any  $Y\in\mathcal{Y}$, we have
\begin{align*}
	\mathbb{E}[F(Y)-\lambda^* Z_{\tau}Y]\leq \mathbb{E}[\bar{F}(Y)-\lambda^* (Z_{\tau}Y)] \leq \mathbb{E}[J(\lambda^* Z_{\tau})]=\mathbb{E}[\bar{F}(y_{\lambda^*}(Z_\tau))-\lambda^* Z_{\tau} y_{\lambda^*}(Z_\tau)]
\end{align*}
and thus
\begin{align*}
	\mathbb{E}[F(Y)]\leq \mathbb{E}[\bar{F}(y_{\lambda^*}(Z_\tau))-\lambda^* Z_{\tau} y_{\lambda^*}(Z_\tau)]+\lambda^* \mathbb{E}[Z_{\tau}Y]
	&\leq \mathbb{E}[\bar{F}(y_{\lambda^*}(Z_\tau))]-\lambda^*(\mathbb{E}[Z_{\tau} y_{\lambda^*}(Z_\tau)]-1)\\
	&=\mathbb{E}[\bar{F}(y_{\lambda^*}(Z_\tau))].
\end{align*}
Therefore $\sup_{Y\in\mathcal{Y}} \mathbb{E}[F(Y)]\leq \mathbb{E}[\bar{F}(y_{\lambda^*}(Z_\tau))]$. One can check that $I_1:(m_1,\infty)\to(0,c_1\gamma)$ and $I_2:(0,m_1)\to(c_2\gamma,\infty)$ are both bijections. Then since $Z_\tau$ is atomless with support on $[0,\infty)$, the support of $y_{\lambda^*}(Z_\tau)$ is $(0,c_1\gamma)\cup(c_2\gamma,\infty)\subseteq\{y\geq 0: F(y)=\bar{F}(y)\}$. Hence $\mathbb{E}[\bar{F}(y_{\lambda^*}(Z\tau))]=\mathbb{E}[F(y_{\lambda^*}(Z\tau))]$ and $Y^*$ must be an optimizer to problem \eqref{eq:axu}.

Case 2 can be handled similarly. The result of Case 3 follows trivially upon checking that the unique global maximum of $F(y)$ is attained at $y=0$, and $Y^*\equiv 0$ is clearly admissible. Likewise, in Case 4 the unique global maximum of $F(y)$ is attained at $y=\frac{\gamma}{1-|\theta|^{-1/(1-\alpha)}}$, and $Y^*=\frac{\gamma}{1-|\theta|^{-1/(1-\alpha)}}$ is admissible under the stated conditions on the parameters.
\end{proof}

For practical purpose, there is no need to consider Case 4 in Lemma \ref{lem:auxsol} because the choice of $\theta$ will eventually be endogenized when we study the original periodic portfolio selection problem, and the relevant value of $\theta$ will never have its range described by \eqref{eq:theta_impossible}. See Remark \ref{rem:impossible} as well. 

Careful readers may notice that we have avoided the corner case of $\theta=\underline{\theta}$ in Lemma \ref{lem:auxsol}. This corner case carries some important theoretical implications which we will separately discuss via the following proposition.

\begin{prop}
If $\theta\neq \underline{\theta}$, then the optimizer to problem \eqref{eq:axu} is unique up to a $\bP$-null set, i.e. $\bP(Y_1^*=Y_2^*)=1$ if $Y^*_i\in\argmax_{Y\in\cY} \bE[F(Y;\theta)]$ for $i\in\{1,2\}$. If $\theta=\underline{\theta}$ and $k>0$, then $Y^*\in \cY$ is optimal to problem \eqref{eq:axu} if and only if $\bP(Y^*\in\{0,\gamma(1+k^{-\frac{1}{1-\alpha}})\})=1$.
\label{prop:uniqueness}
\end{prop}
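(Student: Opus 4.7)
The plan is to trace the chain of inequalities from the proof of Lemma \ref{lem:auxsol} and identify exactly when each can become an equality. For $\theta$ in Cases~1 or 2 of that lemma and for $Y\in\cY$,
\begin{equation*}
	\bE[F(Y;\theta)]\leq\bE[\bar{F}(Y;\theta)]-\lambda^{*}\bE[Z_{\tau}Y]+\lambda^{*}\leq\bE[\bar{F}(y_{\lambda^{*}}(Z_{\tau});\theta)]=\Phi(\theta),
\end{equation*}
so any optimizer $Y^{*}\in\cY$ must satisfy (i) $F(Y^{*};\theta)=\bar{F}(Y^{*};\theta)$ a.s., (ii) $\bE[Z_{\tau}Y^{*}]=1$, and (iii) $Y^{*}\in f^{*}(\lambda^{*}Z_{\tau})$ a.s., where $f^{*}(q):=\argmax_{y\geq0}(\bar{F}(y;\theta)-qy)$ is the Legendre maximizer set. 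The remaining Cases~3 and 4 of Lemma~\ref{lem:auxsol} can be handled by direct pointwise maximization. The key observation for uniqueness is that the multiplicity of $Y^{*}$ is dictated by the multi-valuedness of $f^{*}$, which can be controlled through the atomlessness of the law of $Z_{\tau}$.

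For $\theta\neq\underline{\theta}$, I would proceed by inspecting each case. In Cases~1 and 2 of Lemma~\ref{lem:auxsol}, the concave majorant $\bar{F}$ is linear on a single segment and strictly concave elsewhere, so $f^{*}(q)$ is a singleton except at one threshold ($m_{1}$ in Case~1, or $m_{2}$ in Case~2). Since $Z_{\tau}$ is the exponential of a Gaussian, its law is atomless, so $\bP(\lambda^{*}Z_{\tau}\in\{m_{1},m_{2}\})=0$ and condition~(iii) forces $Y^{*}=y_{\lambda^{*}}(Z_{\tau})$ a.s. Case~4 of Lemma~\ref{lem:auxsol} is immediate because $F(\cdot;\theta)$ attains its strict global maximum at the admissible constant $c_{4}\gamma$. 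Finally, for $\theta<\underline{\theta}$ (Case~3 of Lemma~\ref{lem:auxsol}), Lemma~\ref{lem:shape_F}(4) gives $\bar{F}\equiv-k\gamma^{\alpha}$, and a direct pointwise comparison---replacing $|\underline{\theta}|$ by the strictly larger $|\theta|$ in the boundary-case computation below---yields $F(y;\theta)<-k\gamma^{\alpha}$ for every $y>0$, so the contact set collapses to $\{0\}$ and $Y^{*}\equiv0$ is the unique optimizer.

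For the borderline case $\theta=\underline{\theta}$ with $k>0$, Lemma~\ref{lem:shape_F}(4) again yields $\bar{F}(\cdot;\underline{\theta})\equiv-k\gamma^{\alpha}$, and since $Y\equiv0\in\cY$ attains this constant, $\Phi(\underline{\theta})=-k\gamma^{\alpha}$. Consequently $Y^{*}\in\cY$ is optimal if and only if $F(Y^{*};\underline{\theta})=-k\gamma^{\alpha}$ a.s., which reduces the proposition to identifying the level set $\cL:=\{y\geq0:F(y;\underline{\theta})=-k\gamma^{\alpha}\}$. A direct substitution using $\underline{\theta}=-(1+k^{1/(1-\alpha)})^{1-\alpha}$ verifies $\{0,\gamma(1+k^{-1/(1-\alpha)})\}\subseteq\cL$. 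For the reverse inclusion I would split at the kink $y=\gamma$: on $(0,\gamma]$ the function $G(y):=F(y;\underline{\theta})+k\gamma^{\alpha}$ has strictly positive second derivative (both $k\alpha(1-\alpha)(\gamma-y)^{\alpha-2}$ and $|\underline{\theta}|\alpha(1-\alpha)y^{\alpha-2}$ contribute positively), with $G(0)=0$, $G(\gamma)<0$ and $G'(0^{+})=-\infty$, ruling out further zeros; on $(\gamma,\infty)$ the first-order condition $G'(y)=0$ reduces to $(1-\gamma/y)^{\alpha-1}=|\underline{\theta}|$, yielding a unique critical point coinciding with $c_{4}(\underline{\theta})\gamma=\gamma(1+k^{-1/(1-\alpha)})$, which is necessarily the maximum of $G$ on $(\gamma,\infty)$ because $G(\gamma)<0$ and $G(y)\to-\infty$ as $y\to\infty$, with maximum value zero by the earlier substitution.

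The main obstacle I anticipate is the calculus step on $(\gamma,\infty)$, where $G$ is not convex and one has to use the monotonicity of $G'$ to isolate the single stationary point and verify it coincides with $c_{4}(\underline{\theta})\gamma$. Everything else reduces to a routine application of the Legendre duality framework already set up in the proof of Lemma~\ref{lem:auxsol}, combined with the atomlessness of the pricing kernel.
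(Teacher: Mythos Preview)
Your proposal is correct and follows essentially the same approach as the paper: both arguments rely on the Legendre duality framework from Lemma~\ref{lem:auxsol}, use the atomlessness of $Z_\tau$ to rule out the single threshold where $f^*$ is multi-valued, and handle $\theta=\underline{\theta}$ by directly identifying the contact set $\{y:F(y;\underline{\theta})=-k\gamma^\alpha\}$. The only cosmetic difference is that you extract the necessary conditions (i)--(iii) directly from the chain of inequalities, whereas the paper packages the same reasoning as a contradiction argument on the event $E=\{\tilde{Y}\notin f^*(\lambda^*Z_\tau)\}$; your detailed calculus verification of the level set at $\theta=\underline{\theta}$ simply fills in what the paper dismisses as ``straightforward to check.''
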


\begin{proof}
As before, we will suppress the argument $\theta$ in $F(y;\theta)$ and $\bar{F}(y;\theta)$. Suppose we are in Case 1 of Lemma \ref{lem:auxsol} such that $\theta\in(0,\infty)$. We already know from Lemma \ref{lem:auxsol} that $Y^*=y(Z_\tau)$ is a maximizer to problem \eqref{eq:axu} where $y(\cdot)$ is defined in \eqref{eq:maximizer_case1}. To show that this maximizer is unique, let $\tilde{Y}\in\mathcal{Y}$ be another optimal random variable which attains the same value as $Y^*$. Let $$E:=\{\omega\in\Omega| \tilde{Y}(\omega)\notin f^*(\lambda^*Z_{\tau}(\omega))\}$$ where $f^*$ is defined in \eqref{eq:LF_maximizer}. Suppose $\mathbb{P}(E)>0$. Then by definition of $\tilde{Y}$,
		\begin{align*}
			\mathbb{E}[F(\tilde{Y})]&=\mathbb{E}[F(y_{\lambda^*}(Z_\tau))]=\mathbb{E}[\bar{F}(y_{\lambda^*}(Z_\tau))]
			=\mathbb{E}[\bar{F}(y_{\lambda^*}(Z_\tau))-\lambda^* Z_{\tau} y_{\lambda^*}(Z_\tau)]+\lambda^*
		\end{align*}
		and hence
		\begin{align*}
			\mathbb{E}[F(\tilde{Y})]-\lambda^*=\mathbb{E}[\bar{F}(y_{\lambda^*}(Z_\tau))-\lambda^* Z_{\tau} y_{\lambda^*}(Z_\tau)]=\mathbb{E}[J(\lambda^* Z_{\tau})].
		\end{align*}
		Then since $\mathbb{E}[Z_{\tau }\tilde{Y}]\leq 1$ and $\lambda^*>0$, we have
		\begin{align*}
			\mathbb{E}[F(\tilde{Y})-\lambda^* Z_{\tau} \tilde{Y}]\geq \mathbb{E}[J(\lambda^* Z_{\tau})].
		\end{align*}
		But we also have
		\begin{align*}
			\mathbb{E}[F(\tilde{Y})-\lambda^* Z_{\tau} \tilde{Y}]&\leq \mathbb{E}[\bar{F}(\tilde{Y})-\lambda^* Z_{\tau} \tilde{Y}]\\
			&\leq \int_{E} (\bar{F}(\tilde{Y}(\omega))-\lambda^* Z_{\tau}(\omega) \tilde{Y}(\omega))d\mathbb{P}+\int_{E^{c}}(\bar{F}(\tilde{Y}(\omega))-\lambda^* Z_{\tau}(\omega) \tilde{Y}(\omega))d\mathbb{P}\\
			&<\int_{E} J(\lambda^* Z_{\tau}(\omega))d\mathbb{P}+\int_{E^c} J(\lambda^* Z_{\tau}(\omega))d\mathbb{P}=\mathbb{E}[J(\lambda^* Z_{\tau})]
		\end{align*}
		leading to a contradiction. Here, we used the fact that $J(q)\geq \bar{F}(y)-qy$ for all $y\geq 0$, and strict inequality holds whenever $y\notin f^*(q)$. Hence we must conclude $\mathbb{P}(E)=0$. Finally, since $Z_{\tau}$ is atomless such that $\mathbb{P}(Z_{\tau}=m_1/\lambda^*)=0$, we deduce
		\begin{align*}
			1=\mathbb{P}(E^c)&=\mathbb{P}(\{\omega\in\Omega| \tilde{Y}(\omega)\in f^*(\lambda^*Z_{\tau}(\omega))\})\\
			&=\mathbb{P}(\{\omega\in \Omega| \tilde{Y}(\omega)
			=I_1(\lambda^* Z_{\tau}(\omega)){\mathbbm 1}_{(\lambda^* Z_{\tau}(\omega)>m_1)} +I_2(\lambda^* Z_{\tau}(\omega)){\mathbbm 1}_{(\lambda^* Z_{\tau}(\omega)<m_1)}\})=\bP(\tilde{Y}=Y^*).
		\end{align*}
		Thus we have $\tilde{Y}=Y^*$ almost surely.
		
	Case 2 can be analyzed similarly. Case 3 and 4 are also easy to be handled where $F(y)$ attains a unique global maximizer at either $y=0$ or $y=\frac{\gamma}{1-|\theta|^{-1/(1-\alpha)}}$, and any feasible optimizer should put a probability mass of unity at the unique global maximizer. 
	
	When $\theta=\underline{\theta}$ and $k>0$, it is straightforward to check that the global maximum of $F(y)$ is attained at both $y=0$ and $y=\gamma(1+k^{-\frac{1}{1-\alpha}})$, i.e. $F(0)=F(\gamma(1+k^{-\frac{1}{1-\alpha}}))=-k\gamma^{\alpha}>F(y)$ on $y\notin\{0,\gamma(1+k^{-\frac{1}{1-\alpha}})\}$. Then clearly 
	$$\sup_{Y\in\mathcal{Y}}\mathbb{E}[F(Y)]\leq -k\gamma^{\alpha}=\bE[F(Y^*)]$$ for any $Y^*\in\cY$ such that $\bP(Y^*\in\{0,\gamma(1+k^{-\frac{1}{1-\alpha}})\})=1$ and hence such $Y^*$ must be an optimizer. Conversely, suppose $Y^*\in\argmax_{Y\in\cY}\bE[F(Y)]$ but $p:=\mathbb{P}\left(Y^*\in\left\{0,\gamma(1+k^{-\frac{1}{1-\alpha}})\right\}\right)<1$. Then  
		\begin{align*}
			\mathbb{E}[F(Y^*)]&=p (-k\gamma^{\alpha})+(1-p)\mathbb{E}\left[F(Y^*) \Bigl | Y^*\notin \left\{0,\gamma(1+k^{-\frac{1}{1-\alpha}})\right\}\right]
			<p(-k\gamma^{\alpha})+(1-p)(-k\gamma^{\alpha})=-k\gamma^{\alpha}
		\end{align*}
		leading to a contradiction and hence we must have $\mathbb{P}\left(Y^*\in\left\{0,\gamma(1+k^{-\frac{1}{1-\alpha}})\right\}\right)=1$.
\end{proof}

The (lack of) uniqueness of the optimal portfolio is not addressed in \cite{TsZh21}. From an optimization point of view, uniqueness is perhaps not a very economically important issue for as long as one can characterize at least one strategy that can achieve the optimal value. In the corner case of $\theta=\underline{\theta}$, \cite{TsZh21} reports $Y^*=0$ as an optimal solution. But more generally, Proposition \ref{prop:uniqueness} suggests that any feasible digital option with payout $\gamma(1+k^{-\frac{1}{1-\alpha}})$ can also be considered as an optimizer. Unlike \cite{TsZh21}, careful analysis of this corner case $\theta=\underline{\theta}$ is actually required in our current problem since it will influence the characterization of the equilibrium strategy pursued by a sophisticated agent.

\begin{rem}
It is not necessary to analyze the case of $k=0$ and $\theta=\underline{\theta}=-1$ in Proposition \ref{prop:uniqueness} because the optimally endogenized value of $\theta$ is always strictly positive when $k=0$. See Remark \ref{rem:impossible}. 
\end{rem}

The map $\theta\mapsto e^{-\delta\tau}\Phi(\theta)$ defined in \eqref{eq:axu} is analogous to a discrete-time ``Bellman operator''. \cite{TsZh21} show that this map admits a unique fixed point which is then used to construct a solution to the portfolio optimization problem (under $\beta=1$). In what follows, we prove a more general result.

\begin{prop}
	For any $\kappa\in[0,1]$, the map $\theta\mapsto e^{-\delta\tau}\Phi(\kappa\theta)$ is a contraction on $(\bR,||\cdot||)$, in particular there exists a unique $\theta^*(\kappa)\in\bR$ such that $\theta^*(\kappa)=e^{-\delta\tau}\Phi(\kappa\theta^*(\kappa))$. The map $\kappa\mapsto\theta^*(\kappa)$ is continuous, and it holds that, for any $\kappa_1,\kappa_2\in[0,1]$,
	\begin{equation}
		\theta^*(\kappa_2)\leq\frac{1-\kappa_1e^{-\delta\tau}\bE[(Y^*(\kappa_2))^\alpha]}{1-\kappa_2e^{-\delta\tau}\bE[(Y^*(\kappa_2))^\alpha]}\theta^*(\kappa_1),
	\label{eq:theta_kappa_ineq}
	\end{equation}
	where $Y^*(\kappa)\in\cY$ is a maximizer for the problem $e^{-\delta\tau}\Phi(\kappa\theta^*(\kappa))$.
	Furthermore:
	\begin{enumerate}
		\item
		If $\theta^*(0)=0$, then $\theta^*(\kappa)=\theta^*(0)=0$ for any $\kappa\in[0,1]$;
		\item
		If $\theta^*(0)>0$, then $\kappa\mapsto\theta^*(\kappa)$ is strictly increasing. Equality holds in \eqref{eq:theta_kappa_ineq} if and only if $\kappa_1=\kappa_2$;
		\item
		If $\theta^*(0)<0$ and $\theta^*(1)\geq\underline{\theta}$, then $\kappa\mapsto\theta^*(\kappa)$ is strictly decreasing. Equality holds in \eqref{eq:theta_kappa_ineq} if and only if $\kappa_1=\kappa_2$;
		\item
		If $\theta^*(0)<0$ and $\theta^*(1)<\underline{\theta}$, then $\kappa\mapsto\theta^*(\kappa)$ is strictly decreasing on $[0,\underline{\kappa}]$, and $\theta^*(\kappa)=\theta^*(1)$ for any $\kappa\in[\underline{\kappa},1]$, where
		\begin{equation*}
			\underline{\kappa}:=\inf\{\kappa\in[0,1]\,|\,\kappa\theta^*(\kappa)=\underline{\theta}\}\in(0,1).
		\end{equation*}
		Equality holds in \eqref{eq:theta_kappa_ineq} for some $Y^*(\kappa_2)$ if and only if $\kappa_1=\kappa_2<\underline{\kappa}$ or $\kappa_1,\kappa_2\geq \underline{\kappa}$.
	\end{enumerate}
	\label{lemm_revise}
\end{prop}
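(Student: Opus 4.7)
The plan is to treat $T_\kappa(\theta) := e^{-\delta\tau}\Phi(\kappa\theta)$ as a self-map on $\bR$ and extract every claim from the resulting fixed-point relation. Since $\Phi(\theta)=\sup_{Y\in\cY}\bigl\{\bE[U(Y-\gamma)]+\theta\,\bE[Y^\alpha]\bigr\}$ is a supremum of affine functions of $\theta$ whose slopes lie in $[0,e^{h\tau}]$ (by the remark following \eqref{eq:assump}), $\Phi$ is convex, non-decreasing and $e^{h\tau}$-Lipschitz on $\bR$. Hence $T_\kappa$ is a $\kappa\,e^{-(\delta-h)\tau}$-Lipschitz self-map with modulus strictly less than $1$ since $\delta>h$, and Banach's fixed point theorem delivers a unique $\theta^*(\kappa)\in\bR$. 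To derive \eqref{eq:theta_kappa_ineq}, I would rearrange the fixed-point equation at $\kappa_2$ as
\begin{equation*}
\theta^*(\kappa_2)\bigl(1-\kappa_2 e^{-\delta\tau}\bE[(Y^*(\kappa_2))^\alpha]\bigr)=e^{-\delta\tau}\bE[U(Y^*(\kappa_2)-\gamma)],
\end{equation*}
then note that admissibility of $Y^*(\kappa_2)$ in the sup defining $\Phi(\kappa_1\theta^*(\kappa_1))$ bounds this same right-hand side above by $\theta^*(\kappa_1)\bigl(1-\kappa_1 e^{-\delta\tau}\bE[(Y^*(\kappa_2))^\alpha]\bigr)$, and divide through by the strictly positive denominator (at least $1-e^{-(\delta-h)\tau}>0$). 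Continuity of $\kappa\mapsto\theta^*(\kappa)$ then follows from a standard compactness/uniqueness argument: any subsequential limit of $\{\theta^*(\kappa_n)\}$ along $\kappa_n\to\kappa$ is a fixed point of $T_\kappa$ by Lipschitz continuity of $\Phi$, and uniqueness identifies it with $\theta^*(\kappa)$.

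With \eqref{eq:theta_kappa_ineq} in hand, the sign-based cases reduce to choosing the arguments $(\kappa_1,\kappa_2)$ appropriately. Item 1 is immediate: $\theta^*(0)=0$ makes $0$ a fixed point of every $T_\kappa$. For Item 2, applying \eqref{eq:theta_kappa_ineq} with $(\kappa_1,\kappa_2)=(b,a)$ and $a<b$ produces a prefactor $R\in(0,1)$; positivity of $\theta^*(\kappa)$ for all $\kappa$ (first established via \eqref{eq:theta_kappa_ineq} with $\kappa_2=0$, $\kappa_1=\kappa$) then gives $\theta^*(a)\leq R\theta^*(b)<\theta^*(b)$, i.e.\ strict increase. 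Item 3 is symmetric: for $\kappa_1<\kappa_2$ the prefactor now exceeds $1$, and multiplying the strictly negative $\theta^*(\kappa_1)$ yields $\theta^*(\kappa_2)<\theta^*(\kappa_1)$. Both directions require $\bE[(Y^*(\kappa))^\alpha]>0$: this is automatic in Item 2 since $\kappa\theta^*(\kappa)>0$, and in Item 3 it follows because $\theta^*(\kappa)\geq\tilde{\theta}:=-e^{-\delta\tau}k\gamma^\alpha$ combined with the assumption $\theta^*(1)\geq\underline{\theta}$ keeps $\kappa\theta^*(\kappa)$ in the non-degenerate regime $(\underline{\theta},\infty)$ of Lemma \ref{lem:auxsol}. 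The sharpness ``equality iff $\kappa_1=\kappa_2$'' is inherited from Proposition \ref{prop:uniqueness}: equality in \eqref{eq:theta_kappa_ineq} would force $Y^*(\kappa_2)$ to be optimal also at $\kappa_1\theta^*(\kappa_1)$; uniqueness of the auxiliary optimizer then gives $Y^*(\kappa_1)=Y^*(\kappa_2)$ almost surely, the first-order conditions in Lemma \ref{lem:auxsol} enforce $\kappa_1\theta^*(\kappa_1)=\kappa_2\theta^*(\kappa_2)$, and strict monotonicity closes the argument.

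Item 4 is the main obstacle because exactly here the degeneracy of Proposition \ref{prop:uniqueness} at $\theta=\underline{\theta}$ cannot be avoided. My strategy is to use the explicit candidate $\tilde{\theta}=-e^{-\delta\tau}k\gamma^\alpha$ introduced above and observe that whenever $\kappa\tilde{\theta}\leq\underline{\theta}$, Lemma \ref{lem:auxsol}(3)--(4) yield $\Phi(\kappa\tilde{\theta})=-k\gamma^\alpha$, hence $T_\kappa(\tilde{\theta})=\tilde{\theta}$; uniqueness of the fixed point then forces $\theta^*(\kappa)=\tilde{\theta}$ throughout this range. In particular $\theta^*(1)=\tilde{\theta}$ under the Item 4 assumption, so the condition $\theta^*(1)<\underline{\theta}$ rewrites as $\tilde{\theta}<\underline{\theta}$, and the threshold $\underline{\kappa}=\underline{\theta}/\tilde{\theta}\in(0,1)$ can be identified with $\inf\{\kappa\in[0,1]:\kappa\theta^*(\kappa)=\underline{\theta}\}$. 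For $\kappa\in[0,\underline{\kappa})$ one has $\kappa\theta^*(\kappa)>\underline{\theta}$ by the very definition of $\underline{\kappa}$, so the Item 3 argument applies verbatim to yield strict decrease, while continuity of $\theta^*$ matches the two regimes at $\underline{\kappa}$. The equality characterization now follows because Proposition \ref{prop:uniqueness} furnishes the admissible maximizer $Y^*(\kappa_2)=0$ whenever $\kappa_2\geq\underline{\kappa}$, in which case both sides of \eqref{eq:theta_kappa_ineq} collapse to $\tilde{\theta}$; conversely, strict monotonicity on $[0,\underline{\kappa}]$ together with the jump in the optimizer profile at $\underline{\kappa}$ rules out equality in the remaining configurations.
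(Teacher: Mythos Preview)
Your proof is correct and shares the paper's overall architecture: Lipschitz/contraction via the slope bound $\bE[Y^\alpha]\leq e^{h\tau}$, the variational inequality that yields \eqref{eq:theta_kappa_ineq} (the paper phrases it as $\Phi(\kappa_2\theta^*(\kappa_2))-\Phi(\kappa_1\theta^*(\kappa_1))\leq(\kappa_2\theta^*(\kappa_2)-\kappa_1\theta^*(\kappa_1))\bE[(Y^*(\kappa_2))^\alpha]$, which is the same thing after rearrangement), and the sign-of-$\theta^*(0)$ case split.

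Two points of genuine divergence are worth flagging. First, your use of the explicit lower bound $\tilde{\theta}=-e^{-\delta\tau}k\gamma^\alpha$ is sharper than what the paper does: it lets you identify $\underline{\kappa}=\underline{\theta}/\tilde{\theta}$ in closed form and gives a direct proof that $\kappa\theta^*(\kappa)>\underline{\theta}$ on $[0,\underline{\kappa})$ without any monotonicity input. The paper never writes this formula; it defines $\underline{\kappa}$ only implicitly. Second, the paper's route to strict monotonicity in the negative case is cleaner in its logical ordering: it first establishes \emph{weak} monotonicity of $\theta^*$ (and of $\kappa\mapsto\kappa\theta^*(\kappa)$) from \eqref{eq:theta_kappa_ineq} with no assumption on $\bE[(Y^*(\kappa_2))^\alpha]$, deduces $\kappa\theta^*(\kappa)\geq\theta^*(1)\geq\underline{\theta}$, and only then invokes the possibility of choosing a nonzero maximizer to upgrade to strictness. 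Your Item~3 argument instead leans on the Item~4 observation ($\tilde{\theta}<\underline{\theta}\Rightarrow\theta^*(1)=\tilde{\theta}$, hence $\theta^*(1)\geq\underline{\theta}\Rightarrow\tilde{\theta}\geq\underline{\theta}$) to locate $\kappa\theta^*(\kappa)$ in the non-degenerate regime; this is valid but forward-references material you develop only afterwards, so if you keep this order you should state the $\tilde{\theta}$-fixed-point lemma up front. Your continuity sketch also needs one line supplying a uniform bound on $\{\theta^*(\kappa_n)\}$ (e.g.\ $|\theta^*(\kappa)|\leq e^{-\delta\tau}|\Phi(0)|/(1-e^{-(\delta-h)\tau})$) before the subsequential-limit argument; the paper instead extracts an explicit Lipschitz constant directly from the two-sided version of \eqref{eq:theta_kappa_ineq}.
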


\begin{proof}
	The contraction property of $\theta\mapsto e^{-\delta\tau}\Phi(\theta)$ is shown in \cite[Proposition EC.2]{TsZh21}. This immediately implies that, for any $\kappa\in[0,1]$, the map $\theta\mapsto e^{-\delta\tau}\Phi(\kappa\theta)$ is contractive, and hence has a unique fixed point $\theta^*(\kappa)\in\bR$.
	
	Note that $0\leq e^{-\delta\tau}\bE[Y^\alpha]\leq e^{-(\delta-h)\tau}<1$ for any $Y\in\cY$. Let $\kappa_1,\kappa_2\in[0,1]$ be fixed. Then
	\begin{align}
		\theta^*(\kappa_2)-\theta^*(\kappa_1)&=e^{-\delta\tau}\{\Phi(\kappa_2\theta^*(\kappa_2))-\Phi(\kappa_1\theta^*(\kappa_1))\}\nonumber \\
		&\leq e^{-\delta\tau}\{\kappa_2\theta^*(\kappa_2)-\kappa_1\theta^*(\kappa_1)\}\bE[(Y^*(\kappa_2))^\alpha],
		\label{eq:theta_kappa_ineq2}
	\end{align}
	thus \eqref{eq:theta_kappa_ineq} holds. Since $\kappa_1$ and $\kappa_2$ are arbitrary, upon swapping $\kappa_1$ and $\kappa_2$ in \eqref{eq:theta_kappa_ineq} we can also deduce
\begin{align*}
	\theta^*(\kappa_1)\leq \frac{1-\kappa_2e^{-\delta\tau}\bE[(Y^*(\kappa_1))^\alpha]}{1-\kappa_1e^{-\delta\tau}\bE[(Y^*(\kappa_1))^\alpha]}
	\theta^*(\kappa_2)
\end{align*}
for any $\kappa_1,\kappa_2$. Therefore,
	\begin{equation*}
		\frac{1-\kappa_1e^{-\delta\tau}\bE[(Y^*(\kappa_1))^\alpha]}{1-\kappa_2e^{-\delta\tau}\bE[(Y^*(\kappa_1))^\alpha]}\theta^*(\kappa_1)\leq\theta^*(\kappa_2)\leq\frac{1-\kappa_1e^{-\delta\tau}\bE[(Y^*(\kappa_2))^\alpha]}{1-\kappa_2e^{-\delta\tau}\bE[(Y^*(\kappa_2))^\alpha]}\theta^*(\kappa_1).
	\end{equation*}
	This implies that the sign of $\theta^*(\kappa)$ does not depend on $\kappa\in[0,1]$. Also, from the above estimate, together with $e^{-\delta\tau}\sup_{Y\in\cY}\bE[Y^\alpha]=e^{-(\delta-h)\tau}<1$, we can easily show that
	\begin{equation*}
		|\theta^*(\kappa_1)-\theta^*(\kappa_2)|\leq\frac{e^{-(\delta-h)\tau}}{1-e^{-(\delta-h)\tau}}|\theta^*(\kappa_1)||\kappa_1-\kappa_2|
	\end{equation*}
	for any $\kappa_1,\kappa_2\in[0,1]$. Thus, $\kappa\mapsto\theta^*(\kappa)$ is continuous.
	
	Suppose that $\theta^*(0)>0$ (which is equivalent to $\Phi(0)>0$). Let $0\leq\kappa_1<\kappa_2\leq1$. Noting that $\theta^*(\kappa_1)>0$ and $\bE[(Y^*(\kappa_1))^\alpha]>0$, we see that
	\begin{equation*}
		\theta^*(\kappa_1)<\frac{1-\kappa_1e^{-\delta\tau}\bE[(Y^*(\kappa_1))^\alpha]}{1-\kappa_2e^{-\delta\tau}\bE[(Y^*(\kappa_1))^\alpha]}\theta^*(\kappa_1)\leq\theta^*(\kappa_2).
	\end{equation*}
	Hence, $\kappa\mapsto\theta^*(\kappa)$ is strictly increasing.
	
	Suppose that $\theta^*(0)<0$ (which is equivalent to $\Phi(0)<0$). Let $0\leq\kappa_1<\kappa_2\leq1$. Noting that $\theta^*(\kappa_1)<0$, we see that
	\begin{equation*}
		\frac{1-\kappa_1e^{-\delta\tau}\bE[(Y^*(\kappa_1))^\alpha]}{1-\kappa_2e^{-\delta\tau}\bE[(Y^*(\kappa_1))^\alpha]}\theta^*(\kappa_1)\leq\theta^*(\kappa_2)\leq\frac{1-\kappa_1e^{-\delta\tau}\bE[(Y^*(\kappa_2))^\alpha]}{1-\kappa_2e^{-\delta\tau}\bE[(Y^*(\kappa_2))^\alpha]}\theta^*(\kappa_1)\leq\theta^*(\kappa_1).
	\end{equation*}
	Hence, $\kappa\mapsto\theta^*(\kappa)$, and in turn $\kappa\mapsto\kappa\theta^*(\kappa)$, are non-increasing. Furthermore, if $\kappa_2\theta^*(\kappa_2)\geq\underline{\theta}$, then by Lemma \ref{lem:auxsol} and Proposition \ref{prop:uniqueness} we can choose $Y^*(\kappa_2)$ such that $Y^*(\kappa_2)>0$ with positive probability, and thus the third inequality above is strict, i.e. $\theta^*(\kappa_1)>\theta^*(\kappa_2)$. On the other hand, if $\kappa_1\theta^*(\kappa_1)\leq\underline{\theta}$, then we can take $Y^*(\kappa_1)$ such that $Y^*(\kappa_1)=0$ a.s., and thus $\theta^*(\kappa_1)\leq\theta^*(\kappa_2)$, which implies that $\theta^*(\kappa_1)=\theta^*(\kappa_2)$. In particular, if it holds that $\theta^*(0)<0$ and $\theta^*(1)\geq\underline{\theta}$, then $0>\kappa_2\theta^*(\kappa_2)\geq \theta^*(1)\geq  \underline{\theta}$, and hence $\kappa\mapsto\theta^*(\kappa)$ is strictly decreasing. Otherwise, if we have $\theta^*(0)<0$ and $\theta^*(1)<\underline{\theta}$, then for
	\begin{equation*}
		\underline{\kappa}:=\inf\{\kappa\in[0,1]\,|\,\kappa\theta^*(\kappa)=\underline{\theta}\}\in(0,1),
	\end{equation*}
	the function $\kappa\mapsto\theta^*(\kappa)$ is strictly decreasing on $[0,\underline{\kappa}]$, and $\theta^*(\kappa)=\theta^*(1)$ for any $\kappa\in[\underline{\kappa},1]$.
	
To complete the proof, we now verify the necessary and sufficient conditions for equality to hold in \eqref{eq:theta_kappa_ineq} in each of cases (2) to (4). Here, recall that $Y^*(\kappa_2)$ is a maximizer of $\Phi(\kappa_2\theta^*(\kappa_2))$. Thus equality holds in \eqref{eq:theta_kappa_ineq2} (and in turn \eqref{eq:theta_kappa_ineq}) if and only if $Y^*(\kappa_2)$ is also a maximizer of $\Phi(\kappa_1\theta^*(\kappa_1))$. 

Suppose we are in case (2) or case (3), and equality holds in  \eqref{eq:theta_kappa_ineq} such that $Y^*(\kappa_2)$ is a maximizer of $\Phi(\kappa_1\theta^*(\kappa_1))$. The (unique) characterization of the maximizer in Lemma \ref{lem:auxsol} and Proposition \ref{prop:uniqueness} implies that $\kappa_1\theta^*(\kappa_1)=\kappa_2\theta^*(\kappa_2)\geq \underline{\theta}$. But $\theta^*(\kappa)$ is the fixed point of $\kappa\mapsto e^{-\delta\tau}\Phi(\kappa\theta)$. Therefore,
\begin{align*}
	\theta^*(\kappa_1)=e^{-\delta\tau}\Phi(\kappa_1\theta^*(\kappa_1))=e^{-\delta\tau}\Phi(\kappa_2\theta^*(\kappa_2))=\theta^*(\kappa_2).
\end{align*}
Since $\kappa\mapsto\theta^*(\kappa)$ is strictly increasing (resp. decreasing) in case (2) (resp. case (3)), we must have $\kappa_1=\kappa_2$. The reverse implication that $\kappa_1=\kappa_2$ implies equality holds in \eqref{eq:theta_kappa_ineq} is trivial.

Suppose we are in case (4), and there exists some $Y^*(\kappa_2)$ which is a maximizer of $\Phi(\kappa_1\theta^*(\kappa_1))$. Consider two sub-cases: (a) If $\kappa_1<\underline{\kappa}\iff \kappa_1\theta^*(\kappa_1)>\underline{\theta}$, then by the property that $\kappa\mapsto\theta^*(\kappa)$ is strictly decreasing on $[0,\underline{\kappa}]$ and the same arguments used for case (2) and (3) above, we conclude $\kappa_1=\kappa_2$; (b) If $\kappa_1 \geq \underline{\kappa}\iff \kappa_1\theta^*(\kappa_1)\leq \underline{\theta}$, then by the characterization that $Y^*(\kappa_2)$, the maximize of $\Phi(\kappa_1\theta^*(\kappa_1))$, is a binary random variable or a constant of zero in such case, we have $\kappa_2\theta^*(\kappa_2)\leq \underline{\theta}$. This implies $\kappa_2\geq \underline{\kappa}$. Conversely, if $\kappa_1,\kappa_2\geq \underline{\kappa}$, then we can choose $Y^*(\kappa_2)=0$ as the maximizer of $\Phi(\kappa_2\theta^*(\kappa_2))$, which is also the maximizer of $\Phi(\kappa_1\theta^*(\kappa_1))$. Under this choice, \eqref{eq:theta_kappa_ineq} leads to $\theta^*(\kappa_1)= \theta^*(\kappa_2)$ thanks to the arbitrariness of $\kappa_1,\kappa_2$.
\end{proof}

We will soon see that $\theta^*(\kappa)$ can help us characterize the value functions of different types of agent (see in particular Remark \ref{rem:val_theta}). To numerically solve for $\theta^*(\kappa)$ (say under a fixed $\kappa$), one can generate a sequence $\{\theta_n(\kappa)\}_{n\in\bN_0}$ iteratively via $\theta_{n+1}(\kappa)=e^{-\delta \tau}\Phi(\kappa \theta_{n}(\kappa))$ for some arbitrary initial guess $\theta_{0}(\kappa)$. Each step of iteration requires us to solve an optimization problem in form of \eqref{eq:axu}. This can be repeated many times until some suitable error tolerance criterion is met.

\begin{rem}
By \cite[Proposition 2]{TsZh21}, if $\gamma\leq e^{r\tau}$ then necessarily $\theta^*(1)>0$. Then in turn $\theta^*(\kappa)>0$ for all $\kappa\in[0,1]$ and $\Phi(0)>0$ by Proposition \ref{lemm_revise}. Consequently, $\theta^*(\kappa)$ cannot have its range described by \eqref{eq:theta_impossible}. Similarly, if $k=0$ then $U(Y-\gamma)$ is non-negative for any $Y\in\cY$ and it is easy to conclude as well that $\theta^*(\kappa)>0$ for all $\kappa\in[0,1]$ and $\Phi(0)>0$. Moreover, in either case of $\gamma\leq e^{r\tau}$ or $k=0$, all numerical estimates generated by the iteration $\theta_{n+1}(\kappa)=e^{-\delta \tau}\Phi(\kappa \theta_{n}(\kappa))$ are guaranteed to be strictly positive for as long as one chooses $\theta_{0}(\kappa)>0$, and then the optimizer of problem $\Phi(\kappa \theta_{n}(\kappa))$ can be identified solely by part 1 of Lemma \ref{lem:auxsol}.
\label{rem:impossible}
\end{rem}

We close this section by presenting an important result concerning the fixed point of a set-valued map. As we will see in Section \ref{sect:mainresults}, this property is crucial behind the characterization of an equilibrium strategy for a sophisticated agent. The complete proof of the following theorem is long and technical, which is thus deferred to Appendix \ref{app:H_cont}.

\begin{theo}
	Define a set-valued map $G:[0,e^{h\tau}]\mapsto 2^{[0,e^{h\tau}]}$ via
	\begin{align}
		G(\xi):=\left\{\bE[Y^\alpha]\: \Bigl | \: Y\in\argmax_{Y\in\cY}\bE\left[F\left(Y;\: \frac{\beta}{1-(1-\beta)e^{-\delta\tau}\xi}\theta^*\left(\frac{\beta}{1-(1-\beta)e^{-\delta\tau}\xi}\right)\right)\right]\right\}.
		\label{eq:g_main}
 	\end{align}
	Then $G$ admits a fixed point $\hat{\xi}\in[0,e^{h\tau}]$, i.e. $\hat{\xi}\in G(\hat{\xi})$. Specifically:
	\begin{enumerate}
		\item If $\theta^*(0)>0$, $G(\xi)$ has at least one fixed point.
		\item If $\theta^*(0)=0$, $G(\xi)$ has exactly one fixed point given by $\hat{\xi}=\left\{\mathbb{E}(Y^{\alpha})|Y\in\argmax_{Y\in\mathcal{Y}}\mathbb{E}\left[U(Y-\gamma)\right]\right\}$.
		\item If $\theta^*(0)<0$, $G(\xi)$ has exactly one fixed point. 
	\end{enumerate}
	\label{thm:G_fp}
\end{theo}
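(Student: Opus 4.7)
The overall plan is to recognize $G$ as a set-valued self-map of the compact convex interval $[0,e^{h\tau}]$, with the codomain bound justified by $\sup_{Y\in\cY}\bE[Y^\alpha]\leq e^{h\tau}$ under Assumption 1, and then invoke Kakutani's fixed point theorem. The starting observation is that under Assumption 1 the map $\xi\mapsto\kappa(\xi):=\beta/(1-(1-\beta)e^{-\delta\tau}\xi)$ takes $[0,e^{h\tau}]$ continuously into $[0,1]$, and hence $\xi\mapsto\theta(\xi):=\kappa(\xi)\theta^*(\kappa(\xi))$ inherits continuity from Proposition \ref{lemm_revise}. Analyzing $G$ therefore reduces to tracking how the argmax correspondence of $\sup_{Y\in\cY}\bE[F(Y;\theta)]$ behaves as $\theta$ varies continuously through a bounded interval.

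First I would verify the hypotheses of Kakutani. Non-emptiness of $G(\xi)$ is given by Lemma \ref{lem:auxsol}. For convexity: when $\theta(\xi)\neq\underline{\theta}$, Proposition \ref{prop:uniqueness} makes $G(\xi)$ a singleton; in the degenerate regime $\theta(\xi)=\underline{\theta}$ with $k>0$, the optimal $Y$'s are binary random variables supported on $\{0,c_*\}$ with $c_*:=\gamma(1+k^{-1/(1-\alpha)})$ subject to the budget constraint $c_*\bE[Z_\tau\1_{\{Y=c_*\}}]\leq 1$, and since $Z_\tau$ is atomless the attainable values $\bE[Y^\alpha]=c_*^\alpha\bP(Y=c_*)$ fill out a closed interval, yielding convexity. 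For the closed graph property: outside the degenerate regime, Lemma \ref{lem:auxsol} expresses the maximizer as $y(Z_\tau;\theta)$ with continuous dependence on the auxiliary constants $\lambda^*(\theta),m_i(\theta),c_i(\theta),I_i(\cdot;\theta)$, so continuity of these quantities in $\theta$ (the content of Appendix \ref{app:H_cont}) together with dominated convergence transfers any convergence $\theta_n\to\theta$ into convergence of $\bE[Y_n^\alpha]$. In the degenerate case one extracts weak limits of the indicator events $\{Y_n=c_*\}$ and uses closedness of the budget constraint to identify the limiting event as feasible.

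Having discharged Kakutani, I turn to the case refinements. In Case (1) with $\theta^*(0)>0$, Proposition \ref{lemm_revise} yields $\theta^*>0$ on $[0,1]$, so $\theta(\xi)\geq 0$ stays away from $\underline{\theta}$; then $G$ is single-valued and continuous, and Brouwer (equivalently the Kakutani argument above) immediately delivers existence. In Case (2) with $\theta^*(0)=0$, Proposition \ref{lemm_revise} forces $\theta^*\equiv 0$, so $\theta(\xi)\equiv 0$ and $G$ collapses to the constant singleton $\{\bE[(Y^*)^\alpha]\}$ for the unique maximizer of the myopic problem, giving the explicit unique fixed point. In Case (3) with $\theta^*(0)<0$, one has $\theta^*(\kappa)<0$ on $(0,1]$ and $\theta(\cdot)$ is non-increasing; since a more negative $\theta$ in the penalty $\theta Y^\alpha$ punishes large $Y$ more strongly and shifts optimal mass toward smaller values, both $\sup G(\xi)$ and $\inf G(\xi)$ are non-increasing in $\xi$, and combined with the strictly increasing identity on $[0,e^{h\tau}]$ a standard monotonicity/intermediate value argument pins down exactly one crossing.

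I expect the main technical obstacle to be the closed-graph property at the degenerate point $\theta=\underline{\theta}$, where the maximizer is no longer a deterministic function of $Z_\tau$ and one must argue through weak subsequential limits of indicator events $\{Y_n=c_*\}$; this is precisely the step that demands the long appendix treatment the authors defer. The uniqueness claim in Case (3) is also delicate when the trajectory of $\theta(\cdot)$ passes through $\underline{\theta}$ (the situation of Proposition \ref{lemm_revise}(4) where $G$ becomes genuinely multi-valued at the single $\xi$ with $\kappa(\xi)=\underline{\kappa}$), requiring care to ensure the monotonicity-based uniqueness argument survives a possibly interval-valued selection there.
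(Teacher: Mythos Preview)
Your proposal is correct and follows essentially the same architecture as the paper: verify the Kakutani hypotheses (non-emptiness via Lemma~\ref{lem:auxsol}, convexity via Proposition~\ref{prop:uniqueness} and the binary-optimizer structure at $\theta=\underline{\theta}$, closed graph via continuity of the auxiliary quantities $\lambda^*,m_i,c_i,I_i$), then handle the three cases separately with Case~(2) trivial and Case~(3) via monotonicity of $G$. The paper packages the structural work into an auxiliary map $H(\theta):=\{\bE[Y^\alpha]:Y\in\argmax_{Y\in\cY}\bE[F(Y;\theta)]\}$ and a threshold $\underline{\xi}$ at which $\Theta(\xi)=\underline{\theta}$, then proves Case~(3) uniqueness by a five-case contradiction split around $\underline{\xi}$, but this is exactly your ``monotonicity/intermediate value'' crossing argument made explicit.

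One technical difference worth noting: for the closed-graph property at the degenerate point $\theta=\underline{\theta}$, you propose extracting weak subsequential limits of indicator events, whereas the paper instead computes $H(\underline{\theta})=[0,\underline{H}]$ explicitly via the Hardy--Littlewood inequality and then proves $\lim_{\theta\downarrow\underline{\theta}}H(\theta)=\underline{H}$ by sandwiching $I_2(\lambda^*(\theta)z;\theta)\1_{\{\lambda^*(\theta)z<m_2(\theta)\}}$ between $c_3(\theta)\gamma$ and $\gamma/(1-|\theta|^{-1/(1-\alpha)})$, both of which converge to $\gamma(1+k^{-1/(1-\alpha)})$ as $\theta\downarrow\underline{\theta}$. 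This direct computation avoids any compactness extraction and gives the sharp one-sided limit needed for the closed graph; your route would work too but is less constructive.
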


\section{Solutions to the main problems}
\label{sect:mainresults}

We are now ready to state the main results of this paper characterizing the optimal portfolio for each type of agent. The associated economic intuitions will be more thoroughly discussed in Section \ref{sect:discuss}.


\begin{theo}[Optimal pre-committing strategy]
Recall $\Phi(\cdot)$ and $\theta^*(\cdot)$ defined in Proposition \ref{lemm_revise}. There exists a unique pair $(A_\pre,A_\expo)\in\bR^2$ solving system \eqref{eq:precomm_sys}, where $A_\expo=\theta^*(1)$ and $A_\pre=e^{-\delta\tau}\Phi(\theta^*(1))$. The optimal (with reference time $T_n$) pre-committed value function is given by
\begin{equation*}
	V_\pre(x):= \sup_{\pi\in\Pi^{(n)}}J_n(\pi;x)=A_\pre x^\alpha.
\end{equation*}
Moreover, there exists an optimal pre-committing strategy $\pi^{\pre,(n)}$ with $V_\pre(x)=J_n(\pi^{\pre,(n)},x)$ such that the corresponding portfolio process $\hat{X}:=X^{n,x,\pi^{\pre,(n)}}$ satisfies
\begin{equation}
	\hat{X}_{T_{i+1}}=
	\begin{dcases}
		 y\left(\frac{Z_{T_{i+1}}}{Z_{T_i}};\beta A_{\expo}\right) \hat{X}_{T_{i}},& i=n;\\
		y\left(\frac{Z_{T_{i+1}}}{Z_{T_i}}; A_{\expo}\right)\hat{X}_{T_{i}},& i\in\{n+1,n+2,...\},
	\end{dcases}
\label{eq:precomm_port}
\end{equation}
where the function $y(z;\theta)=y(z)$ is defined in Lemma \ref{lem:auxsol} and $Z=(Z_t)_{t\geq 0}$ is the pricing kernel of the Black-Scholes economy given by \eqref{eq:bs_kernel}.
\label{thm:precomm}
\end{theo}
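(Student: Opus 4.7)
The plan is to work directly with the reduced system \eqref{eq:precomm_sys} over the constants $(A_\pre, A_\expo)$, which was obtained in Section \ref{sect:dpe} by invoking the scaling property $U(cz) = c^\alpha U(z)$ and the power-function ansatz for the value functions. Rewriting the two equations compactly in terms of $\Phi$ gives
\begin{equation*}
A_\expo = e^{-\delta\tau}\Phi(A_\expo), \qquad A_\pre = e^{-\delta\tau}\Phi(\beta A_\expo).
\end{equation*}
The first of these is precisely the fixed-point equation of Proposition \ref{lemm_revise} taken at $\kappa = 1$, whose unique solution is $A_\expo = \theta^*(1)$. Plugging back uniquely determines $A_\pre$, which settles the existence and uniqueness of $(A_\pre, A_\expo) \in \bR^2$ solving \eqref{eq:precomm_sys}.

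To establish the value-function identity $V_\pre(x) = A_\pre x^\alpha$, I would argue in two directions. For the upper bound, I would iterate the DPP \eqref{eq:precom_dpp}, bounding each continuation value $V_\expo(X^{0,x,\pi}_\tau)$ by $A_\expo (X^{0,x,\pi}_\tau)^\alpha$ inside the expectation and recursing; the resulting supremum is exactly $A_\pre x^\alpha$ by definition of $\Phi$. Absolute convergence of the infinite series defining $J_n(\pi;x)$ uniformly over $\pi \in \Pi^{(n)}$ is guaranteed by the standing assumption $\delta > h$ combined with the identity $\sup_{\pi \in \Pi^0}\bE[(Y^{0,\pi}_\tau)^\alpha] = e^{h\tau}$, so that term-by-term estimates are justified. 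The matching lower bound is obtained by exhibiting a candidate strategy that attains equality, which I construct below.

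The candidate strategy is built period by period using Lemma \ref{lem:auxsol}. For the reference period $[T_n, T_{n+1}]$, the lemma is applied with $\theta = \beta A_\expo$ to yield the optimal one-period gross return variable $Y_\pre := y(Z_{T_{n+1}}/Z_{T_n}; \beta A_\expo) \in \cY$. For every subsequent period $[T_i, T_{i+1}]$ with $i > n$, the lemma is applied with $\theta = A_\expo$ to yield $Y_{\expo,i} := y(Z_{T_{i+1}}/Z_{T_i}; A_\expo) \in \cY$. Each such $Y$ satisfies the static budget constraint $\bE[Z_\tau Y] \leq 1$ (with equality by Lemma \ref{lem:auxsol}), and since the underlying filtration is Brownian, a standard martingale representation argument produces an admissible $\pi^i \in \Pi^i$ whose terminal gross return is the prescribed variable. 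Concatenating these periodic strategies yields $\pi^{\pre,(n)} \in \Pi^{(n)}$ with wealth process given by \eqref{eq:precomm_port}, and its reward functional evaluates to $A_\pre x^\alpha$ by construction.

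The main obstacle will be the careful handling of the infinite-horizon, path-dependent structure: justifying term-by-term the DPP-based upper bound, confirming admissibility of the concatenated strategy (in particular the non-negativity of the gross return on every period, a subtlety in the S-shaped setting where $Y_\pre, Y_{\expo,i}$ may take the value zero with positive probability), and verifying that the atomless one-period pricing kernel $Z_{T_{i+1}}/Z_{T_i}$ makes each prescribed $Y$ attainable by some $\pi \in \Pi^i$ in the specific sense of the paper's admissibility class. Once these technicalities are dispatched, the structural argument is a clean combination of Proposition \ref{lemm_revise} to pin down $A_\expo$ and Lemma \ref{lem:auxsol} to identify optimal one-period gross returns, together with a replication step.
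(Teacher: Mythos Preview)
Your proposal is correct and structurally close to the paper's proof: both pin down $A_\expo$ as the unique fixed point of $e^{-\delta\tau}\Phi(\cdot)$ via Proposition~\ref{lemm_revise}, set $A_\pre=e^{-\delta\tau}\Phi(\beta A_\expo)$, build the candidate wealth process period by period from the optimizers of Lemma~\ref{lem:auxsol}, and verify optimality by a two-sided inequality.

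The only substantive difference is in how the upper bound $V_\pre(x)\le A_\pre x^\alpha$ is obtained. You propose to iterate the DPP \eqref{eq:precom_dpp}, substituting $V_\expo(x)=A_\expo x^\alpha$ inside the continuation term; this implicitly requires either citing \cite{TsZh21} for that identity or re-establishing it, and also treats the DPP in \eqref{eq:precom_dpp} as already rigorously proved. The paper instead gives a direct, self-contained verification: for arbitrary admissible $\pi$ it defines the discrete-time process
\[
M_j=\sum_{i=1}^{j}D(T_i)U(X_{T_i}-\gamma X_{T_{i-1}})+\beta A_\expo e^{-\delta T_j}X_{T_j}^\alpha,\qquad M_0=A_\pre x^\alpha,
\]
shows it is an $(\cF_{T_j})$-supermartingale using the fixed-point property of $A_\expo$ (for $j\ge 1$) and the definition of $A_\pre$ (for $j=0\to 1$), and lets $j\to\infty$ with the standing assumption $\delta>h$ to kill the residual. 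For the candidate $\hat\pi$ the same process is a true martingale, giving the matching lower bound. Your ``DPP iteration,'' once written out carefully, is exactly this supermartingale computation, so the two approaches coincide in substance; the paper's formulation simply avoids any reliance on the heuristic DPP of Section~\ref{sect:dpe} or on the $\beta=1$ result from \cite{TsZh21}.
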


\begin{proof}
	$\theta\mapsto e^{-\delta \tau}\Phi(\theta)$ is a contraction by Proposition \ref{lemm_revise} and hence there exists a unique $A_\expo$ such that $A_\expo= e^{-\delta \tau}\Phi(A_\expo)$. Then $(A_\pre:=e^{-\delta\tau}\Phi(\beta A_\expo), A_\expo)$ is the unique solution to system \eqref{eq:precomm_sys}.
	
	The rest of the proof is very similar to that of Theorem 1 in \cite{TsZh21}. Without loss of generality, we work with the optimal time-zero pre-committed strategy.
	For any admissible portfolio process $X=(X^{0,x,\pi}_t)_{t\geq 0}$ with arbitrary $\pi\in\Pi^{(0)}$ (we will suppress the superscripts in $X$ for brevity), define a discrete-time stochastic process $M=(M_n)_{n\in\bN_0}$ via
	\begin{align}
		M_n:=
		\begin{cases}
			A_\pre x^\alpha,& n=0;\\
			\sum_{i=1}^n D(T_i)U(X_{T_i}-\gamma X_{T_{i-1}})+\beta A_\expo e^{-\delta T_n} X^\alpha_{T_n}, & n\geq 1.
		\end{cases}
		\label{eq:M}
	\end{align}
	Then for $n\geq 1$,
	\begin{align*}
		M_{n+1}=M_n+\beta e^{-\delta T_n}\left[e^{-\delta \tau}\left(U(X_{T_{n+1}}-\gamma X_{T_n})+A_\expo X_{T_{n+1}}^\alpha\right)-A_\expo X^\alpha_{T_n}\right],
	\end{align*}
	and in turn
	\begin{align*}
		\bE[M_{n+1}|\cF_{T_n}]=M_n+\beta e^{-\delta T_n}X_{T_n}^\alpha\left[e^{-\delta\tau}\bE\left[U\left(Y_{n+1}-\gamma\right)+A_\expo Y_{n+1}^\alpha\Bigl | \cF_{T_n}\right]-A_\expo\right].
	\end{align*}
	Notice that $Y_{n+1}$ is $\cF^{T_n}_{T_{n+1}}$-measurable, and in turn $Y_{n+1}$ is independent of $\cF_{T_n}$ such that
	\begin{align*}
		\bE\left[U\left(Y_{n+1}-\gamma\right)+A_\expo Y_{n+1}^\alpha\Bigl | \cF_{T_n}\right]=\bE\left[U\left(Y_{n+1}-\gamma\right)+A_\expo Y_{n+1}^\alpha\right].
	\end{align*}
	Since $ZX$ is a non-negative supermartingale under any admissible portfolio strategy $\pi$, $\bE\left[\frac{Z_{T_{n+1}}}{Z_{T_n}}Y_{n+1}\Bigl| \cF_{T_n}\right]\leq 1$ and therefore
	\begin{align}
		e^{-\delta\tau}\bE\left[U\left(Y_{n+1}-\gamma\right)+A_\expo Y_{n+1}^\alpha\Bigl | \cF_{T_n}\right]\leq e^{-\delta\tau}\sup_{Y\in\cY}\bE\left[U\left(Y-\gamma\right)+A_\expo Y^\alpha\right]=A_\expo
		\label{eq:ineq1}
	\end{align}
	as $e^{-\delta\tau}\Phi(A_\expo)=A_\expo$ by design. We thus deduce $\bE[M_{n+1}|\cF_{T_n}]\leq M_n$ for $n\geq 1$.
	
	Meanwhile
	\begin{align*}
		M_1=M_0+e^{-\delta T_1}U(X_{T_1}-\gamma X_{T_0})+\beta A_\expo e^{-\delta T_1} X^\alpha_{T_1}-A_\pre x^\alpha
	\end{align*}
	and hence
	\begin{align}
		\bE[M_{1}]&=M_0+e^{-\delta T_1}x^\alpha\left[e^{-\delta \tau}\bE\left[U\left(Y_1-\gamma\right)+\beta A_\expo  Y_1^\alpha\right]-A_\pre \right]\nonumber\\
		&\leq M_0+e^{-\delta T_1}x^\alpha\left[e^{-\delta \tau} \sup_{Y\in\cY}\bE\left[U\left(Y-\gamma\right)+\beta A_\expo Y^\alpha\right]-A_\pre \right] \label{eq:ineq2}\\
		&=M_0+e^{-\delta T_1}x^\alpha (e^{-\delta\tau}\Phi(\beta A_{\expo})-A_\pre)=M_0\nonumber
	\end{align}
	by definition of $A_{\pre}$. Thus $M$ is a supermartingale with respect to $\cG$ where $\cG_n:=\cF_{T_n}$. Then
	\begin{align*}
		A_\pre x^\alpha=M_0\geq \bE[M_n]= \bE\left[\sum_{i=1}^n D(T_i)U(X_{T_i}-\gamma X_{T_{i-1}})+\beta A_\expo e^{-\delta T_n} X^\alpha_{T_n}\right]
	\end{align*}
	and in turn
	\begin{align*}
		\bE\left[\sum_{i=1}^n D(T_i)U(X_{T_i}-\gamma X_{T_{i-1}})\right]&\leq A_\pre x^\alpha-\beta A_\expo e^{-\delta T_n}\bE\left[X_{T_n}^\alpha\right]\\
		&\leq  A_\pre x^\alpha+\beta |A_\expo| e^{-(\delta-h) T_n}
	\end{align*}
	on recalling that $\bE\left[X_{T_n}^\alpha\right]\leq e^{h\tau}$ using the solution to a finite-horizon Merton problem as an estimate. Using assumption \eqref{eq:assump}, taking limit $n\to\infty$ and then supremum over $\pi\in\Pi^{(0)}$, we deduce
	\begin{align*}
		V_\pre(x):=\sup_{\pi\in\Pi^{(0)}}\bE\left[\sum_{i=1}^\infty D(T_i)U(X_{T_i}-\gamma X_{T_{i-1}})\right]\leq A_\pre x^\alpha.
	\end{align*}
	To show the reverse inequality that $V_\pre(x)\geq A_\pre x^{\alpha}$, one just needs to construct an admissible portfolio process which attains a value of $A_\pre x^{\alpha}$. Using the usual replication argument in a complete market, there exists some admissible $\pi^{\pre}\in\Pi^{(0)}$ such that $\hat{X}:=X^{0,x,\pi^{\pre}}$ satisfies \eqref{eq:precomm_port} at all $\{T_i\}_{i>n}$. Now we can define a process $\hat{M}$ in the same fashion as in \eqref{eq:M} except we replace $X$ by $\hat{X}$. Then using the fact that $y(Z_\tau;\theta)$ is an optimizer to problem \eqref{eq:axu}, one can show that $\hat{M}$ is indeed a $\cG$-martingale where the inequalities in \eqref{eq:ineq1} and \eqref{eq:ineq2} become equalities. One can then ultimately conclude
	\begin{align*}
		\sup_{\pi\in\Pi^{(0)}}\bE\left[\sum_{i=1}^\infty D(T_i)U(X_{T_i}-\gamma X_{T_{i-1}})\right]= A_\pre x^\alpha=\bE\left[\sum_{i=1}^\infty D(T_i)U(\hat{X}_{T_i}-\gamma \hat{X}_{T_{i-1}})\right].
	\end{align*}
\end{proof}

The ratio $Z_{T_{i+1}}/Z_{T_i}$ captures the change in the pricing kernel, which reflects how the state of the world changes within the period $[T_i,T_{i+1}]$. Moreover, $\{Z_{T_{i+1}}/Z_{T_i}\}_{i\in \bN_0}$ are iid with a common law identical to that of $Z_\tau$. A pre-committing agent thus trades in the way such that the periodic gross returns $\{\hat{X}_{T_{i+1}}/\hat{X}_{T_i}\}_{i}$ are independent across periods. Moreover, they target a gross return with risk profile described by $y(Z_\tau,\beta A_\expo)$ in the first period, and then in all subsequent periods they target a different payoff of $y(Z_\tau,A_\expo)$.

By construction, $A_{\expo}$ is the fixed point of $\theta\mapsto e^{-\delta \tau}\Phi(\theta)$ and $y(Z_\tau;A_\expo)\in \argmax \bE[F(Y;A_{\expo})]$. The pair $(A_\expo,y(Z_\tau,A_\expo))$ therefore represents exactly the value function of the periodic portfolio selection problem faced by an exponential discounter (i.e. an agent without present bias or equivalently $\beta=1$) and their corresponding optimal target gross return, which is the benchmark problem studied in \cite{TsZh21}. In our setup, this means the myopic pre-committing agent will invest in the same way as an exponential discounter and target a law of $y(Z_\tau,A_\expo)$ after the first period. This observation is not too surprising. The agent discounts future payoffs using the sequence of discount factors $\{e^{-\delta\tau},\beta e^{-2\delta\tau},\beta e^{-3\delta \tau},...\}$. As of today, they anticipate that they will behave just like an exponential discounter from the second period onward where then they will use the discount factors $\{e^{-2\delta\tau}, e^{-3\delta \tau}, e^{-4\delta \tau},...\}$, modulo the scaling factor $\beta$. Thus, as of today, they plan to trade like an exponential discounter starting from the second period. But due to their present bias over the short term outcome in the first period, they plan to deviate from the exponential discounter's strategy in the first period (only) which risk profile is characterized by $y(Z_\tau,\beta A_\expo)$.

But what if the agent cannot commit to a planned strategy? If the agent is instead naive (as defined in Definition \ref{def:naive}), they will keep reoptimizing and updating their strategy at the beginning of each period. The following corollary is a straightforward consequence of Theorem \ref{thm:precomm}.

\begin{cor}[Optimal naive strategy]
There exists an optimal naive strategy $\pi^{\text{naive}}$ such that the corresponding portfolio process $\hat{X}:=X^{0,x,\pi^{\text{naive}}}$ satisfies
\begin{equation*}
	\hat{X}_{T_{i+1}}=y\left(\frac{Z_{T_{i+1}}}{Z_{T_i}};\beta A_{\expo}\right) \hat{X}_{T_{i}}
\end{equation*}
for all $i\in\bN$, where the function $y(z;\theta)=y(z)$ is defined in Lemma \ref{lem:auxsol} and $Z=(Z_t)_{t\geq 0}$ is the pricing kernel of the Black-Scholes economy given by \eqref{eq:bs_kernel}.
\label{cor:naivesol}
\end{cor}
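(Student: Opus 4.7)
The plan is to appeal directly to Theorem \ref{thm:precomm} applied at each reference time $T_i$, and then stitch together the first-period segments as prescribed by Definition \ref{def:naive}. The key structural input is the time-homogeneity of the pre-committing problem: the optimal value $V_\pre(x) = A_\pre x^\alpha$ and the first-period law of the optimal gross return are the same regardless of the reference time $T_i$, because the problem $\sup_{\pi\in\Pi^{(i)}}J_i(\pi;x)$ is stationary in $i$ (both $F$ and the law of $Z_{T_{i+1}}/Z_{T_i}$ are $i$-independent).

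Concretely, I would proceed as follows. Fix $i\in\bN_0$ and set $x_i := \hat{X}_{T_i}$. By Theorem \ref{thm:precomm} applied with reference time $T_i$, there exists an optimal pre-committing strategy $\pi^{\pre,(i)}\in\Pi^{(i)}$ whose wealth process starting from $(T_i,x_i)$ satisfies, over its first period,
\begin{equation*}
X^{i,x_i,\pi^{\pre,(i)}}_{T_{i+1}} \;=\; y\!\left(\tfrac{Z_{T_{i+1}}}{Z_{T_i}};\,\beta A_\expo\right)\! x_i.
\end{equation*}
By Definition \ref{def:naive}, the naive strategy is obtained by restricting each $\pi^{\pre,(i)}$ to $[T_i,T_{i+1})$ and concatenating. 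Since $\hat{X}$ evolves on $[T_i,T_{i+1}]$ exactly according to this restricted segment, we immediately conclude
\begin{equation*}
\hat{X}_{T_{i+1}} \;=\; y\!\left(\tfrac{Z_{T_{i+1}}}{Z_{T_i}};\,\beta A_\expo\right)\! \hat{X}_{T_i},
\end{equation*}
and iterating over $i$ yields the claimed recursion.

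The only item worth checking is admissibility of the concatenated strategy. Each $\pi^{\pre,(i)}_t\big|_{[T_i,T_{i+1})}$ lies in $\Pi^i$ by construction in Theorem \ref{thm:precomm}, and by the very definition of $\Pi = \Pi^{(0)}$ as processes whose restriction to each period belongs to $\Pi^i$, the concatenation lies in $\Pi$. There is no substantive technical obstacle here; the corollary is truly a one-line application of Theorem \ref{thm:precomm} together with the reoptimization rule of Definition \ref{def:naive}.
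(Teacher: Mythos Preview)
Your proposal is correct and matches the paper's approach exactly: the paper states only that the corollary ``is a straightforward consequence of Theorem~\ref{thm:precomm}'' without giving any proof, and your argument---applying Theorem~\ref{thm:precomm} at each reference time $T_i$, extracting the first-period segment, and concatenating via Definition~\ref{def:naive}---is precisely the intended one-line derivation. The time-homogeneity and scaling you invoke are what make the first-period optimal gross return $y(Z_{T_{i+1}}/Z_{T_i};\beta A_\expo)$ independent of both $i$ and the (random) starting wealth $\hat{X}_{T_i}$, so the concatenation is well defined and admissible as you note.
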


At time zero ($t=0$), a pre-committing agent plans to take risk of $y(Z_\tau;\beta A_\expo)$ over the first period and switch to a different form of risk of $y(Z_\tau; A_\expo)$ from the second period onward. However, if the agent turns out to be naive, then once they arrive at the beginning of the second period ($t=T_1$), they will reoptimize the strategy and conclude that the current best action is to take risk of $y(Z_\tau;\beta A_\expo)$ again in the second period and later switch to $y(Z_\tau; A_\expo)$ at the beginning of the third period ($t=T_3$). The naive agent runs into this infinite loop of reoptimization and eventually takes the same risk of $y(Z_\tau;\beta A_\expo)$ in all periods.

\begin{theo}[Subgame perfect equilibrium strategy of sophisticated agent]
There exists a pair $(\hat{A},\hat{Y})\in\bR\times \cY$ solving the system \eqref{eq:sophi_sys}. 
Furthermore, there exists $\hat{\pi}\in\Pi$ a periodic subgame perfect equilibrium strategy for a sophisticated agent such that
\begin{align*}
	\hat{V}(x):=J_n(\hat{\pi}^{(n)};x)=\sup_{\pi^n\in\Pi^n}J_n(\pi^n\oplus\hat{\pi}^{(n+1)};x)=\hat{A}x^\alpha
\end{align*}
for all $n\in \bN_0$ and $x>0$. The corresponding portfolio process $\hat{X}:=X^{0,x,\hat{\pi}}$ satisfies
\begin{equation}
	\hat{X}_{T_{i+1}}=
		y\left(\frac{Z_{T_{i+1}}}{Z_{T_i}};A_{\so}\right) \hat{X}_{T_{i}},\ \hat{X}_{0}=x,
	\label{eq:soph_port}
\end{equation}
where the function $y(z;\theta)=y(z)$ is defined in Lemma \ref{lem:auxsol} and $Z=(Z_t)_{t\geq 0}$ is the pricing kernel of the Black-Scholes economy given by \eqref{eq:bs_kernel}, and $A_{\so}$ is a constant defined via
\begin{align}
	A_{\so}:=\frac{\beta \hat{A}}{1-(1-\beta)e^{-\delta\tau}\bE[\hat{Y}^\alpha]}.
	\label{eq:A_soph}
\end{align}
\label{thm:soph}
\end{theo}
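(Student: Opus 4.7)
The plan is to apply Theorem \ref{thm:G_fp} to produce a candidate equilibrium and then verify it satisfies the extended Bellman system \eqref{eq_exHJB} via Proposition \ref{prop:sophis_system}. The key insight is that a fixed point $\hat{\xi}$ of $G$ simultaneously encodes the correct value of $\bE[\hat{Y}^\alpha]$ and the effective weight $\hat{\kappa} := \beta/(1-(1-\beta)e^{-\delta\tau}\hat{\xi})$, which makes the expression for $A_{\so}$ in \eqref{eq:A_soph} self-consistent and collapses the coupled system into the one-period problem already solved in Section \ref{sect:auxprob}.

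First, I would invoke Theorem \ref{thm:G_fp} to obtain $\hat{\xi} \in G(\hat{\xi})$, and set $\hat{\kappa} := \beta/(1-(1-\beta)e^{-\delta\tau}\hat{\xi})$, $\hat{A} := \theta^*(\hat{\kappa})$, and $A_{\so} := \hat{\kappa}\hat{A}$. By the definition of $G$, there exists $\hat{Y} \in \argmax_{Y\in\cY}\bE[F(Y;A_{\so})]$ with $\bE[\hat{Y}^\alpha] = \hat{\xi}$; in the generic cases of Lemma \ref{lem:auxsol} one may take $\hat{Y} = y(Z_\tau;A_{\so})$, while the corner case $A_{\so}=\underline{\theta}$ is handled by the characterization in Proposition \ref{prop:uniqueness}. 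By the fixed-point property of $\theta^*$ (Proposition \ref{lemm_revise}), $\hat{A} = e^{-\delta\tau}\Phi(A_{\so}) = e^{-\delta\tau}\bE[U(\hat{Y}-\gamma) + A_{\so}\hat{Y}^\alpha]$; the relation $A_{\so} = \hat{\kappa}\hat{A}$ rearranges to \eqref{eq:A_soph}, so $(\hat{A},\hat{Y})$ solves \eqref{eq:sophi_sys}.

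Next, I would build $\hat{\pi}\in\Pi$ by a standard replication argument in the Black-Scholes market: on each period $[T_i,T_{i+1}]$ take an $\bF^{T_i}$-progressively measurable strategy whose gross return is an independent copy of $\hat{Y}$. Then \eqref{eq:soph_port} holds and $\{\hat{X}_{T_{i+1}}/\hat{X}_{T_i}\}_i$ is iid with common law of $\hat{Y}$. Using $U(cz)=c^\alpha U(z)$ and independence across periods,
\begin{equation*}
W^{\hat{\pi}}(x) = x^\alpha \bE[U(\hat{Y}-\gamma)]\sum_{i=1}^\infty e^{-\delta\tau i}(\bE[\hat{Y}^\alpha])^{i-1} = \hat{B}x^\alpha,\qquad \hat{B}:=\frac{e^{-\delta\tau}\bE[U(\hat{Y}-\gamma)]}{1-e^{-\delta\tau}\hat{\xi}},
\end{equation*}
with convergence ensured by \eqref{eq:assump} via $\bE[\hat{Y}^\alpha]\le e^{h\tau}$. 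A short algebraic manipulation using $\hat{A} = e^{-\delta\tau}\bE[U(\hat{Y}-\gamma)] + e^{-\delta\tau}A_{\so}\hat{\xi}$ yields $\hat{B} = \hat{A}/(1-(1-\beta)e^{-\delta\tau}\hat{\xi})$, so that $A_{\so} = \beta\hat{B}$, and an analogous one-step computation gives $V^{\hat{\pi}}(x) = \hat{A}x^\alpha$.

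Finally, I would verify \eqref{eq_exHJB}. The second equation reduces to the algebraic identity $\hat{B} = \hat{A} + (1-\beta)e^{-\delta\tau}\hat{B}\hat{\xi}$, which holds by construction of $\hat{B}$. For the first equation, using $A_{\so} = \beta\hat{B}$ and the scaling property,
\begin{equation*}
\sup_{\pi^0\in\Pi^0}\bE[e^{-\delta\tau}U(X^{0,x,\pi^0}_\tau-\gamma x) + \beta e^{-\delta\tau}\hat{B}(X^{0,x,\pi^0}_\tau)^\alpha] = x^\alpha e^{-\delta\tau}\Phi(A_{\so}) = \hat{A}x^\alpha,
\end{equation*}
and the supremum is attained at $\hat{Y}$ by design, hence both clauses coincide with $V^{\hat{\pi}}(x)$. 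Proposition \ref{prop:sophis_system} then certifies $\hat{\pi}$ as a periodic subgame perfect equilibrium. The main obstacle is the underlying appeal to Theorem \ref{thm:G_fp}: the existence of $\hat{\xi}$ is delicate because $F(\cdot;\theta)$ is non-concave and the argmax may be multivalued at $\theta=\underline{\theta}$, so one needs upper-hemicontinuity of the set-valued selection and a Kakutani-type fixed point argument (handled in Appendix \ref{app:H_cont}); once Theorem \ref{thm:G_fp} is granted, everything above is essentially bookkeeping.
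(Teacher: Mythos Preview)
Your proposal is correct and follows essentially the same route as the paper: invoke Theorem \ref{thm:G_fp} to obtain $\hat{\xi}$, set $\hat{A}=\theta^*(\hat{\kappa})$ and $A_{\so}=\hat{\kappa}\hat{A}$, pick $\hat{Y}$ as a maximizer with the prescribed $\alpha$-moment, verify that $(\hat{A},\hat{Y})$ solves \eqref{eq:sophi_sys}, replicate periodically, and conclude via Proposition \ref{prop:sophis_system}. The only difference is cosmetic: the paper's proof is terse and simply asserts that once \eqref{eq:sophi_sys} holds the extended Bellman system \eqref{eq_exHJB} is satisfied ``by design'' (leaning on the ansatz derivation in Section \ref{sect:dpe}), whereas you carry out the geometric-series computation of $W^{\hat{\pi}}(x)=\hat{B}x^\alpha$ explicitly and check the algebraic identities $\hat{B}=\hat{A}/(1-(1-\beta)e^{-\delta\tau}\hat{\xi})$ and $A_{\so}=\beta\hat{B}$ by hand; this extra bookkeeping is helpful but not a different idea.
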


\begin{proof}
	We first show that there exists $(\hat{A},\hat{Y})$ which solves system \eqref{eq:sophi_sys}. Recall from Proposition \ref{lemm_revise} the definition of $\theta^*(\kappa)$ as the fixed point of $\theta \mapsto e^{-\delta\tau}\Phi(\kappa\theta)$. By Theorem \ref{thm:G_fp}, there exists $\hat{\xi}$ a fixed point of the set-valued map \eqref{eq:g_main}. Define
	\begin{align*}
	\hat{A}:=\theta^*\left(\frac{\beta}{1-(1-\beta)e^{-\delta\tau}\hat{\xi}}\right),\qquad \hat{Y}\in\argmax_{Y\in\cY}\bE\left[F\left(Y;\frac{\beta \hat{A}}{1-(1-\beta)e^{-\delta\tau}\hat{\xi}}\right)\right]=\argmax_{Y\in\cY}\bE\left[F\left(Y;A_{\so}\right)\right],
	\end{align*}
	where the last equality is due to the construction of $\hat{\xi}$ such that $\bE[\hat{Y}^\alpha]=\hat{\xi}$. 
	Then
	\begin{align*}
		\bE\Big[e^{-\delta\tau}U(\hat{Y}-\gamma)+\frac{\beta e^{-\delta\tau}}{1-(1-\beta)e^{-\delta\tau}\bE[\hat{Y}^\alpha]}\hat{A}\hat{Y}^\alpha\Big]&=e^{-\delta\tau}\bE\left[F\left(\hat{Y};\frac{\beta \hat{A}}{1-(1-\beta)e^{-\delta\tau}\hat{\xi}}\right)\right]\\
		&=e^{-\delta\tau}\sup_{Y\in\cY}\bE\left[F\left(Y;\frac{\beta \hat{A}}{1-(1-\beta)e^{-\delta\tau}\hat{\xi}}\right)\right]\\
		&=e^{-\delta \tau}\Phi\left(\frac{\beta \hat{A}}{1-(1-\beta)e^{-\delta\tau}\hat{\xi}}\right).		
	\end{align*}
But $\hat{A}=\theta^*\left(\frac{\beta}{1-(1-\beta)e^{-\delta\tau}\hat{\xi}}\right)$ and hence
	\begin{align*}
		\hat{A}=e^{-\delta\tau}\Phi\left(\frac{\beta \hat{A}}{1-(1-\beta)e^{-\delta\tau}\hat{\xi}}\right)=\sup_{Y\in\cY}\bE\Big[e^{-\delta\tau}U(Y-\gamma)+\frac{\beta e^{-\delta\tau}}{1-(1-\beta)e^{-\delta\tau}\bE[\hat{Y}^\alpha]}\hat{A}Y^\alpha\Big].
	\end{align*}
	Therefore $(\hat{A},\hat{Y})$ is a solution to \eqref{eq:sophi_sys}.
	
	Using the standard portfolio replication argument, there exists a periodic $\hat{\pi}\in\Pi$ such that the resulting portfolio process $\hat{X}$ satisfies \eqref{eq:soph_port}, and each copy of $y(Z_{T_{i+1}}/Z_{T_i};A_{\so})$ has the same law as $\hat{Y}$. Then by design that $(\hat{A},\hat{Y})$ satisfies \eqref{eq:sophi_sys}, ($V^{\hat{\pi}}(\cdot), W^{\hat{\pi}}(\cdot))$ satisfies \eqref{eq_exHJB} such that $\hat{\pi}$ is indeed a subgame perfect equilibrium due to Proposition \ref{prop:sophis_system}.
\end{proof}

Theorem \ref{thm:soph} suggests that the agent should trade in the way such that the periodic gross return of the equilibrium portfolio follows a common target law of $y(Z_{\tau};A_{\so})$. Unlike the strategy adopted by a pre-committing agent, the target law of the sophisticated agent remains the same across all periods. A sophisticated agent therefore acts very similarly as a naive agent, in the sense that both types of agent will target some iid law of the periodic gross return across all periods. Suppose $\beta\in(0,1)$ (i.e. we exclude the cases of $\beta=0$ and $\beta=1$). Then if $A_{\expo}=\theta^*(1)\neq 0\iff \Phi(0)\neq 0$, we can define 
\begin{align*}
	\hat{\beta}:=A_{\so}/A_{\expo}\in(0,1)
\end{align*}
such that the target law of the sophisticated agent is $y(Z_{\tau};\hat{\beta}A_{\expo})$. Note that in the above expression, the property that $\hat{\beta}\in(0,1)$ is non-trivial and we will verify this claim in Proposition \ref{prop:type_cs}. The sophisticated agent can then be regarded as a naive agent but with a modified present bias parameter $\hat{\beta}$ (which endogenously depends on the agent's ``natural'' present bias parameter $\beta$ since $A_{\so}$ depends on $\beta$). In Section \ref{sect:discuss}, we will contrast the behaviors of the sophisticated and the naive agent by comparing $\beta$ and $\hat{\beta}$.

To close this section, we address the uniqueness of the equilibrium strategy for the sophisticated agent. It is useful to first introduce an economically important quantity
\begin{align}
	A_{\my}:=\theta^*(0)=e^{-\delta\tau}\sup_{Y\in\cY}\bE[U(Y-\gamma)],
	\label{eq:A_myopic}
\end{align}
which is the value function of a one-period optimization problem or equivalently a version of the problem faced by a completely myopic agent with $\beta=0$. Note that the sign of $A_{\my}$ does not depend on $(\beta,\delta)$ the time preference parameters of the agent, and we will soon see in Section \ref{sect:discuss} that this quantity has significant economic impact on the agent's risk taking behaviors.

\begin{prop}
Let $\hat{\pi}^i\in\Pi$ for $i\in\{1,2\}$ be two periodic subgame perfect equilibrium strategies such that $X^{0,x,\hat{\pi}^i}_{T_{n+1}}=X^{0,x,\hat{\pi}^i}_{T_n}Y^i_{n+1}$ for all $n\in \bN_0$, where $\{Y^i_n\}_{n\in\bN}$ are the i.i.d. periodic gross return variables induced by $\hat{\pi}^i$. If $A_{\my}\leq 0$, then $\bP(Y^1_n=Y^2_n)=1$ for all $n\in\bN$. 
\label{prop:unique}
\end{prop}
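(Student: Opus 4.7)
The plan is to reduce the question to the two uniqueness results already in hand: the fixed-point uniqueness of the set-valued map $G$ in Theorem~\ref{thm:G_fp}, and the almost-sure uniqueness of the one-period maximizer in Proposition~\ref{prop:uniqueness}.

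First I would extract the algebraic data carried by each equilibrium. For $i\in\{1,2\}$, set $\hat{Y}^i := Y^i_1$, let $\hat{A}^i$ be the equilibrium value coefficient (so $V^{\hat{\pi}^i}(x)=\hat{A}^i x^\alpha$), and define $\hat{\xi}^i := \bE[(\hat{Y}^i)^\alpha]$. Since $\hat{\pi}^i$ is a periodic subgame perfect equilibrium, Proposition~\ref{prop:sophis_system} together with the scaling reduction leading to~\eqref{eq:sophi_sys} forces the pair $(\hat{A}^i,\hat{Y}^i)$ to solve~\eqref{eq:sophi_sys}. Writing $\kappa^i := \beta/(1-(1-\beta)e^{-\delta\tau}\hat{\xi}^i)$, the identity $\hat{A}^i = e^{-\delta\tau}\Phi(\kappa^i \hat{A}^i)$ together with the contraction statement in Proposition~\ref{lemm_revise} identifies $\hat{A}^i = \theta^*(\kappa^i)$; hence $A_{\so}^i = \kappa^i \theta^*(\kappa^i)$ and $\hat{Y}^i \in \argmax_{Y\in\cY}\bE[F(Y;A_{\so}^i)]$. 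In particular $\hat{\xi}^i \in G(\hat{\xi}^i)$ by the definition~\eqref{eq:g_main} of $G$.

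Second, because $A_{\my} = \theta^*(0) \leq 0$, parts~2 and~3 of Theorem~\ref{thm:G_fp} guarantee that $G$ has exactly one fixed point. This forces $\hat{\xi}^1 = \hat{\xi}^2 =: \hat{\xi}$, and in turn $\kappa^1 = \kappa^2$ and $A_{\so}^1 = A_{\so}^2 =: A_{\so}$, so both $\hat{Y}^1$ and $\hat{Y}^2$ maximize the common one-period problem $\sup_{Y\in\cY}\bE[F(Y;A_{\so})]$.

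Third, I would appeal to Proposition~\ref{prop:uniqueness}: provided $A_{\so} \neq \underline{\theta}$, the maximizer is unique up to a $\bP$-null set, so $\hat{Y}^1 = \hat{Y}^2$ almost surely. The subcase $A_{\my} = 0$ is immediate from Proposition~\ref{lemm_revise}(1), which gives $\theta^*(\kappa)\equiv 0$ and hence $A_{\so}=0\neq\underline{\theta}$. The subcase $A_{\my}<0$ requires ruling out the knife-edge value $A_{\so}=\underline{\theta}$, and this is the main obstacle I expect: one would track the range of $h(\kappa):=\kappa\theta^*(\kappa)$ using the strict-monotonicity/plateau information in Proposition~\ref{lemm_revise}(3)--(4) and combine it with the consistency equation $\kappa = \beta/(1-(1-\beta)e^{-\delta\tau}\hat{\xi})$ imposed by the equilibrium to exclude this equality. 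Finally I would transfer $\hat{Y}^1 = \hat{Y}^2$ a.s.\ to each subsequent period by reapplying Proposition~\ref{prop:uniqueness} to the $\cF^{T_{n-1}}_{T_n}$-measurable counterpart of the auxiliary problem generated by the equilibrium condition at $T_{n-1}$; by time-homogeneity its unique maximizer takes the form $y(Z_{T_n}/Z_{T_{n-1}};A_{\so})$, yielding $Y^1_n = Y^2_n$ almost surely for every $n\in\bN$.
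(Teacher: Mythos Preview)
Your overall architecture matches the paper's: extract $(\hat{A}^i,\hat{Y}^i)$ from each periodic equilibrium, recognize $\hat{\xi}^i=\bE[(\hat{Y}^i)^\alpha]$ as a fixed point of $G$, use Theorem~\ref{thm:G_fp} to collapse $\hat{\xi}^1=\hat{\xi}^2$ and hence $A_{\so}^1=A_{\so}^2$, and then invoke Proposition~\ref{prop:uniqueness} for the one-period maximizer. The treatment of $A_\my=0$ is also the same.

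The gap is in the knife-edge subcase $A_\my<0$, $A_{\so}=\underline{\theta}$. You propose to \emph{exclude} this equality via the monotonicity structure of $\kappa\mapsto\kappa\theta^*(\kappa)$ from Proposition~\ref{lemm_revise}(3)--(4). This will not work: the equality $A_{\so}=\underline{\theta}$ corresponds exactly to $\hat{\xi}=\underline{\xi}$ in the notation of Proposition~\ref{prop:G_properties}, and part~3(c) there shows that $G(\underline{\xi})=[0,\underline{H}]$. Whenever the parameters are such that $0\leq\underline{\xi}\leq\underline{H}$, the unique fixed point of $G$ is precisely $\underline{\xi}$, so the knife-edge value is realized by the equilibrium and cannot be ruled out. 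Nothing in Proposition~\ref{lemm_revise} prevents $\kappa(\hat{\xi})$ from landing on the level set $\{\kappa:\kappa\theta^*(\kappa)=\underline{\theta}\}$.

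The paper does not try to exclude this case; it handles it directly. When $A_{\so}=\underline{\theta}$, Proposition~\ref{prop:uniqueness} says every maximizer is $\{0,\gamma(1+k^{-1/(1-\alpha)})\}$-valued, and the common moment constraint $\bE[(\hat{Y}^i)^\alpha]=\hat{\xi}$ pins down the \emph{law} of $\hat{Y}^i$ uniquely. To upgrade equality in law to almost-sure equality, the paper invokes the Hardy--Littlewood rearrangement inequality to conclude that each $\hat{Y}^i$ must equal $Q_{\hat{Y}^i}(1-L(Z_\tau))$, i.e.\ the anti-comonotone coupling with the pricing kernel, whence $\hat{Y}^1=\hat{Y}^2$ a.s. Your plan needs this additional ingredient (or an equivalent one) to close the argument; the monotonicity information in Proposition~\ref{lemm_revise} alone is insufficient.
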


The consequence of Proposition \ref{prop:unique} is that the subgame perfect equilibrium portfolio is unique (up to $\bP$-null set) within the class of all periodic portfolios, at least under the assumption that $A_{\my}\leq 0$. Despite not being able to prove this formally, numerical evidence seems to suggest that Proposition \ref{prop:unique} also holds in the case of $A_\my>0$. Formal verification of this conjecture is left as a future work to be done.

\begin{rem}
Recall the constants arising in the characterization of the optimal/equilibrium strategies for different types of agent are such that
\begin{align*}
	A_{\my}=\theta^*(0),\quad A_{\expo}=\theta^*(1),\quad A_{\pre}=e^{-\delta\tau}\Phi(\beta\theta^*(1)),\quad  	A_{\so}=\frac{\beta }{1-(1-\beta)e^{-\delta\tau}\hat{\xi}}\theta^*\left(\frac{\beta }{1-(1-\beta)e^{-\delta\tau}\hat{\xi}}\right),
\end{align*}
In other words, the value functions of pre-committing, completely myopic and sophisticated agent with present bias, as well as agent with exponential discounting preference, are all connected to the function $\theta^*(\cdot)$ introduced in Proposition \ref{lemm_revise}. This observation will help us establish some comparative statics in the next section. For a sophisticated agent, the value of $\hat{\xi}$ (the fixed point of $G$ where $G$ is defined in Theorem \ref{thm:G_fp}) may not be unique in general. In such case, we interpret $A_{\so}$ using the above expression under a fixed choice of $\hat{\xi}$.
\label{rem:val_theta}
\end{rem}

\section{Discussion of the main results and comparative statics}
\label{sect:discuss}

To summarize the theoretical findings in Section \ref{sect:mainresults}, different types of agent will trade in a way such that the periodic gross returns of the portfolio follow some independent distribution given by $y(Z_\tau;\theta)$ which solves an auxiliary problem in form of \eqref{eq:axu}. Here, $Z_\tau$ represents the realized change in the state of the world within one period, and the choice of $\theta$ depends on the nature of the agent. An exponential discounter will choose $\theta=A_{\expo}$ in all periods (\cite{TsZh21}); a pre-committing agent will choose $\theta=\beta A_{\expo}$ in the first period and then $\theta=A_{\expo}$ in all the subsequent periods (Theorem \ref{thm:precomm}); a naive agent will take $\theta=\beta A_{\expo}$ in all periods (Corollary \ref{cor:naivesol}); while a sophisticated agent will pick $\theta=A_{\so}$ or equivalently $\theta=\hat{\beta}A_{\expo}$ in all periods (Theorem \ref{thm:soph}).

As discussed in \cite{TsZh21} (and see Lemma \ref{lem:auxsol} as well), different values of $\theta$ will result in different probabilistic behaviors of the optimizer $Y^*=y(Z_\tau;\theta)$. For positive value of $\theta$, the optimal gross return is always strictly positive without atom attached to zero while the upside unbounded. However, for mildly negative value of $\theta\in[-1,0]$, $Y^*$ will have a probability mass at zero, i.e. there is a chance that the portfolio will be ruined at the end of a period. This downside risk is typically associated with excessive leverage when the portfolio is experiencing losses. When $\theta$ becomes moderately negative such that it is in the range of $\theta\in(\underline{\theta},-1)$, the support of $Y^*$ now also has an upper bound which means the growth potential of the portfolio becomes capped, in addition to the bankruptcy possibility signified by an atom at zero. In this case, the agent not only engages in excessive risk taking in bad states of the world but they also disinvest in the good states of the world. The case of highly negative value of $\theta$ with $\theta<\underline{\theta}$ is a mathematically degenerate one. The agent will intentionally go bankrupt via a ``suicidal strategy'' (e.g. a doubling-down strategy) as to avoid the terminal penalty caused by a very negative $\theta$. Informally, we can say that a smaller value of $\theta$ leads to a payoff profile $y(Z_\tau;\theta)$ with ``more negatively skewed risk'', in the sense that the agent tends to take more (less) risk when they are losing (winning).

Why does the value of (the endogenized) $\theta$ affect the risk profile of the portfolio so drastically? Detailed economic explanations can be found in \cite{TsZh21}. But to highlight the main ideas briefly, it is due to the trade-off between maximizing the reward from the current period and the continuation value such that the agent's incentive is summarized by the effective utility function in \eqref{eq:F} which we restate here as $$U(y-\gamma)+\theta y^{\alpha}.$$ If $\theta$ is large and positive, then the continuation value (per unit capital available at the start of the next period) is large and positive such that the agent will seek a less risky strategy which prioritizes value preservation and guarantees solvency. If $\theta$ is close to zero, then the agent does not care whether the portfolio goes bust or not in the current period because the continuation value from the future rewards are insignificant. With their S-shaped utility function $U$, they are then incentivized to gamble aggressively when falling behind (which exposes the portfolio to bankruptcy risk) and off-load risk when being ahead without taking the long-term prospect into consideration, resulting in a more negatively skewed strategy. In the more extreme case where $\theta$ is moderately negative, the continuation value is negative which discourages the portfolio from growing too much. It is due to the phenomenon of underperformance aversion where the agent intentionally limits the portfolio growth to avoid setting up a higher absolute benchmark to be adopted in subsequent evaluations. This will result in a even more negatively skewed periodic payoff as the upside growth is now bounded.

Since the endogenized value of $\theta$ is heavily influencing the risk profile of the optimal portfolio, it is useful to establish the ranking of these values across different types of the agents. 
\begin{prop}[Comparison of risk profiles across agent's types]
Recall the definitions of $A_{\my}$, $A_{\expo}$ and $A_{\so}$ in Remark \ref{rem:val_theta}. In the case of $A_{\my}\neq 0$, define $\hat{\beta}:=\frac{A_{\so}}{A_{\expo}}$. We have for any given $\beta\in(0,1)$ that:
\begin{enumerate}
	\item If $A_\my=0$, then $\beta A_\expo=A_\expo = A_{\so}=0$.
	\item If $A_\my>0$, then $0<A_{\so}< \beta A_\expo< A_\expo$. In particular, $\hat{\beta}\in(0,\beta)$.
	\item If $A_\my<0$, then $A_\expo< A_{\so}\leq \beta A_\expo<0$. In particular, $\hat{\beta}\in[\beta,1)$. If we further have $A_{\expo}\geq \underline{\theta}$, then $A_{\so}<\beta A_{\expo}$ and $\hat{\beta}>\beta$.
\end{enumerate}
\label{prop:type_cs}
\end{prop}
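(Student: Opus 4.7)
My plan is to reduce everything to the fixed-point function $\theta^*(\kappa)$ of Proposition \ref{lemm_revise}, so that a single inequality together with its equality clause drives the whole argument. First I would introduce the shorthand
\[
\kappa_\so:=\frac{\beta}{1-(1-\beta)e^{-\delta\tau}\hat{\xi}},
\]
so that by construction $A_\so=\kappa_\so\theta^*(\kappa_\so)$ while $\hat Y$ is a maximizer of $\Phi(\kappa_\so\theta^*(\kappa_\so))$ with $\mathbb{E}[\hat Y^\alpha]=\hat\xi$. Assumption~\ref{eq:assump} gives $e^{-\delta\tau}\hat\xi\le e^{-(\delta-h)\tau}<1$, from which $\kappa_\so\in[\beta,1)$ is immediate, with $\kappa_\so=\beta$ iff $\hat\xi=0$. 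Also $A_\expo=\theta^*(1)$ and $A_\my=\theta^*(0)$ by Remark~\ref{rem:val_theta}.

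The backbone is to apply inequality~\eqref{eq:theta_kappa_ineq} of Proposition~\ref{lemm_revise} with $\kappa_1=1$, $\kappa_2=\kappa_\so$ and the distinguished maximizer $Y^*(\kappa_\so)=\hat Y$, yielding
\[
\theta^*(\kappa_\so)\le \frac{1-e^{-\delta\tau}\hat\xi}{1-\kappa_\so e^{-\delta\tau}\hat\xi}\,\theta^*(1),
\]
and to observe the algebraic identity
\[
\frac{\kappa_\so(1-e^{-\delta\tau}\hat\xi)}{1-\kappa_\so e^{-\delta\tau}\hat\xi}=\beta,
\]
which follows from $1-\kappa_\so e^{-\delta\tau}\hat\xi=(1-e^{-\delta\tau}\hat\xi)/(1-(1-\beta)e^{-\delta\tau}\hat\xi)$. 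Multiplying the previous inequality by $\kappa_\so>0$ then produces the universal bound $A_\so\le\beta A_\expo$, irrespective of the sign of $\theta^*(1)$.

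It remains to split by the sign of $A_\my$. If $A_\my=0$, part~(1) of Proposition~\ref{lemm_revise} gives $\theta^*\equiv 0$ and all four quantities vanish. If $A_\my>0$, part~(2) gives $\theta^*$ strictly increasing and positive, so $A_\so>0$ and $\beta A_\expo<A_\expo$; the equality clause in part~(2) together with $\kappa_\so<1$ upgrades the universal bound to the strict $A_\so<\beta A_\expo$, and dividing by $A_\expo>0$ yields $\hat\beta\in(0,\beta)$. If $A_\my<0$, parts~(3)-(4) give $\theta^*$ non-increasing and negative, so $A_\so<0$; the monotonicity $\theta^*(\kappa_\so)\ge\theta^*(1)=A_\expo$ combined with $\kappa_\so<1$ and $A_\expo<0$ yields $A_\so=\kappa_\so\theta^*(\kappa_\so)\ge\kappa_\so A_\expo>A_\expo$, hence $A_\expo<A_\so$. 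Combined with the universal bound, dividing $A_\expo<A_\so\le\beta A_\expo$ by $A_\expo<0$ flips signs and gives $\hat\beta\in[\beta,1)$. Under the additional hypothesis $A_\expo\ge\underline\theta$ we are in part~(3), whose equality clause with $\kappa_\so<1$ promotes the bound to $A_\so<\beta A_\expo$ and hence $\hat\beta>\beta$.

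The only subtle point I anticipate is the borderline regime of part~(4) of Proposition~\ref{lemm_revise} (which arises exactly when $A_\expo<\underline\theta$): there the maximizer $\hat Y$ is genuinely non-unique, and the choice $\hat Y\equiv 0$ is admissible and forces $\kappa_\so=\beta$, saturating the universal bound as $A_\so=\beta A_\expo$. This is precisely the reason the statement only asserts $A_\so\le\beta A_\expo$ in general and must invoke the extra hypothesis $A_\expo\ge\underline\theta$ to stay within part~(3) and obtain the strict version.
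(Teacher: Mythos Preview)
Your proof is correct and follows essentially the same route as the paper: both reduce to Proposition~\ref{lemm_revise} and apply inequality~\eqref{eq:theta_kappa_ineq} with $\kappa_1=1$, $\kappa_2=\kappa_\so$ and $Y^*(\kappa_2)=\hat Y$, then invoke the equality clauses to obtain strictness. Your presentation is slightly more explicit (you write out the algebraic identity $\kappa_\so(1-e^{-\delta\tau}\hat\xi)/(1-\kappa_\so e^{-\delta\tau}\hat\xi)=\beta$ that the paper compresses into a single step), and your derivation of the strict inequality $A_\expo<A_\so$ in part~(3) directly from $\kappa_\so<1$ and the monotonicity of $\theta^*$ is marginally cleaner than the paper's appeal to the equality clause, but the substance is the same.
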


\begin{proof}

	The results mostly follow from Proposition \ref{lemm_revise} and Remark \ref{rem:val_theta}. Part (1) is trivial. For part (2), $A_{\my}>0$ implies $A_{\expo}=\theta^*(1)>0$ and hence we obviously have $\beta A_{\expo}< A_{\expo}$ under $\beta<1$. Meanwhile,
	\begin{align}
		A_{\so}=\frac{\beta }{1-(1-\beta)e^{-\delta\tau}\hat{\xi}}\theta^*\left(\frac{\beta }{1-(1-\beta)e^{-\delta\tau}\hat{\xi}}\right)\leq \beta \theta^*(1)=\beta A_{\expo}
	\label{eq:Aso_less_Aexp}
	\end{align}
	by \eqref{eq:theta_kappa_ineq} under the choice of $\kappa_1=1$ and $\kappa_2=\frac{\beta }{1-(1-\beta)e^{-\delta\tau}\hat{\xi}}$, and recall that $\hat{\xi}=\bE[\hat{Y}^\alpha]=\bE[(Y^*(\kappa_2))^\alpha]$ by the construction of $\hat{Y}$ in the proof of Theorem \ref{thm:soph}. Moreover, by part (1) of Proposition \ref{lemm_revise}, equality holds in \eqref{eq:Aso_less_Aexp} if and only if $\kappa_1=\kappa_2$ or equivalently
	\begin{align}
		1=\frac{\beta }{1-(1-\beta)e^{-\delta\tau}\hat{\xi}}
		\label{eq:equality}
	\end{align}
	which would imply $\beta=1$, i.e. a contradiction. Hence we must have $A_{\so}<\beta A_{\expo}$.
	
	For part (3), we have $\theta^*(\cdot)<0$ under $A_{\my}<0$. Hence by similar arguments used in part (2) of the proof and the strictly decreasing property of $\kappa\mapsto \kappa\theta^*(\kappa)$, we have
	\begin{align*}
		A_{\expo}=\theta^*(1)\leq \frac{\beta }{1-(1-\beta)e^{-\delta\tau}\hat{\xi}}\theta^*\left(\frac{\beta }{1-(1-\beta)e^{-\delta\tau}\hat{\xi}}\right)=A_{\so}\leq \beta \theta^*(1)=\beta A_{\expo}.
	\end{align*}
	This first inequality becomes equality if and only if \eqref{eq:equality} holds which again would lead to a contradiction of $\beta=1$, and hence $A_{\expo}<A_{\so}$. If we further have $A_{\expo}=\theta^*(1)\geq \underline{\theta}$, then part (3) of Proposition \ref{lemm_revise} together with the same arguments used in part (2) of this proof would allow us to conclude $A_{\so}<\beta A_{\expo}$. 
\end{proof}

\begin{rem}
Proposition \ref{prop:type_cs} does not require uniqueness of the equilibrium periodic strategy in Theorem \ref{thm:soph}.
\end{rem}

In the following discussion, we assume $\beta\in(0,1)$. In the case that the investment prospect is valuable such that $A_\my>0$, we have $\beta A_\expo< A_\expo$. The pre-committing agent will take more negatively skewed risk with $\theta=\beta A_{\expo}$  in the first period. However, they have a plan to reduce the risk level to $\theta=A_{\expo}$ from the second period onward, which is the same level of risk a rational exponential discounter would have taken right from the beginning. But if the agent turns out to be naive who cannot commit to a planned strategy, then they will be taking the risk of $\theta=\beta A_{\expo}$ in all periods. Economically, the agent is indefinitely delaying the action of de-risking (i.e. the switch from the more negatively skewed risk of $\theta=\beta A_\expo$ into a safer one with $\theta=A_\expo$). This is an example of procrastination induced by time-inconsistency, which could potentially be costly in terms of welfare if undertaking negatively skewed risk is considered socially undesirable. For further examples on the linkage among procrastination, hyperbolic discounting and social welfare losses, see \cite{akerlof1991} for example. 

Economically, why will a pre-committing agent (in the first period) and a naive agent take more risk than an exponential discounter when $A_{\my}>0$? Recall that the main economic mechanism of the periodic portfolio selection model is the trade-off between the reward from the current period and those from the future periods. Both the pre-committing agent and the naive agent think they will behave like an exponential discounter starting from the second period. If they are time-rational with $\beta=1$, then the trade-off is simply governed by $U(Y-\gamma)$ versus $A_\expo Y^\alpha$. But if they suffer from present bias with $\beta< 1$, then the weights across these components are distorted and the trade-off becomes $U(Y-\gamma)$ versus $\beta A_\expo Y^\alpha$ instead. Present bias makes the agent impatient over short term outcomes and hence the continuation component now carries a smaller decision weight due to the multiplication of a factor of $\beta< 1$. If $A_{\my}>0\iff A_{\expo}>0$, the contribution from the continuation value component $\beta A_{\expo}$ decreases relative to the rational benchmark $A_\expo$. Hence the agent will care less about the long term performance of the portfolio and in turn are inclined to take more negatively skewed risk. 

The opposite phenomenon will occur if $A_{\my}<0\iff A_{\expo}<0$. The contribution from the continuation value $\beta A_{\expo}$ is larger (i.e. less negative) relative to the rational benchmark $A_\expo$. Here, the present bias induces the agent to focus more on the outcome in the current period and worry less about potential penalties due to underperformance in the future. Consequently, the agent is more willing to take risk in a way that would result in better upside potential of the portfolio. Indeed, if the model parameters are such that $A_\expo< -1< \beta A_\expo<0$, then the corresponding optimal gross portfolio returns of an exponential discounter and a myopic (naive/pre-committing in the first period) agent can have drastically different supports where the former is capped from the above while the latter enjoys an unlimited upside. See Lemma \ref{lem:auxsol}.

In either case of $A_\my\gtrless 0$, the sophisticated agent always takes more negatively skewed risk than the naive agent as revealed by $A_{\so}< \beta A_{\expo}$. From the perspective at time zero, all three types of agent agree that they should behave as an exponential discounter from the second period onward. Both the pre-committing agent and the naive agent think they will be able to adhere to this plan, and therefore the present-bias-adjusted continuation value for these agents are $\beta A_{\expo}$. However, the sophisticated agent anticipates in advance that they will suffer from time-inconsistency and they will sub-optimally (from the perspective of today) deviate from the exponential discounter's strategy in the future. They know that the sub-optimal strategy adopted by their future-selves will result in a lower net-present-value of the continuation component, say $\tilde{A}$, relative to the exponential discounter's benchmark of $A_{\expo}$. Consequently, the optimization problem faced by the current-self of the sophisticated agent involves a present-bias-adjusted continuation value component of $\beta \tilde{A}$ which is smaller than $\beta A_{\expo}$. The correct choice of $\tilde{A}$ is determined by the equilibrium condition among all the incarnations of the agent in the sequential game. Simply speaking, the fact that a sophisticated agent is aware of their time-inconsistency makes them more ``pessimistic'' over the value of the future rewards relative to a pre-committing agent (in the first period) and a naive agent, and as a result the sophisticated agent is willing to adopt a strategy with higher negatively skewed risk.

As an alternative perspective, a sophisticated agent can be regarded as a naive agent with an adjusted present bias parameter $\hat{\beta}=A_{\so}/A_{\expo}$. Proposition \ref{prop:type_cs} then suggests that $\hat{\beta}< \beta$ (resp. $\hat{\beta}> \beta$) when $A_{\my}>0$ (resp. $A_{\my}<0$) under which the sophisticated agent is a version of a naive agent with stronger (resp. weaker) present bias. In other words, a sophisticated agent discounts long-dated positive (negative) outcomes more (less) heavily when the investment prospect is (un)favorable. This is consistent with the idea that the sophisticated agent has a more pessimistic valuation of the future outcomes relative to a naive agent.

The situation of $A_\my=0$ is a theoretically interesting corner case under which the strategies adopted by the three agents become indistinguishable and they degenerate to the one adopted by a completely myopic, one-period agent. This condition does not depend on the time preference parameters of the agent. In this case, the value of a one-period investment game is neutral to the agent, and its value remains neutral even if the agent can play this game repeatedly. Applying a sequence of quasi-hyperbolic discount factors to these neutral outcomes does not make them more or less attractive to the agent.

\begin{figure}[!htbp]
	\captionsetup[subfigure]{width=0.5\textwidth}
	\centering
	\subcaptionbox{$A_\my>0$.\label{fig:diff_agents_positive}}{\includegraphics[scale =0.475] {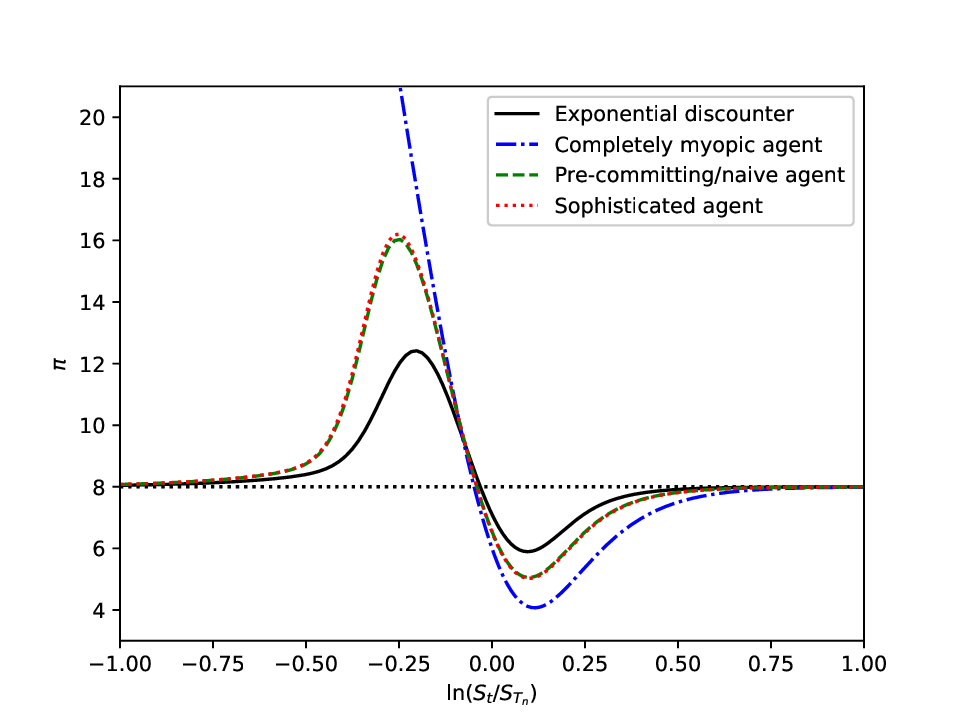}}
	\subcaptionbox{$A_\my<0$.\label{fig:diff_agents_negative}}{\includegraphics[scale =0.475] {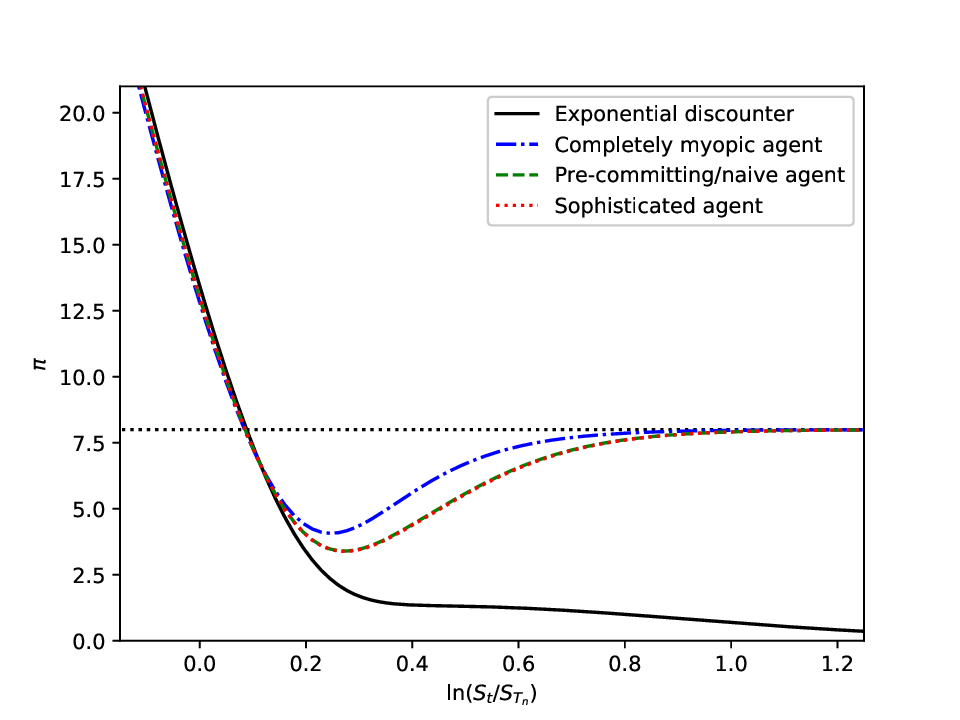}}
	
	\caption{The optimal investment level (fraction of wealth invested in the risky asset) as a function of log-return of the underlying stock at a fixed time for different types of agents. The horizontal dotted line indicates the Merton ratio $(\mu-r)/(\sigma^2(1-\alpha))$.  An exponential discounter and a completely myopic agent correspond to the special cases of $\beta=1$ and $\beta=0$ respectively. We set $\beta=0.4$ for the remaining types of agent. We assume the pre-committing agent is currently in the first period of the investment horizon such that their strategy is identical to that of a naive agent, and recall that a pre-committing agent behaves as an exponential discounter starting from the second period. Other base parameters used are: $\alpha=0.5$, $k=1.25$, $\gamma=1$ (for the case of $A_\my>0$), $\gamma=2.5$ (for the case of $A_\my<0$), $\delta=0.3$, $\beta=0.4$, $\mu=0.1$, $\sigma=0.15$, $r=0.01$, $\tau=1$, $t=T_{n}+0.5\tau$.}
	\label{fig:diff_agents}
\end{figure}

The discussion above can be visually summarized by Figure \ref{fig:diff_agents}, which illustrates the risk taking level of different types of the agent as the state of the world varies. We plot the optimal proportion of wealth invested in the risky asset as a function of the running periodic log-return of the risky asset. See Proposition EC.3 of \cite{TsZh21} for the expression of this quantity. In Figure \ref{fig:diff_agents_positive} where $A_\my>0$, the exponential discounter takes the least amount of negatively skewed risk in that the investment level is the lowest (highest) in the bad (good) states of the world among all types of agent. The investment level of the pre-committing/naive agent is numerically close to that of the sophisticated agent, but a closer inspection of the figure can still reveal that the latter invests less (more) during a bearish (bullish) market leading to less negatively skewed risk taken overall. 

When $A_\my<0$, Figure \ref{fig:diff_agents_negative} shows that all types of agents will engage excessive leverage without any upper bound on the investment level in the bad states of the world. It is due to Proposition \ref{prop:type_cs} that all $A_\expo$, $\beta A_\expo$ and $A_{\so}$ are negative when $A_\my<0$. Hence in all cases the periodic gross return variable has an atom at zero, which is associated with unboundedly large risk taking during downturns. While the investment levels of different types of agent are very close in the bad states of the world, the exponential discounter indeed invests the most when the risky stock is declining in value, followed by the pre-committing/naive agent, then the sophisticated agent and finally the completely myopic agent. The ranking of the investment levels is clearer on the positive return regime, which is opposite to that in the negative return regime. In this case, the ``rational'' exponential discounter actually takes way more negatively skewed risk to their present-biased counterparts.

Note that in either case of $A_\my>0$ or $A_\my<0$, the sophisticated agent invests more (less) than a naive agent in the bad (good) states, albeit the small numerical difference. If a portfolio strategy with negative skew risk is deemed to be economically undesirable (e.g. a social planner might want to advocate a long-term, steady financial growth while minimizing insolvency risk within the asset management sector), then sophisticated thinking is indeed more detrimental than naivety from a welfare viewpoint.

Under certain conditions, we can also establish how the risk profile of the strategy varies with $\beta$ within the same type of the agent.
\begin{prop}[Comparative statics with respect to myopia level]
View $A_{\so}=A_{\so}(\beta)$ as a quantity depending on $\beta\in(0,1]$:
	\begin{enumerate}
		\item If $A_\my>0$ and the fixed point of $G$ defined in \eqref{eq:g_main} is unique, then $\beta A_\expo$ and $A_{\so}(\beta)$ are both non-decreasing in $\beta$.
		\item If $A_\my<0$, then $\beta A_\expo$ and $A_{\so}(\beta)$ are both non-increasing in $\beta$.
	\end{enumerate}
\label{prop:myopia_cs}
\end{prop}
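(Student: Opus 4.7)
The plan is as follows. The claim for $\beta A_{\expo}$ is immediate: by Proposition \ref{lemm_revise}, $A_{\expo}=\theta^*(1)$ has the same sign as $A_{\my}=\theta^*(0)$, hence $\beta\mapsto\beta A_{\expo}$ is non-decreasing when $A_{\my}>0$ and non-increasing when $A_{\my}<0$. The substantive claim concerns $A_{\so}(\beta)$. I will write $A_{\so}(\beta)=\kappa(\beta,\hat{\xi}(\beta))\,\theta^*(\kappa(\beta,\hat{\xi}(\beta)))$, where $\kappa(\beta,\xi):=\beta/(1-(1-\beta)e^{-\delta\tau}\xi)$ and $\hat{\xi}(\beta)$ is a fixed point of the (set-valued) map $G(\cdot;\beta)$ of Theorem \ref{thm:G_fp}; then establish monotonicity of $\hat{\xi}(\beta)$ in $\beta$; and finally deduce monotonicity of $A_{\so}(\beta)$ by a contradiction argument exploiting the self-consistency identity $\hat{\xi}(\beta)=\xi^{*}(A_{\so}(\beta))$, where $\xi^{*}(\theta):=\bE[y(Z_\tau;\theta)^\alpha]$.

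First I would record three elementary monotonicities. (a) Direct differentiation shows that $\kappa(\beta,\xi)$ is strictly increasing in $\beta\in(0,1]$ and strictly increasing in $\xi$ for $\beta\in(0,1)$, using $e^{-\delta\tau}\xi\leq e^{-(\delta-h)\tau}<1$ from Assumption \eqref{eq:assump}. (b) A standard envelope-type swap (compare $\bE[F(Y_{i}^*;\theta_j)]$ across the two optimizers and add) shows that $\xi^{*}(\theta)$ is non-decreasing in $\theta$ whenever the maximizer is unique. (c) From Proposition \ref{lemm_revise}, $\kappa\mapsto\kappa\theta^*(\kappa)$ is strictly increasing on $[0,1]$ when $A_{\my}>0$ and strictly decreasing when $A_{\my}<0$ (the decreasing case covers both (3) and (4) of that proposition, the latter because $\theta^*$ is a strictly negative constant on $[\underline{\kappa},1]$). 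Combining (a)--(c), the composition $G(\xi;\beta)=\xi^{*}(\kappa(\beta,\xi)\theta^*(\kappa(\beta,\xi)))$ is continuous (via Corollary \ref{cor:cont_fundamental_quantities}) and non-decreasing in both arguments in Case 1, and non-increasing in both arguments in Case 2.

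Next I would establish monotonicity of $\hat{\xi}(\beta)$. In Case 1, under the uniqueness hypothesis, the continuous non-decreasing map $G(\cdot;\beta_2)$ has a unique fixed point; for $\beta_1<\beta_2$, the inequality $G(\hat{\xi}(\beta_1);\beta_2)\geq G(\hat{\xi}(\beta_1);\beta_1)=\hat{\xi}(\beta_1)$ combined with monotonicity generates a non-decreasing iterated sequence bounded above by $e^{h\tau}$, whose limit is a fixed point of $G(\cdot;\beta_2)$ and, by uniqueness, equals $\hat{\xi}(\beta_2)\geq\hat{\xi}(\beta_1)$. In Case 2, uniqueness of $\hat{\xi}(\beta)$ is automatic from Theorem \ref{thm:G_fp}(3); since $\phi(\xi;\beta):=G(\xi;\beta)-\xi$ is strictly decreasing in $\xi$ and non-increasing in $\beta$, the inequality $\phi(\hat{\xi}(\beta_1);\beta_2)\leq 0$ forces $\hat{\xi}(\beta_2)\leq\hat{\xi}(\beta_1)$.

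Finally I would argue by contradiction. In Case 1, if $A_{\so}(\beta_1)>A_{\so}(\beta_2)$ for some $\beta_1<\beta_2$, then $\hat{\xi}(\beta_i)=\xi^{*}(A_{\so}(\beta_i))$ and (b) give $\hat{\xi}(\beta_1)\geq\hat{\xi}(\beta_2)$; combined with the previous step's $\hat{\xi}(\beta_1)\leq\hat{\xi}(\beta_2)$, we obtain $\hat{\xi}(\beta_1)=\hat{\xi}(\beta_2)$, so that $\kappa(\beta_1,\hat{\xi}(\beta_1))<\kappa(\beta_2,\hat{\xi}(\beta_2))$ strictly by (a), and strict monotonicity in (c) gives $A_{\so}(\beta_1)<A_{\so}(\beta_2)$, a contradiction; Case 2 is symmetric. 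The main obstacle I anticipate is handling the set-valuedness and continuity of $G$ cleanly: in Case 1 this rests on the uniqueness hypothesis together with $A_{\so}(\beta,\xi)>0>\underline{\theta}$, which via Proposition \ref{prop:uniqueness} and Corollary \ref{cor:cont_fundamental_quantities} furnishes single-valuedness and continuity; in Case 2 one must also specify a measurable selection at the exceptional value $A_{\so}=\underline{\theta}$, but the strict inequalities in (a) and (c) ensure the final contradiction step is unaffected.
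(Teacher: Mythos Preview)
Your proposal is correct and follows the same overall strategy as the paper: first establish that $G(\xi;\beta)$ is monotone in $\beta$ (this is the paper's Lemma \ref{lem:G_beta_mono}), then deduce monotonicity of the fixed point $\hat{\xi}(\beta)$, and finally conclude monotonicity of $A_{\so}(\beta)=\kappa(\beta,\hat{\xi}(\beta))\,\theta^*(\kappa(\beta,\hat{\xi}(\beta)))$.

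Two points of comparison. In Case 1, your contradiction argument is unnecessary: once $\hat{\xi}(\beta)$ is shown to be non-decreasing, the paper simply notes that $\kappa(\beta,\xi)$ is non-decreasing in both arguments, so $\kappa(\beta,\hat{\xi}(\beta))$ is non-decreasing in $\beta$, and then $A_{\so}=\kappa\theta^*(\kappa)$ is non-decreasing by (c). This is shorter than your route through $\xi^{*}$. In Case 2, the paper argues that $\hat{\xi}(\beta_2)\leq\hat{\xi}(\beta_1)$ reads as $H(A_{\so}(\beta_2))\leq H(A_{\so}(\beta_1))$ and then infers $A_{\so}(\beta_2)\leq A_{\so}(\beta_1)$ from ``$H$ non-decreasing''; as stated, that inference is not airtight for a merely non-decreasing $H$. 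Your contradiction argument---forcing $\hat{\xi}(\beta_1)=\hat{\xi}(\beta_2)$ and then invoking the \emph{strict} monotonicity of $\kappa$ in $\beta$ together with the strict monotonicity of $\kappa\mapsto\kappa\theta^*(\kappa)$---handles this more carefully, and the same device patches the paper's argument.
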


\begin{figure}[!htbp]
	\captionsetup[subfigure]{width=0.5\textwidth}
	\centering
	\subcaptionbox{$A_\my>0$.\label{fig:val_multiplier_positive}}{\includegraphics[scale =0.475] {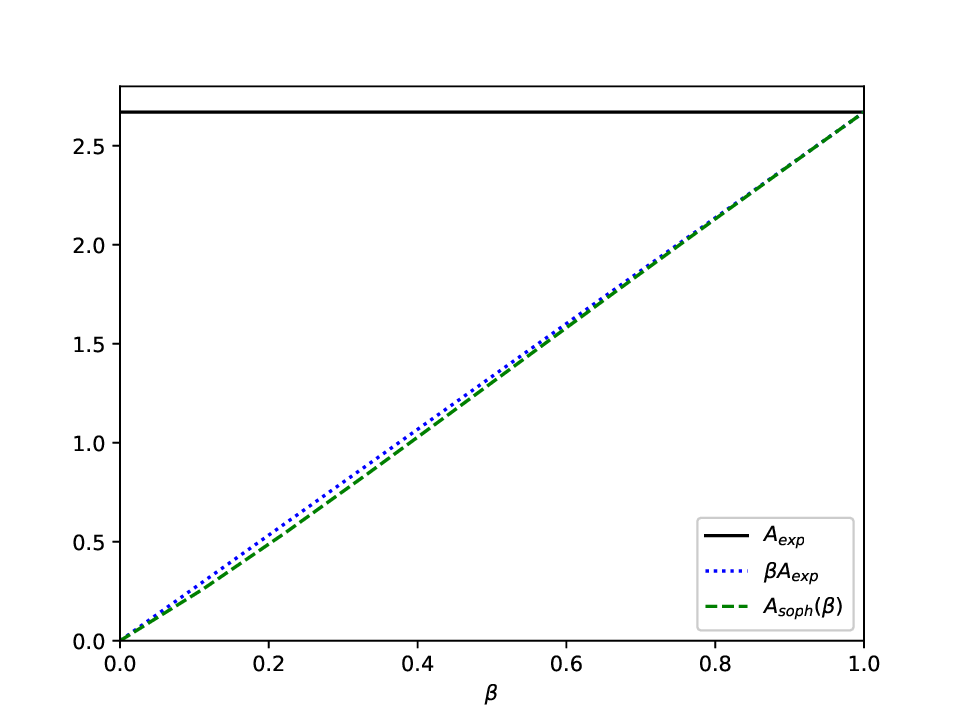}}
	\subcaptionbox{$A_\my<0$.\label{fig:val_multiplier_negative}}{\includegraphics[scale =0.475] {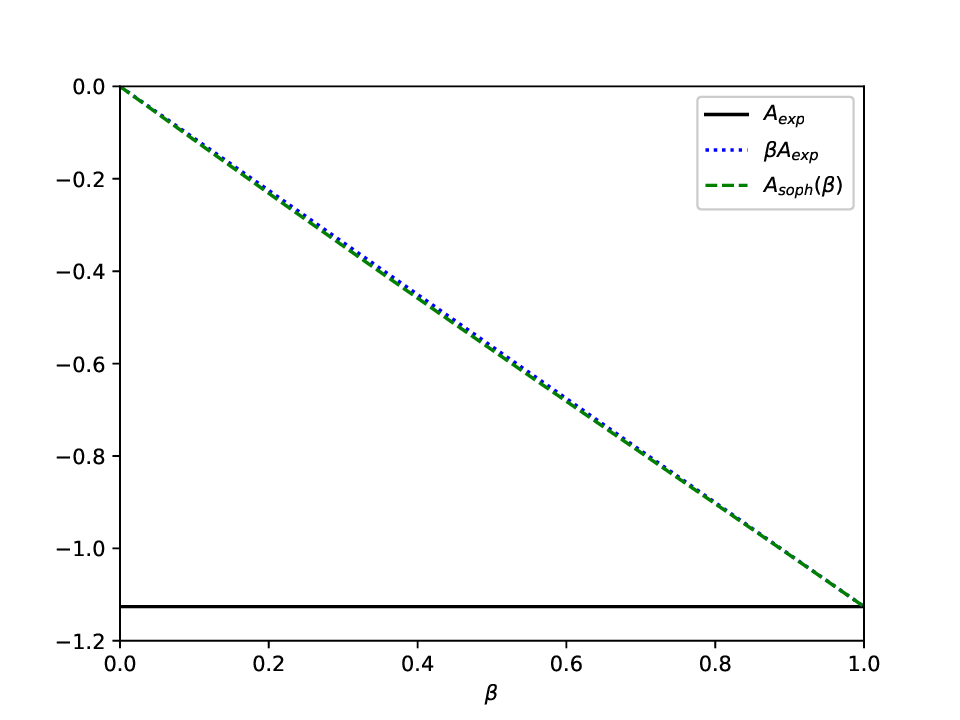}}
	
	\caption{Comparison of $A_{\expo}$, $\beta A_{\expo}$ and $A_{\so}$ as functions of $\beta$. Base parameters used are: $\alpha=0.5$, $k=1.25$, $\gamma=1$ (for the case of $A_\my>0$), $\gamma=2.5$ (for the case of $A_\my<0$), $\delta=0.3$, $\mu=0.1$, $\sigma=0.15$, $r=0.01$, $\tau=1$, $t=T_{n}+0.5\tau$.} 
	\label{fig:val_multiplier_beta}
\end{figure}

Figure \ref{fig:val_multiplier_beta} numerically compares $A_\expo$, $\beta A_\expo$ and $A_{\so}$ under different values of $\beta$. The qualitative behaviors of the plots agree with the theoretical statements shown in Proposition \ref{prop:type_cs} and \ref{prop:myopia_cs}. $\beta A_{\expo}$ and $A_{\so}$ are numerically very similar. This echoes the observation in Figure \ref{fig:diff_agents} that the optimal strategy of the naive and sophisticated agent are close to each other. This phenomenon seems to be holding under a wide range of model parameters, suggesting that an optimal naive strategy could be a reasonable approximation of an interpersonal equilibrium strategy of the sophisticated agent. 

If the investment prospect is favorable ($A_\my>0$), then stronger the present bias (smaller $\beta$), more negatively skewed risk will be taken by the (first-period) pre-committing, naive and sophisticated agent. The economic intuition is largely the same as before, where a stronger present bias generally induces the agent to put a relatively larger decision weight to the current reward which favors a more negatively skewed strategy. Otherwise when the investment prospect is poor ($A_\my<0$), a smaller $\beta$ makes the agent less concerned about the penalty embedded in the negative continuation value and in turn they are more willing to take a strategy that could yield a higher upside. Refer to Figure \ref{fig:compstat_beta} for some numerical examples of how the optimal investment level varies with $\beta$ for each type of the agent.

\begin{figure}[!htbp]
	\captionsetup[subfigure]{width=0.5\textwidth}
	\centering
	\subcaptionbox{$A_\my>0$.\label{fig:beta_precommnaive_positive}}{\includegraphics[scale =0.475] {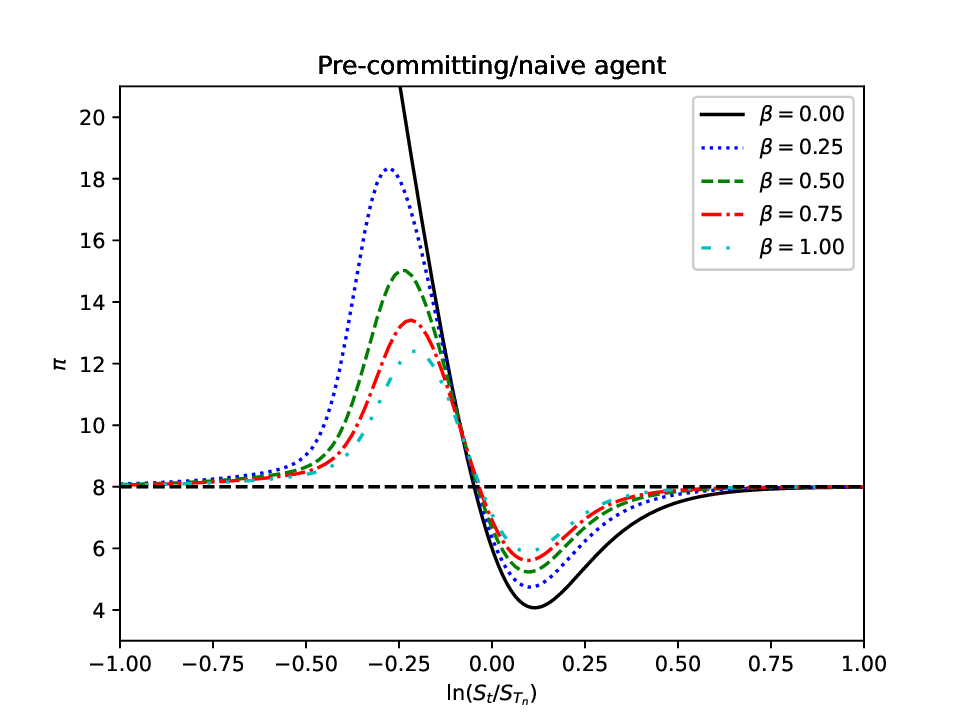}}
	\subcaptionbox{$A_\my>0$.\label{fig:beta_soph_positive}}{\includegraphics[scale =0.475] {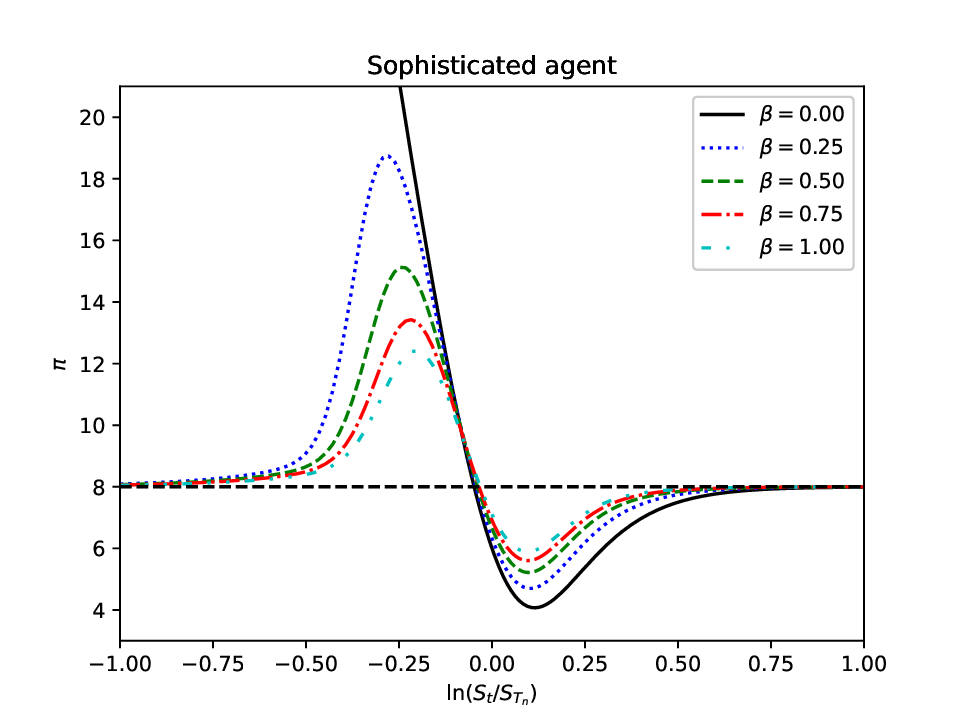}}
	\subcaptionbox{$A_\my<0$.\label{fig:beta_precommnaive_negative}}{\includegraphics[scale =0.475] {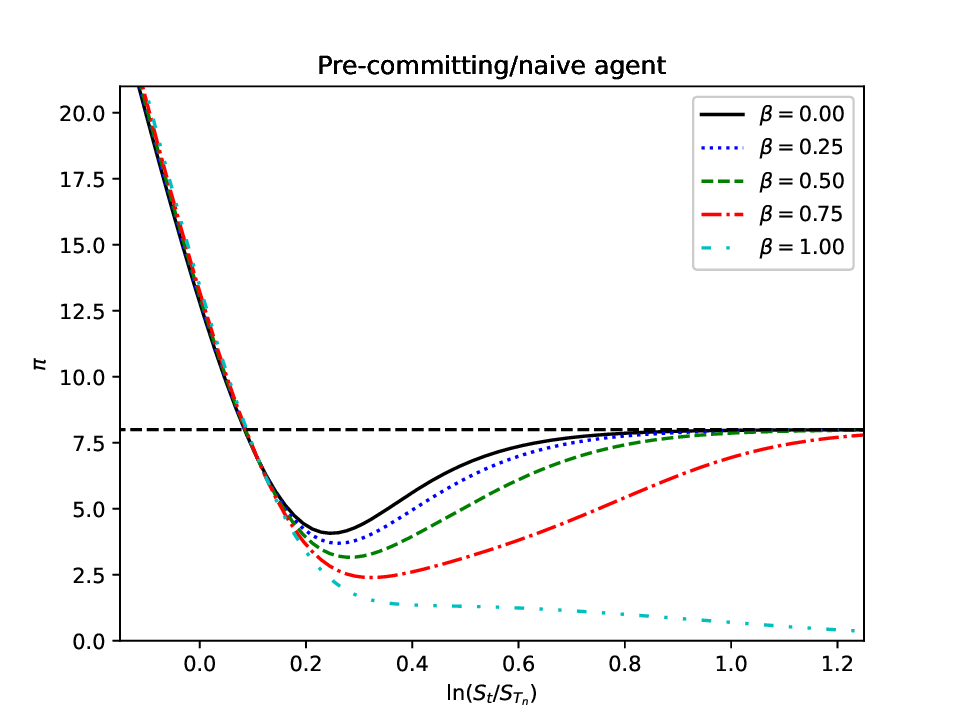}}
	\subcaptionbox{$A_\my<0$.\label{fig:beta_soph_negative}}{\includegraphics[scale =0.475] {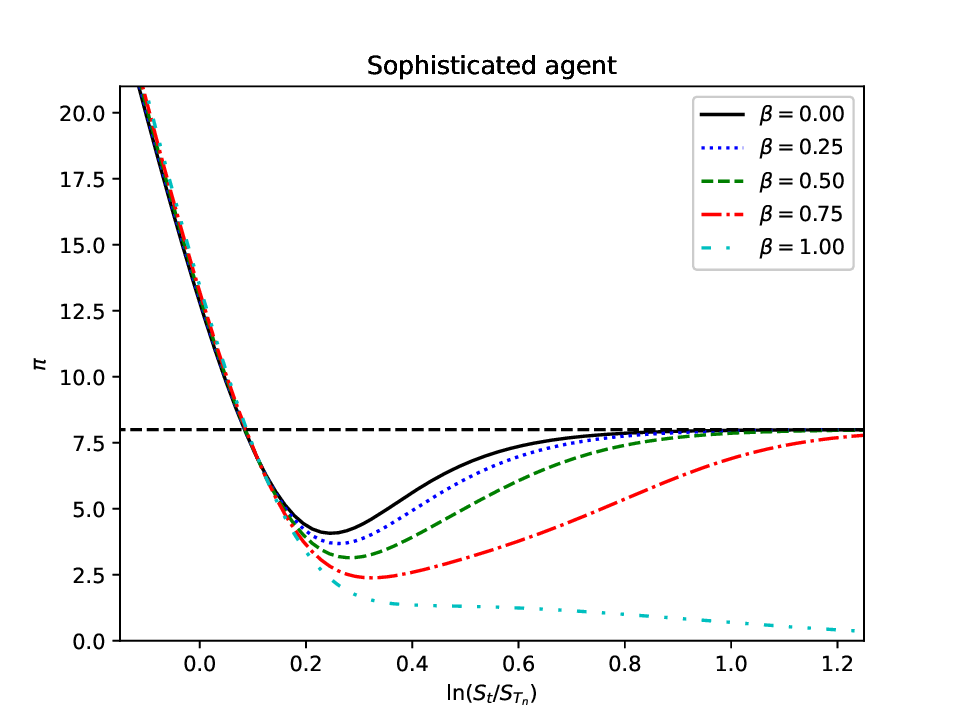}}
	
	\caption{The optimal investment level (fraction of wealth invested in the risky asset) as a function of log-return of the underlying stock at a fixed time under different values of $\beta$. The horizontal dotted line indicates the Merton ratio $(\mu-r)/(\sigma^2(1-\alpha))$. We assume the pre-committing agent is currently in the first period of the investment horizon such that their strategy is identical to that of a naive agent. Other base parameters used are: $\alpha=0.5$, $k=1.25$, $\gamma=1$ (for the case of $A_\my>0$), $\gamma=2.5$ (for the case of $A_\my<0$), $\delta=0.3$, $\beta=0.4$, $\mu=0.1$, $\sigma=0.15$, $r=0.01$, $\tau=1$, $t=T_{n}+0.5\tau$.}
	\label{fig:compstat_beta}
\end{figure}

\section{Concluding remarks}
\label{sect:conclude}

In this paper, we show that present bias can heavily influence agent's risk taking behaviors in the context of periodic portfolio selection. Depending on the attractiveness of the underlying investment opportunity, present bias can either moderate or exaggerate undesirable trading strategies that result in negatively skewed portfolio return. Relative to a naive agent, a sophisticated agent invests more during market downturns and deleverage conservatively during market rally.

Our current study focuses on describing the behaviors of a present-biased agent under each criterion of optimality. Normative status of our results is not directly addressed, in the sense that we do not offer recommendations how the agent should actually behave nor quantify the associated social costs of deviation from such recommendations. While our discussion of the skewness of the portfolio return might offer some guidance, a full welfare analysis will be an interesting direction for future works which can yield more precise policy insights in the area of delegated portfolio management. There is already a long strand of literature addressing the social implications of present bias and time-inconsistency on consumption/saving behaviors (see \cite{laibson98}, \cite{diamond-koszegi03}, \cite{oDonoghue-rabin06}, \cite{oDonoghue-rabin15}, \cite{calcott-petkov2021}, among others). Specific follow-up research questions in our framework may include, for example, what welfare benchmark(s) should be considered to evaluate the portfolio strategy adopted by a certain type of agent, how the demand for commitment devices can be endogenized, and how to quantify the social value of ``paternalistic policies'' such as risk regulation on portfolio managers.

\appendix
\numberwithin{equation}{section}

\begin{appendices}
	
\section{Proof of Theorem \ref{thm:G_fp}}
\label{app:H_cont}

This entire section is dedicated to the proof of Theorem \ref{thm:G_fp} concerning the existence of fixed point of the set-valued map defined in \eqref{eq:g_main}. We begin by studying the theoretical properties of a closely related map
\begin{equation}
	H(\theta):=\left\{\bE[Y^\alpha ]\ | \ Y\in\argmax _{Y\in\cY}\bE[F(Y;\theta)]\right\},
	\label{eq:H}
\end{equation}
where $F$ is defined in \eqref{eq:F}.

\begin{prop}
	Recall the definition of $\underline{\theta}$ in Lemma \ref{lem:shape_F}. For $\theta\in\mathbb{R}$, we have:
	\begin{enumerate}
		\item $H(\theta)\subseteq [0,e^{h\tau}]$, where $h$ is defined in \eqref{eq:assump}.
		\item $H(\theta)$ is a singleton for $\theta\neq \underline{\theta}$.
		\item $H(\theta)=\{0\}$ when $\theta<\underline{\theta}$.
		\item Suppose $\gamma>e^{r\tau}$ and $k>0$. Then $H(\underline{\theta})=[0,\underline{H}]$ where
		\begin{equation}
			\underline{H}:=L(\eta^*)\gamma^{\alpha}(1+k^{-\frac{1}{1-\alpha}})^{\alpha},
			\label{eq:cri_h}
		\end{equation}
		with $\eta^*>0$ being the unique solution to an equation in $\eta$ defined as
		\begin{equation*}
			\gamma(1+k^{-\frac{1}{1-\alpha}})\int_0^{\eta } z\ell(z)\rd z = 1,
		\end{equation*}
		and $\ell(\cdot)$ and $L(\cdot)$ are the pdf and cdf of the log-normal random variable $Z_{\tau}$.
	\end{enumerate}
	\label{prop:H}
\end{prop}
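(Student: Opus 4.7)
My plan is to dispatch parts (1)--(3) quickly using machinery already established earlier in the paper, and then reduce part (4) to a classical Neyman--Pearson-type feasibility problem. For (1), non-negativity of $Y \in \cY$ yields $\bE[Y^\alpha] \geq 0$, while the bound $\sup_{Y \in \cY} \bE[Y^\alpha] \leq e^{h\tau}$ (already invoked in the discussion preceding \eqref{eq:sophi_sys}) gives the upper endpoint. Part (2) is immediate from Proposition \ref{prop:uniqueness}: for $\theta \neq \underline{\theta}$ the $\argmax$ forms a single $\bP$-equivalence class, so the functional $Y \mapsto \bE[Y^\alpha]$ takes only one value on it. Part (3) follows from Case 3 of Lemma \ref{lem:auxsol}, which identifies the unique optimizer at $\theta < \underline{\theta}$ as $Y^* \equiv 0$ so that $H(\theta) = \{0\}$.

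The substantive work lies in (4). Set $\gamma_0 := \gamma(1 + k^{-1/(1-\alpha)})$. By Proposition \ref{prop:uniqueness}, at $\theta = \underline{\theta}$ a random variable $Y^* \in \cY$ is optimal if and only if $\bP(Y^* \in \{0, \gamma_0\}) = 1$, so with $A := \{Y^* = \gamma_0\} \in \cF_\tau$ any such $Y^*$ is of the form $\gamma_0 \I_A$. Feasibility $Y^* \in \cY$ is equivalent to $\gamma_0 \bE[Z_\tau \I_A] \leq 1$, while the objective becomes $\bE[(Y^*)^\alpha] = \gamma_0^\alpha \bP(A)$. The problem thus reduces to identifying
\begin{equation*}
H(\underline{\theta}) = \{\gamma_0^\alpha \bP(A) : A \in \cF_\tau,\ \gamma_0 \bE[Z_\tau \I_A] \leq 1\}.
\end{equation*}

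To compute this set I will first bound $\bP(A)$ above by a standard Neyman--Pearson argument: maximizing $\bP(A) = \bE[\I_A]$ subject to $\bE[Z_\tau \I_A] \leq 1/\gamma_0$ is attained on a level set $A^* = \{Z_\tau \leq \eta^*\}$ where the constraint binds. The hypothesis $\gamma > e^{r\tau}$ with $k > 0$ is used precisely here: it gives $\gamma_0 > e^{r\tau} = 1/\bE[Z_\tau]$, hence $1/\gamma_0 < \bE[Z_\tau]$, so $A = \Omega$ is infeasible, and the strictly increasing continuous map $\eta \mapsto \int_0^\eta z \ell(z) \rd z$, which sends $[0, \infty)$ bijectively onto $[0, \bE[Z_\tau])$, produces a unique $\eta^* \in (0, \infty)$ solving the defining equation. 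Atomlessness of $Z_\tau$ removes the need for any boundary randomization at $\eta^*$. This delivers the upper bound $\gamma_0^\alpha L(\eta^*) = \underline{H}$. For the intermediate values, the one-parameter family $A_\eta := \{Z_\tau \leq \eta\}$ with $\eta \in [0, \eta^*]$ is feasible and produces $\bE[(Y^*)^\alpha] = \gamma_0^\alpha L(\eta)$; continuity of $L$ with $L(0) = 0$ then sweeps out the full interval $[0, \underline{H}]$ by the intermediate value theorem. The main technical subtlety is really the careful bookkeeping between the Neyman--Pearson optimization and the hypothesis $\gamma > e^{r\tau}$: without the latter, the set $H(\underline{\theta})$ would collapse to a single point and lose the interval structure claimed in (4).
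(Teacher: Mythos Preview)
Your proof is correct and follows essentially the same route as the paper. Parts (1)--(3) are dispatched identically. For part (4), both you and the paper reduce to optimizers of the form $\gamma_0\I_A$ with $A\in\cF_\tau$, establish the upper bound $\underline{H}$ via a rearrangement argument (you name it Neyman--Pearson, the paper invokes the Hardy--Littlewood inequality citing Jin--Zhou; these are the same principle here), and fill in the interval using the threshold family $A_\eta=\{Z_\tau\leq\eta\}$. One inessential remark: your final sentence about $H(\underline{\theta})$ ``collapsing to a single point'' absent the hypothesis $\gamma>e^{r\tau}$ is not quite right (the set would instead become $[0,\gamma_0^\alpha]$ with $\eta^*$ failing to exist), but this side comment does not affect the validity of the proof under the stated assumptions.
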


\begin{proof}
	
	Part 1 of the statement is trivial because $0\leq \mathbb{E}[Y^\alpha]\leq e^{h\tau}$ for any $Y\in\cY$.
	
	Part 2 and 3 follow immediately from Proposition \ref{prop:uniqueness} and the fact that $Y^*=0$ is the unique optimizer of problem \eqref{eq:axu} when $\theta<\underline{\theta}$.
	
	When $\theta=\underline{\theta}$ we have from Proposition \ref{prop:uniqueness} that any optimizer $Y^*$ of \eqref{eq:axu} must satisfy $\bP(Y^*\in\{0,\gamma(1+k^{-\frac{1}{1-\alpha}})\})=1$. Hence $$H(\underline{\theta})=\left\{\bE[Y^\alpha ]\ | \ Y\in\cF_\tau, \ \bP(Y\in\{0,\gamma(1+k^{-\frac{1}{1-\alpha}})\})=1 ,\ \bE[Z_\tau Y]\leq 1\right\}.$$ To prove part 4, it is hence sufficient to establish that
	\begin{equation}
		H(\underline{\theta})=\left\{L(\eta)\gamma^{\alpha}(1+k^{-\frac{1}{1-\alpha}})^{\alpha}\ \Bigl| \ \eta\geq 0: \gamma(1+k^{-\frac{1}{1-\alpha}})\int_0^{\eta } z\ell(z)\rd z \leq 1\right\}=:\cH.
		\label{eq:set_equivalence}
	\end{equation}
	Then the result will follow on observing that $\cH$ can be expressed as $[0,\underline{H}]$ using the continuity and strict monotonicity of $\eta\mapsto  \gamma(1+k^{-\frac{1}{1-\alpha}})\int_0^{\eta } z\ell(z)\rd z=:\Xi(\eta)$ and the fact that $$\Xi(\infty)=\gamma(1+k^{-\frac{1}{1-\alpha}})\int_0^\infty z\ell(z)\rd z=\gamma(1+k^{-\frac{1}{1-\alpha}})e^{-r\tau}>1$$ under the stated condition of $\gamma>e^{r\tau}$, which in turn implies the existence of $\eta^*>0$ such that $\Xi(\eta^*)=1$. 
	
	We first argue that $H(\underline{\theta})\supseteq \cH$. Suppose $\nu\in\cH$, i.e. $\nu=L(\eta_\nu)\gamma^{\alpha}(1+k^{-\frac{1}{1-\alpha}})^{\alpha}$ for some $\eta_\nu\geq 0$ such that $\gamma(1+k^{-\frac{1}{1-\alpha}})\int_0^{\eta_\nu } z\ell(z)\rd z \leq 1$. Construct a random variable $Y_\nu$ via $Y_\nu:=\gamma(1+k^{-\frac{1}{1-\alpha}})\I_{\{Z_\tau\leq\eta_\nu\}}$. Then obviously $Y_\nu$ is $\cF_\tau$-measurable, $\bP(Y_\nu\in\{0,\gamma(1+k^{-\frac{1}{1-\alpha}})\})=1$, $\bE[Z_\tau Y_\nu]=\gamma(1+k^{-\frac{1}{1-\alpha}})\int_0^{\eta_\nu } z\ell(z)\rd z \leq 1$ and
	\begin{equation*}
		\bE[Y_\nu^\alpha]=\gamma^\alpha(1+k^{-\frac{1}{1-\alpha}})^\alpha\bP(Z_\tau\leq \eta_\nu)=L(\eta_\nu)\gamma^{\alpha}(1+k^{-\frac{1}{1-\alpha}})^{\alpha}=\nu.
	\end{equation*}
	Thus $\nu\in H(\underline{\theta})$.
	
	Next, we show the reverse direction $H(\underline{\theta})\subseteq \cH$. Suppose $\rho\in H(\underline{\theta})$, i.e. $\rho=\bE[Y_\rho^\alpha]$ for some $Y_\rho$ which is $\cF_\tau$-measurable, $\bP(\{0,\gamma(1+k^{-\frac{1}{1-\alpha}})\})=1$ and $\bE[Z_\tau Y_\rho]\leq 1$. Write $E_\rho:=\{\omega\in \Omega | Y_\rho(\omega)=\gamma(1+k^{-\frac{1}{1-\alpha}})\}\in\cF_\tau$. Then $\rho=\bE[Y_\rho^\alpha]=\gamma^\alpha(1+k^{-\frac{1}{1-\alpha}})^\alpha\bP(E_\rho)$. Note that the requirement of $\bE[Z_\tau Y_\rho]\leq 1$ implies $\bP(E_\rho)<1$ because otherwise $1\geq \bE[Z_\tau]\gamma(1+k^{-\frac{1}{1-\alpha}})=\gamma(1+k^{-\frac{1}{1-\alpha}})e^{-r\tau}>1$ which is a contradiction. Set $\eta_\rho:=L^{-1}(\bP(E_\rho))\in[0,\infty)$. Then $L(\eta_\rho)=\bP(E_\rho)$ such that
	\begin{align*}
		\gamma^{\alpha}(1+k^{-\frac{1}{1-\alpha}})^{\alpha} L(\eta_\rho)=\gamma^{\alpha}(1+k^{-\frac{1}{1-\alpha}})^{\alpha}\bP(E_\rho)=\rho.
	\end{align*}
	Finally, the quantile function of the random variable $\I_{E_\rho}$ is given by $p\mapsto \I_{\{p\geq 1-\bP(E_\rho)\}}$. The Hardy-Littlewood inequality for general probability space (see, for example, part (ii) of Theorem 1 in \cite{jin-zhou2010}) states that for any $X\in\cF_\tau$, we have
	\begin{align}
		\bE[Z_\tau Q_X(1-L(Z_\tau))]\leq \bE[Z_\tau X]
		\label{eq:HL}
	\end{align}
	where $Q_X(\cdot)$ is the quantile function of $X$. Specialization of $X=\I_{E_\rho}$ in \eqref{eq:HL} yields $$\bE[Z_\tau \I_{\{\bP(E_\rho)> L(Z_\tau)\}}]\leq \bE[Z_\tau \I_{E_\rho}].$$ Hence,
	\begin{align*}
		\gamma(1+k^{-\frac{1}{1-\alpha}})\int_0^{\eta_\rho } z\ell(z)\rd z&=\gamma(1+k^{-\frac{1}{1-\alpha}})\bE[Z_\tau \I_{\{Z_\tau< \eta_\rho\}}]\\
		&=\gamma(1+k^{-\frac{1}{1-\alpha}})\bE[Z_\tau \I_{\{\bP(E_\rho)> L(Z_\tau)\}}]\\
		&\leq \gamma(1+k^{-\frac{1}{1-\alpha}})\bE[Z_\tau \I_{E_\rho}]\\
		&=\bE[Z_\tau Y_\rho]\leq 1.
	\end{align*} 
	We therefore conclude $\rho\in\cH$ and hence $H(\underline{\theta})\subseteq \cH$. The equivalence of $H(\underline{\theta})=\cH$ is now established.
\end{proof}

For $\theta\neq \underline{\theta}$, $H(\theta)$ is a singleton and hence it can be interpreted as an ordinary function. With a slight abuse of notation, in the rest of this section we will view $H(\theta)$ in such case as the singleton element of the set $H(\theta)$.

\begin{lemm}
	For $\theta\neq \underline{\theta}$, $H$ is non-decreasing on $\theta>\underline{\theta}$, and $H(\theta)=0$ on $\theta<\underline{\theta}$.
	\label{lem:H_increase}
\end{lemm}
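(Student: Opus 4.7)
My plan is to prove the two assertions separately, with the second one being essentially a restatement of a previously established fact and the first following from a standard revealed-preference / pairwise-comparison argument.

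For the second claim, that $H(\theta)=0$ on $\theta<\underline{\theta}$, I would simply invoke part 3 of Proposition \ref{prop:H}, which already gives $H(\theta)=\{0\}$ in that regime (since $Y^*=0$ is the unique optimizer of \eqref{eq:axu} whenever $\theta<\underline{\theta}$), and under the convention of identifying a singleton with its element, this is the desired equality.

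For the first claim, I would fix any $\underline{\theta}<\theta_1<\theta_2$ and let $Y_i$ denote the (essentially unique by Proposition \ref{prop:uniqueness}) optimizer of $\sup_{Y\in\cY}\bE[F(Y;\theta_i)]$ for $i=1,2$, so $H(\theta_i)=\bE[Y_i^\alpha]$. Optimality of $Y_1$ against $Y_2$ in the problem with parameter $\theta_1$, and of $Y_2$ against $Y_1$ in the problem with parameter $\theta_2$, yields
\begin{align*}
\bE[U(Y_1-\gamma)]+\theta_1\bE[Y_1^\alpha]&\geq \bE[U(Y_2-\gamma)]+\theta_1\bE[Y_2^\alpha],\\
\bE[U(Y_2-\gamma)]+\theta_2\bE[Y_2^\alpha]&\geq \bE[U(Y_1-\gamma)]+\theta_2\bE[Y_1^\alpha].
\end{align*}
Adding these two inequalities cancels the $U$-terms and leaves
\begin{equation*}
(\theta_2-\theta_1)\bigl(\bE[Y_2^\alpha]-\bE[Y_1^\alpha]\bigr)\geq 0,
\end{equation*}
which, since $\theta_2-\theta_1>0$, gives $H(\theta_2)\geq H(\theta_1)$.

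I do not anticipate a genuine obstacle here: the classical revealed-preference trick works whenever the term multiplied by the varying parameter ($Y^\alpha$ here) is precisely the quantity whose monotonicity is being asserted, and Proposition \ref{prop:H} already supplies well-definedness (singletons) on $\theta>\underline{\theta}$, $\theta\neq\underline{\theta}$. The only mildly subtle point is that the argument needs $\bE[U(Y_i-\gamma)]$ to be finite so that the cancellation is legitimate; this is automatic from $Y_i\in\cY$ together with $\bE[Y_i^\alpha]\leq e^{h\tau}<\infty$ (Proposition \ref{prop:H}, part 1) and the growth bound on $U$, so no technicalities arise.
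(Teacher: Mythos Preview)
Your proof is correct and takes essentially the same approach as the paper: both exploit the optimality of $Y_i$ for the $\theta_i$-problem to compare $\bE[Y_1^\alpha]$ and $\bE[Y_2^\alpha]$. The paper phrases the argument by contradiction (assuming $H(\theta_2)<H(\theta_1)$ and deriving $\Phi(\theta_2)<\Phi(\theta_2)$), whereas you give the cleaner direct version by adding the two revealed-preference inequalities; the underlying mechanism is identical.
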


\begin{proof}
	The second part of the lemma is obvious since $H(\theta)=\{0\}$ when $\theta<\underline{\theta}$. To prove the first part, consider $\theta_2>\theta_1>\underline{\theta}$. For $i\in\{1,2\}$, there exists $Y^*_i\in \argmax_{Y\in\mathcal{Y}} \bE[F(Y;\theta_i)]$ such that $\sup_{Y\in\mathcal{Y}}\bE[F(Y;\theta_i)]=\bE[F(Y^*_i;\theta_i)]$. Note that for as long as $\theta_i>\underline{\theta}$, $Y^*_i$ is unique such that $H(\theta_i)=\mathbb{E}[(Y^*_i)^\alpha]>0$ is uniquely defined.
	
	Suppose on contrary that we have $H(\theta_2)<H(\theta_1)$. Then
	\begin{align*}
		\sup_{Y\in\mathcal{Y}}\bE[F(Y;\theta_2)]=\bE[F(Y_2^*;\theta_2)]&=\mathbb{E}\left[U(Y^*_2-\gamma)\right]+\theta_2\mathbb{E}\left[ (Y^*_2)^{\alpha}\right]\\
		&=\mathbb{E}\left[U(Y^*_2-\gamma)\right]+\theta_1 H(\theta_2)+(\theta_2-\theta_1)H(\theta_2) \\
		&=\bE[F(Y_2^*;\theta_1)] + (\theta_2-\theta_1) H(\theta_2)\\
		&< \sup_{Y\in\mathcal{Y}}\bE[F(Y;\theta_1)]+ (\theta_2-\theta_1)H(\theta_1)\\
		&=\bE[F(Y_1^*;\theta_1)]+ (\theta_2-\theta_1)H(\theta_1)\\
		&=\mathbb{E}\left[U(Y^*_1-\gamma)\right]+\theta_2\bE[(Y^*_1)^\alpha]=\bE[F(Y^*_1;\theta_2)]\leq \sup_{Y\in\mathcal{Y}}\bE[F(Y;\theta_2)]
	\end{align*}
	arriving at a contradiction. Hence we must have $H(\theta_2)\geq H(\theta_1)$.
\end{proof}

We now proceed to prove the continuity of $H(\theta)$. The high-level idea is as follows: The quantities $(c_1,c_2,c_3,m_1,m_2)$, the functions $(I_1, I_2)$ and the Lagrangian multiplier $\lambda^*$ defined in Section \ref{sect:auxprob} are all related to the unique solutions to some equations parametrized by $\theta$. We can then make use of a result which states that the unique solution to an equation parametrized by some parameters is indeed continuous in those parameters if the solution lives on a compact space. Then the optimizer to problem \eqref{eq:axu} can be expressed as a random variable which depends on $\theta$ continuously, and ultimately $H(\theta)$ is just some integral with an integrand which is continuous in $\theta$. 

The following lemma is the building block of our argument.

\begin{lemm}
	Let $f:\mathcal{X}\times\mathfrak{A}\to\mathbb{R}^d$ be a jointly continuous function where $d\in \bN$, and $\mathcal{X}$ and $\mathfrak{A}$ are metric spaces with $\mathcal{X}$ being compact. If for each fixed $\theta\in\mathfrak{A}$ there exists a unique $\hat{x}(\theta)\in\mathcal{X}$ such that $f(\hat{x}(\theta),\theta)=\vec{0}$, then $\hat{x}:\mathfrak{A}\to\mathcal{X}$ is a continuous function.
	\label{lem:ctsdependence}
\end{lemm}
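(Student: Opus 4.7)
The plan is to establish continuity of $\hat{x}$ at an arbitrary point $\theta_0\in\mathfrak{A}$ via a standard subsequence argument that exploits compactness of $\mathcal{X}$ together with uniqueness of the zero. First I would fix a sequence $\theta_n\to\theta_0$ in $\mathfrak{A}$ and observe that the sequence $(\hat{x}(\theta_n))_{n\in\mathbb{N}}$ lies in the compact metric space $\mathcal{X}$, so every subsequence admits a further subsequence converging to some limit $x^*\in\mathcal{X}$.

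Next I would use joint continuity of $f$: along such a convergent sub-subsequence $\hat{x}(\theta_{n_{k_j}})\to x^*$ while $\theta_{n_{k_j}}\to\theta_0$, so
\begin{equation*}
	f(x^*,\theta_0)=\lim_{j\to\infty}f(\hat{x}(\theta_{n_{k_j}}),\theta_{n_{k_j}})=\vec{0}.
\end{equation*}
By the uniqueness hypothesis we must have $x^*=\hat{x}(\theta_0)$. Thus every subsequence of $(\hat{x}(\theta_n))$ has a further subsequence converging to the same limit $\hat{x}(\theta_0)$, which forces $\hat{x}(\theta_n)\to\hat{x}(\theta_0)$ in $\mathcal{X}$. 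Since $\theta_0$ and the sequence $(\theta_n)$ were arbitrary, $\hat{x}$ is continuous on $\mathfrak{A}$.

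There is no real obstacle here: the only subtle point is remembering that metric spaces allow us to characterize continuity sequentially, and that compactness in metric spaces coincides with sequential compactness, so the Bolzano–Weierstrass style extraction is legitimate. The uniqueness of the zero is essential — it is precisely what prevents the limit $x^*$ from depending on the chosen subsequence.
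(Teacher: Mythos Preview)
Your proof is correct and follows essentially the same approach as the paper: fix a sequence $\theta_n\to\theta_0$, use compactness of $\mathcal{X}$ to extract convergent subsequences, apply joint continuity of $f$ to identify the limit as a zero at $\theta_0$, and then invoke uniqueness to conclude that all subsequential limits coincide with $\hat{x}(\theta_0)$. If anything, your version is slightly more explicit in spelling out the ``every subsequence has a further subsequence converging to the same limit'' principle, whereas the paper states it more tersely.
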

\begin{proof}
	For an arbitrary $\bar{\theta}\in\mathfrak{A}$, let $\{\theta_n\}_{n\in\bN}$ be a sequence where  $\theta_n\in\mathfrak{A}$ for each $n$ such that $\theta_n\to\bar{\theta}$. Since $f$ is continuous and since $\mathcal{X}$ is compact, the sequence  $\{\hat{x}(\theta_n)\}_{n\in\bN}$ has a convergent subsequence $\{\hat{x}(\theta_{n_k})\}_{k\in\bN}$ with some limit $\bar{x}\in\mathcal{X}$. Then the joint continuity of $f$ implies
	\begin{align*}
		f(\bar{x},\bar{\theta})=f\left(\lim_{k\to\infty}\hat{x}(\theta_{n_k}),\lim_{k\to\infty} \theta_{n_k}\right)=\lim_{k\to\infty} f\left(\hat{x}(\theta_{n_k}),\theta_{n_k}\right)=\vec{0}
	\end{align*}
	and hence $\hat{x}(\bar{\theta})=\bar{x}$ by uniqueness of the solution to $f(\bar{x},\bar{\theta})=\vec{0}$, i.e. any convergent subsequence of $\{\hat{x}(\theta_n)\}_{n\in\bN}$ must have the same limit $\hat{x}(\bar{\theta})$. Thus we must have $\lim_{n\to\infty}\hat{x}(\theta_n)=\hat{x}(\bar{\theta})$ which establishes the continuity of $\hat{x}:\mathfrak{A}\to\mathcal{X}$.
\end{proof}

Since the application of Lemma \ref{lem:ctsdependence} requires the solution space $\mathcal{X}$ to be compact, it will be useful to establish some bounds on several fundamental quantities that appear within the optimizer to problem \eqref{eq:axu}.

\begin{lemm}
	Let $0<\epsilon<M$ and $0<\underline{q}<\bar{q}$ be some positive constants where $\epsilon,\underline{q}$ are arbitrarily small and $M,\bar{q}$ are arbitrarily large. Recall the definitions of $(c_1,c_2,c_3)$, $I_1(q;\theta)$ and $I_2(q;\theta)$ in Section \ref{sect:auxprob}. 
	
	\begin{enumerate}
		
		\item If $\theta\in[\epsilon,M]$, then there exists constants $0<\underline{c}_1<\bar{c}_1<1<\underline{c}_2<  \bar{c_2}<\infty$, independent of $\theta$, such that
		\begin{align*}
			0<\underline{c}_1\leq c_1\leq\bar{c}_1<\underline{c}_2\leq c_2\leq \bar{c_2}.
		\end{align*}
		
		\item If $k>0$ and $\theta\in[\underline{\theta},0]$, then there exists constants $1<\underline{c}_3\leq \bar{c}_3<\infty$, independent of $\theta$, such that $1<\underline{c}_3\leq c_3\leq \bar{c}_3$.
		
		\item If $\theta\in[\underline{\theta}+\epsilon,\underline{\theta}+M]$ and $q\in[\underline{q},\bar{q}]$, we have $\underline{I}_2(\bar{q})\leq I_2(q;\theta)\leq \bar{I}_2(\underline{q})$ where $\underline{I}_2(q)$ and $\bar{I}_2(q)$ are defined as the unique solutions to $\alpha[(y-\gamma)^{\alpha-1}+(\underline{\theta}+\epsilon) y^{\alpha-1}]=q$ and $\alpha[(y-\gamma)^{\alpha-1}+(\underline{\theta}+M) y^{\alpha-1}]=q$ respectively on $y>\gamma$.

		\item If $(q,\theta)\in\{(q,\theta)|\theta\in[\epsilon,M],q\in[\tilde{n}(\theta),\bar{q}]\}$ where  $\tilde{n}(\theta):=	\alpha[k(\gamma-\tilde{c}(\theta)\gamma)^{\alpha-1}+\theta (\tilde{c}(\theta)\gamma)^{\alpha-1}]$ with $\tilde{c}(\theta):=\frac{1}{1+(k/\theta)^{1/(2-\alpha)}}$, then $\underline{I}_1(\bar{q})\leq I_1(q;\theta)\leq \tilde{c}(M)\gamma $. Here $\underline{I}_1(q)$ is defined as the unique solution to $\alpha[k(\gamma-y)^{\alpha-1}+\epsilon y^{\alpha-1}]=q$ on $y\in(0,\tilde{c}(\epsilon)\gamma)$.
		
	\end{enumerate}

\label{lem:bound}
\end{lemm}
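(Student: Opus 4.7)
The plan is to prove the four parts by combining continuity of the defining quantities in the parameters with compactness of the relevant parameter intervals, together with the monotonicity properties of $I_1,I_2$ already asserted in the paper. The central tool for Parts 1 and 2 is the comparison of the tangent and secant expressions of the common slopes $m_1$ and $m_2$ from \eqref{eq:m1}--\eqref{eq:c3}.

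For Part 1, the upper bound $c_1 \leq \tilde{c}(M)$ is immediate from $c_1 \in (0, \tilde{c}(\theta))$ and monotonicity of $\tilde{c}(\theta)=1/(1+(k/\theta)^{1/(2-\alpha)})$ in $\theta$. Next, the tangent slope satisfies $m_1 \geq \tilde{n}(\theta) \geq \min_{\theta \in [\epsilon, M]}\tilde{n}(\theta) > 0$, where strict positivity of the minimum is guaranteed by continuity and pointwise positivity of $\tilde{n}$ on the compact set $[\epsilon,M]$. Since the tangent form $m_1 = \alpha\gamma^{\alpha-1}[(c_2-1)^{\alpha-1}+\theta c_2^{\alpha-1}]$ decreases to zero as $c_2 \to \infty$, this lower bound on $m_1$ forces $c_2 \leq \bar{c}_2$ uniformly in $\theta$. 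With $(c_1,c_2)$ now trapped in the compact box $[0,\tilde{c}(M)] \times [1,\bar{c}_2]$, the secant expression $m_1 = \gamma^{\alpha-1}[(c_2-1)^\alpha + \theta c_2^\alpha + k(1-c_1)^\alpha - \theta c_1^\alpha]/(c_2-c_1)$ is uniformly bounded above (its numerator is bounded and its denominator is at least $1 - \tilde{c}(M) > 0$). As $c_1 \to 0^+$ or $c_2 \to 1^+$, the other two tangent forms of $m_1$ (from \eqref{eq:c1c2}) blow up via $\theta c_1^{\alpha-1}$ or $(c_2-1)^{\alpha-1}$, contradicting the uniform upper bound; this yields $c_1 \geq \underline{c}_1 > 0$ and $c_2 \geq \underline{c}_2 > 1$.

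Part 2 is handled by the same secant-tangent comparison, after extending $c_3$ continuously to the endpoint by setting $c_3(\underline{\theta}) := c_4(\underline{\theta}) = 1 + k^{-1/(1-\alpha)}$, the degenerate limit as $\theta \downarrow \underline{\theta}$ in which the right-hand tangent point of the concave majorant coincides with $c_4$. The lower bound $c_3 \geq \underline{c}_3 > 1$ follows from the secant line from $(0,-k\gamma^\alpha)$ to $(c_3\gamma, F(c_3\gamma;\theta))$ having a bounded slope, while the tangent slope $\alpha[(c_3-1)^{\alpha-1}+\theta c_3^{\alpha-1}]$ diverges as $c_3 \to 1^+$. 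For the upper bound on $c_3$, I would combine three regimes: the a priori envelope $c_3 < c_4(\theta)$ on any subinterval $[\underline{\theta}, -1-\delta]$ (where $c_4$ remains bounded), continuity of $c_3$ on the compact set $[-1+\delta, 0]$, and continuity of $\theta \mapsto c_3(\theta)$ across $\theta = -1$ to patch the two regimes; continuity at $-1$ can be justified by an implicit function theorem argument on \eqref{eq:c3}, alternatively by Lemma \ref{lem:ctsdependence} applied after the lower bound on $c_3$ confines us to a compact subinterval of $(1,\infty)$.

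Parts 3 and 4 follow directly from the monotonicity of $I_1, I_2$ (non-increasing in $q$ and non-decreasing in $\theta$) stated in the paper. For Part 3, $\underline{I}_2(\bar{q}) := I_2(\bar{q}; \underline{\theta}+\epsilon) \leq I_2(q;\theta) \leq I_2(\underline{q}; \underline{\theta}+M) =: \bar{I}_2(\underline{q})$ on the specified rectangle. For Part 4, the upper bound $I_1(q;\theta) \leq \tilde{c}(\theta)\gamma \leq \tilde{c}(M)\gamma$ is immediate from the domain of $I_1$, and the lower bound $I_1(q;\theta) \geq I_1(\bar{q};\theta) \geq I_1(\bar{q};\epsilon) =: \underline{I}_1(\bar{q})$ is from the two monotonicity properties in succession. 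The main obstacle throughout is the upper bound on $c_3$ in Part 2 near $\theta = -1$, where the natural envelope $c_3 < c_4(\theta)$ degenerates as $c_4 \to \infty$, and one must verify via the implicit equation that $c_3$ itself does not escape.
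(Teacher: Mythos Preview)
Your arguments for Parts 1, 3, and 4 are correct and follow essentially the same strategy as the paper: for Part 1 the paper also bounds $c_1$ from above by $\tilde c(M)$, then bounds $m_1$ below and above to trap $c_2$ and subsequently $c_1$; for Parts 3 and 4 the paper invokes exactly the monotonicity of $I_1,I_2$ in $q$ and $\theta$ that you use.

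Part 2, however, diverges from the paper and your upper-bound argument has a gap. You propose to patch three regimes and, in particular, to obtain boundedness on $[-1+\delta,0]$ via continuity, where continuity near $\theta=-1$ is to come either from the implicit function theorem or from Lemma~\ref{lem:ctsdependence} ``after the lower bound on $c_3$ confines us to a compact subinterval of $(1,\infty)$.'' But a lower bound alone gives $c_3\in[\underline c_3,\infty)$, which is not compact, so Lemma~\ref{lem:ctsdependence} cannot be applied; and since the paper's continuity result for $c_3$ (Corollary~\ref{cor:cont_fundamental_quantities}) is itself proved \emph{using} the present lemma, invoking it would be circular. The IFT route would need a verification that $\partial_c f_3\neq 0$ at the root, and even then only yields local continuity, not a global uniform bound over $[\underline\theta,0]$ without further argument.

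The paper sidesteps all of this with a one-line direct bound. Writing $f_3(c;\theta)=(c-1)^\alpha+\theta(1-\alpha)c^\alpha-\alpha c(c-1)^{\alpha-1}+k$, so that $c_3$ is the unique root on the relevant interval with $f_3<0$ to its left, one simply checks that at $\bar c_3:=1+k^{-1/(1-\alpha)}$ one has
\[
f_3(\bar c_3;\theta)=(1-\alpha)k^{-\alpha/(1-\alpha)}\bigl[1+k^{1/(1-\alpha)}+\theta(1+k^{1/(1-\alpha)})^\alpha\bigr]\geq 0
\]
for all $\theta\in[\underline\theta,0]$, with equality exactly at $\theta=\underline\theta$. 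Hence $c_3(\theta)\leq\bar c_3$ uniformly. Notably, this $\bar c_3$ is precisely the number you introduced as the endpoint value $c_3(\underline\theta)=c_4(\underline\theta)$; the paper recognises it as a uniform upper bound for the whole interval, not merely as a limiting value. The lower bound is handled in the paper by taking $\underline c_3$ as the root of $f_3(c;0)=0$, using that $f_3$ is non-decreasing in $\theta$, which is the analytic form of your secant--tangent comparison.
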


\begin{proof}

\begin{enumerate}

\item We give a proof under $k>0$. In the corner case of $k=0$, we will be able to express $c_1$ explicitly in term of $c_2$ and the system of equations defining $(c_1,c_2)$ degenerates to a single equation in $c_2$. 

With $\theta\in[\epsilon,M]$, an upper bound of $c_1$ can be obtained easily as
\begin{equation*}
c_1\leq \frac{1}{1+(k/\theta)^{1/(2-\alpha)}}\leq  \frac{1}{1+(k/M)^{1/(2-\alpha)}}=:\bar{c}_1.
\end{equation*}
Now, recall that $c_2$ satisfies
\begin{eqnarray*}
	\phi_2(c_2;\theta):=(c_2-1)^{\alpha-1}+\theta c_2^{\alpha-1}=k(1-c_1)^{\alpha-1}+\theta c_1^{\alpha-1}.
\end{eqnarray*}
Define $\bar{c}_2\in(1,\infty)$ as the unique solution to $\phi(\bar{c}_2;M)=k$, we have
\begin{eqnarray*}
\phi_2(\bar{c}_2;\theta)\leq\phi(\bar{c}_2;M) =k\leq k(1-c_1)^{\alpha-1}+\theta c_1^{\alpha-1}=\phi_2(c_2;\theta)
\end{eqnarray*}
and hence $c_2\leq \bar{c}_2$ using the properties that $\phi_2(c;\theta)$ is non-decreasing in $\theta$ and non-increasing in $c>1$.

Meanwhile, $c_2$ also satisfies
\begin{equation*}
	\alpha\phi_2(c_2;\theta)=\alpha[(c_2-1)^{\alpha-1}+\theta c_2^{\alpha-1}]=\frac{(c_2-1)^{\alpha}+\theta c_2^\alpha+k(1-c_1)^{\alpha}-\theta c_1^\alpha}{c_2-c_1}.
\end{equation*}
Then if we define $\underline{c}_2$ as the unique solution to the equation $$(c-1)^{\alpha-1}+\epsilon c^{\alpha-1}=\frac{(\bar{c_2}-1)^{\alpha}+M\bar{c}_2^\alpha+k}{\alpha(1-\bar{c}_1)}$$ on $c>1$, then
\begin{align*}
	\phi_2(c_2;\theta)\leq \frac{(\bar{c}_2-1)^{\alpha}+M \bar{c}_2^\alpha+k}{\alpha(1-\bar{c}_1)}=\phi(\underline{c}_2;\epsilon)\leq \phi_2(\underline{c}_2;\theta)
\end{align*}
and hence $c_2\geq \underline{c}_2$, where we have used the definition of $\underline{c}_2$ and the monotonicity of $\phi_2$. 

Lastly, $c_1$ satisfies 
\begin{equation*}
	\alpha \phi_1(c_1;\theta):=\alpha[k(1-c_1)^{\alpha-1}+\theta c_1^{\alpha-1}]=\frac{(c_2-1)^{\alpha}+\theta c_2^\alpha+k(1-c_1)^{\alpha}-\theta c_1^\alpha}{c_2-c_1}.
\end{equation*}
Since
\begin{align*}
	\frac{(c_2-1)^{\alpha}+\theta c_2^\alpha+k(1-c_1)^{\alpha}-\theta c_1^\alpha}{c_2-c_1}\leq 	\frac{(\bar{c}_2-1)^{\alpha}+\theta \bar{c}_2^\alpha+k}{1-\bar{c}_1},
\end{align*}
$\phi_1(c;\theta)$ is strictly decreasing in $c\in(0,1/[1+(k/\theta)^{1/(2-\alpha)}])$ with $\phi_1(0+;\theta)=+\infty$ and $\phi_1(c;\theta)\geq \alpha\theta c^{\alpha-1}$, one can define $\tilde{c}_1$ as the unique solution to the equation $$\alpha\theta \tilde{c}_1^{\alpha-1}=\frac{(\bar{c}_2-1)^{\alpha}+\theta \bar{c}_2^\alpha+k}{1-\bar{c}_1}$$ or equivalently
\begin{align*}
	\tilde{c}_1=\left[\frac{\alpha\theta(1-\bar{c}_1)}{(\bar{c}_2-1)^\alpha+\theta\bar{c}_2^\alpha+k}\right]^{\frac{1}{1-\alpha}},
\end{align*}
under which $c_1\geq \tilde{c}_1$. Take $$\underline{c}_1:=\left[\frac{\alpha\epsilon(1-\bar{c}_1)}{(\bar{c}_2-1)^\alpha+M\bar{c}_2^\alpha+k}\right]^{\frac{1}{1-\alpha}},$$

The conclusion of $c_1\geq \underline{c}_1$ follows immediately on noticing that $\tilde{c}_1\geq \underline{c}_1$.

\item Let
\begin{align*}
	f_3(c;\theta):=(c-1)^\alpha+\theta (1-\alpha) c^\alpha - \alpha c(c-1)^{\alpha-1}+k.
\end{align*}
Note that $f_3(1+;\theta)=-\infty$ and by definition $c_3$ is the unique solution to $f_3(c;\theta)=0$ over $c\in(1,\infty)$ if $\theta\in[-1,0]$ or $c\in(1,c_4)$ if $\theta\in[\underline{\theta},-1)$. Hence, on this range of $c$, we must have $f_3(c;\theta)\lessgtr0$ if and only if $c\lessgtr c_3$. Observe also that, since $k>0$, for $\bar{c}_3:=1+k^{-\frac{1}{1-\alpha}}<\infty$ we have
\begin{align*}
	f_3(\bar{c}_3;\theta)=f(1+k^{-\frac{1}{1-\alpha}})&=(1-\alpha)k^{-\frac{\alpha}{1-\alpha}}\left[1+k^{\frac{1}{1-\alpha}}+\theta(1+k^{\frac{1}{1-\alpha}})^\alpha\right]\\
	&\geq (1-\alpha)k^{-\frac{\alpha}{1-\alpha}}\left[1+k^{\frac{1}{1-\alpha}}+\underline{\theta}(1+k^{\frac{1}{1-\alpha}})^\alpha\right]=0.
\end{align*}
We then deduce $c_3\leq \bar{c}_3$. On the other hand, if we define $\underline{c}_3$ as the unique solution to the equation $(c-1)^\alpha+k=\alpha c(c-1)^{\alpha-1}$ on $c>1$, then
\begin{align*}
f_3(c_3;0)\geq f_3(c_3;\theta)=0=f_3(\underline{c}_3;0)
\end{align*}
by definition of $\underline{c}_3$ and that $f_3(c;\theta)$ is non-decreasing in $\theta$. The claim $c_3\geq \underline{c}_{3}$ follows from the fact that $f_3(c;0)$ is non-decreasing in $c$.

\item This follows immediately from the fact that $I_2(q;\theta)$ is non-increasing in $q$ and non-decreasing in $\theta$.

\item The result is due to the fact that $I_1(q;\theta)$ is non-increasing in $q$ and non-decreasing in $\theta$. For the upper bound, in particular,  $\tilde{c}(\theta)$ is non-decreasing and hence $I_2(q;\theta)\leq \tilde{c}(\theta)\gamma \leq \tilde{c}(M)\gamma$.

\end{enumerate}	
\end{proof}

\begin{rem}
In the special case that $k=0$, $\bar{c}_3$ becomes infinite and we indeed have $c_3(\theta)\to +\infty$ as $\theta\downarrow\underline{\theta}=-1$. But as per Remark \ref{rem:impossible}, there is no need to consider non-positive values of $\theta$ when $k=0$.
\end{rem}

\begin{cor}

View $\{c_i\}_{i=1,2,3}=\{c_i(\theta)\}_{i=1,2,3}$, $\{m_i\}_{i=1,2}=\{m_i(\theta)\}_{i=1,2}$ and $\{I_{i}(q;\theta)\}_{i=1,2}$ as functions of $\theta$ or $(q,\theta)$. Then:

\begin{enumerate}
	\item $c_1(\theta)$, $c_2(\theta)$ and $m_1(\theta)$ are continuous on $\theta\in(0,\infty)$.
	\item If $k>0$, then $c_3(\theta)$ and $m_2(\theta)$ are continuous on $\theta\in[\underline{\theta},0]$.
	\item $I_2(q;\theta)$ is jointly continuous on $(q,\theta)\in(0,\infty)\times (\underline{\theta},\infty)$.
	\item $I_1(q;\theta)$ is jointly continuous on $(q,\theta)\in\{(q,\theta)|\theta\in(0,\infty),q\in[\tilde{n}(\theta),\infty)\}$.
\end{enumerate}
\label{cor:cont_fundamental_quantities}
\end{cor}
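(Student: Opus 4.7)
The plan is to reduce each continuity claim to Lemma \ref{lem:ctsdependence}, viewing each quantity as the unique solution to an equation of the form $f(x,\theta)=0$ (or $f(x,q,\theta)=0$) coming from the characterization in Lemma \ref{lem:shape_F}. Lemma \ref{lem:bound} then supplies the compact solution space required by Lemma \ref{lem:ctsdependence}, after restricting $\theta$ (and, where relevant, $q$) to a suitable compact neighborhood of any given target point. Since continuity is a local property, it is enough to verify it on such neighborhoods.

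For part (1), fix $\bar\theta>0$ and pick $0<\epsilon<\bar\theta<M$. Lemma \ref{lem:bound}(1) confines $(c_1(\theta),c_2(\theta))$ to a compact rectangle $\mathcal{X}\subset(0,1)\times(1,\infty)$ for all $\theta\in[\epsilon,M]$, and the system \eqref{eq:c1c2} defines an $\bR^2$-valued function on $\mathcal{X}\times[\epsilon,M]$ that is jointly continuous (all singular behavior of $(c_i-1)^{\alpha-1}$ or $c_i^{\alpha-1}$ sits on the removed edges of $\mathcal{X}$). Uniqueness of the solution is asserted in Lemma \ref{lem:shape_F}(1), so Lemma \ref{lem:ctsdependence} yields continuity of $(c_1,c_2)$ on $[\epsilon,M]$, and hence on all of $(0,\infty)$. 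Then $m_1(\theta)$ as given by \eqref{eq:m1} is a continuous composition. The argument for part (2) is identical: working with $k>0$ and $\theta\in[\underline\theta,0]$, Lemma \ref{lem:bound}(2) provides a compact interval $[\underline c_3,\bar c_3]\subset(1,\infty)$ containing $c_3(\theta)$, the equation \eqref{eq:c3} is jointly continuous there, uniqueness follows from Lemma \ref{lem:shape_F}(2)--(3), and $m_2(\theta)$ continuity follows from \eqref{eq:m2}.

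For part (3), fix $(\bar q,\bar\theta)\in(0,\infty)\times(\underline\theta,\infty)$ and shrink to a compact box $[\underline q,\bar{\bar q}]\times[\underline\theta+\epsilon,\underline\theta+M]$ containing $(\bar q,\bar\theta)$ with $\epsilon>0$ small and $M>0$ large. Lemma \ref{lem:bound}(3) confines $I_2(q;\theta)$ to a compact subset $[\underline I_2,\bar I_2]\subset(\gamma,\infty)$; on this set the map $(y,q,\theta)\mapsto\alpha[(y-\gamma)^{\alpha-1}+\theta y^{\alpha-1}]-q$ is jointly continuous (all singularities are at $y=\gamma$, which is excluded). Uniqueness of $I_2$ is asserted in the remark following \eqref{eq:I2_eq}, including the extension to $\theta\in(\underline\theta,-1)$. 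Lemma \ref{lem:ctsdependence} then gives joint continuity of $I_2$ at $(\bar q,\bar\theta)$.

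Part (4) is the subtle one, because $y\mapsto g(y,\theta):=\alpha[k(\gamma-y)^{\alpha-1}+\theta y^{\alpha-1}]$ attains its minimum $\tilde n(\theta)$ at the interior point $\tilde c(\theta)\gamma$, so $I_1(q;\theta)$ sits precisely at the boundary of its monotone branch when $q=\tilde n(\theta)$. At interior points $\bar q>\tilde n(\bar\theta)$, the Lemma \ref{lem:ctsdependence} argument of part (3) works without modification: shrink the neighborhood so $I_1(q;\theta)$ is trapped in a compact subinterval of $(0,\tilde c(\bar\theta)\gamma)$ on which $g(\cdot,\theta)$ is a strict decreasing bijection, then invoke Lemma \ref{lem:bound}(4) for compactness. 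For boundary points $\bar q=\tilde n(\bar\theta)$ where $I_1(\bar q;\bar\theta)=\tilde c(\bar\theta)\gamma$, I would use a direct sequential argument: if $(q_n,\theta_n)\to(\bar q,\bar\theta)$ in the domain, Lemma \ref{lem:bound}(4) supplies a uniform positive lower bound on $I_1(q_n;\theta_n)$, while $I_1(q_n;\theta_n)\leq\tilde c(\theta_n)\gamma\to\tilde c(\bar\theta)\gamma$ since $\tilde c$ is continuous. Any subsequential limit $y^*\in(0,\tilde c(\bar\theta)\gamma]$ then satisfies $g(y^*,\bar\theta)=\bar q=\tilde n(\bar\theta)$ by continuity of $g$. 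But $\tilde c(\bar\theta)\gamma$ is the unique point in $(0,\gamma)$ at which $g(\cdot,\bar\theta)$ attains its minimum $\tilde n(\bar\theta)$, forcing $y^*=\tilde c(\bar\theta)\gamma=I_1(\bar q;\bar\theta)$. Thus the full sequence converges to $I_1(\bar q;\bar\theta)$, completing the proof; the non-monotonicity of $g(\cdot,\theta)$ at the boundary is the only real obstacle, and it is resolved by the bound $I_1(q;\theta)\le\tilde c(\theta)\gamma$ combined with the uniqueness of the minimizer.
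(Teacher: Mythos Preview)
Your proof is correct and takes essentially the same approach as the paper: each continuity claim is reduced to Lemma \ref{lem:ctsdependence} after using Lemma \ref{lem:bound} to trap the solution in a compact set, with $m_1,m_2$ then obtained by composition. Your treatment of part (4) is in fact more careful than the paper's---you isolate the boundary case $q=\tilde n(\theta)$ with a direct sequential argument exploiting the uniqueness of the minimizer of $g(\cdot,\bar\theta)$, whereas the paper applies Lemma \ref{lem:ctsdependence} uniformly on $[\underline I_1(\bar q),\tilde c(M)\gamma]$ without singling out this case.
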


\begin{proof}

Fix $0<\epsilon<M$ and consider $\theta\in[\epsilon,M]=:\mathfrak{A}$. From part 1 of Lemma \ref{lem:bound}, we know that $(c_1(\theta),c_2(\theta))$ lives on the compact set $\cX:=[\underline{c}_1,\bar{c}_1]\times [\underline{c}_2,\bar{c}_2]$. Hence one can define the map $O:\cX\times \mathfrak{A}\mapsto\mathbb{R}^2$ via
	\begin{align*}
		O((v_1,v_2);\theta):=
		\begin{bmatrix}
				\frac{(v_2-1)^{\alpha}+\theta v_2^{\alpha}+k(1-v_1)^{\alpha}-\theta v_1^{\alpha}}{v_2-v_1}-\alpha[(v_2-1)^{\alpha-1}+\theta   v_2^{\alpha-1}]\\
					\frac{(v_2-1)^{\alpha}+\theta v_2^{\alpha}+k(1-v_1)^{\alpha}-\theta v_1^{\alpha}}{v_2-v_1}-\alpha[k (1-v_1)^{\alpha-1}+\theta   v_1^{\alpha-1}]
		\end{bmatrix}
	\end{align*}
and characterize $(c_1(\theta),c_2(\theta))\in\cX$ as the unique solution to the equation $O((c_1(\theta),c_2(\theta));\theta)=\vec{0}$. Trivially $O$ is continuous on $\cX\times\mathfrak{A}$, and hence the continuity of $(c_1(\theta),c_2(\theta))$ on $\theta\in\mathfrak{A}$ follows from Lemma \ref{lem:ctsdependence}. Since $\epsilon,M$ are arbitrary, we can extend the conclusion to $\theta\in(0,\infty)$. Continuity of $m_1(\theta)$ immediately follows since it is a continuous composition of $c_1(\theta)$ and $c_2(\theta)$.

Part 2 and 3 can be proven similarly. We will show the proof for part 4 which requires a slightly different argument. Recall that $I_1(q;\theta)$ is the unique solution to the equation $g_1(y;q,\theta)=0$ on $y\in(0,\tilde{c}(\theta)\gamma)$, where 
\begin{align*}
	g_1(y;q,\theta):=\alpha[k(\gamma-y)^{\alpha-1}+\theta y^{\alpha-1}]-q.
\end{align*}
Fix $0<\epsilon<M$ and $\bar{q}>0$. Define 
\begin{align*}
	\mathfrak{A}_{\epsilon,M,\bar{q}}:=\{(q,\theta):\theta\in[\epsilon,M],q\in[n_0(\theta),\bar{q}]\}.
\end{align*}
For any $(\theta,q)\in\mathfrak{A}_{\epsilon,M,\bar{q}}$, we have from part 4 of Lemma \ref{lem:bound} that $\underline{I}_1(\bar{q})\leq I_1(q;\theta)\leq \tilde{c}(M)\gamma$, i.e. $I_1(q;\theta)\in[\underline{I}_1(\bar{q}),\tilde{c}(M)\gamma]=:\cX_{\epsilon,M,\bar{q}}$ which is a compact set. We can therefore view $g_1$ as a map $g_1:\cX_{\epsilon,M,\bar{q}}\times\mathfrak{A}_{\epsilon,M,\bar{q}}\mapsto \bR$ and characterize $I_1(q;\theta)\in\cX_{\epsilon,M,\bar{q}}$ as the unique solution to the equation $g_1(I_1(q;\theta);q,\theta)=0$. Then $I_1(q;\theta)$ is continuous on $\mathfrak{A}_{\epsilon,M,\bar{q}}$ by Lemma \ref{lem:ctsdependence}. On letting $\epsilon\downarrow 0$, $M\uparrow \infty$ and $\bar{q}\uparrow \infty$, the continuity of $I_1(q;\theta)$ holds on $(q,\theta)\in\{(q,\theta)|\theta\in(0,\infty),q\in[\tilde{n}(\theta),\infty)\}$.
\end{proof}

\begin{prop}
Recall the definition of $\lambda^*$ in Case 1 and 2 in Lemma \ref{lem:auxsol}. View $\lambda^*=\lambda^*(\theta)$ as a function of $\theta$. Then:
\begin{enumerate}
	\item $\lambda^*(\theta)$ is continuous on $\theta\in(0,\infty)$.
	\item If $\gamma>e^{r\tau}$ and $k>0$, $\lambda^*(\theta)$ is continuous on $\theta\in(\underline{\theta},0]$.
\end{enumerate}
\label{prop:lambda_cont}
\end{prop}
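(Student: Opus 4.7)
The plan is to characterize $\lambda^*(\theta)$ as the unique solution to the equation $\zeta(\lambda,\theta)=1$, where
\begin{equation*}
	\zeta(\lambda,\theta):=\bE[Z_\tau y_\lambda(Z_\tau;\theta)]
\end{equation*}
with $y_\lambda(\cdot;\theta)$ as defined in Lemma \ref{lem:auxsol}. Given the strict monotonicity of $\lambda\mapsto\zeta(\lambda,\theta)$ established in the proof of Lemma \ref{lem:auxsol}, continuity of $\lambda^*(\theta)$ will follow from joint continuity of $\zeta$ on $(0,\infty)\times(0,\infty)$ for Part 1, and on $(0,\infty)\times(\underline{\theta},0]$ for Part 2.

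For the joint continuity of $\zeta$ I will invoke the dominated convergence theorem. Corollary \ref{cor:cont_fundamental_quantities} guarantees that $I_1$, $I_2$, $m_1$ and $m_2$ are continuous on their respective domains, so for each fixed $z>0$ with $\lambda_0 z\neq m_i(\theta_0)$ (where $i=1$ in Part 1 and $i=2$ in Part 2), the map $(\lambda,\theta)\mapsto y_\lambda(z;\theta)$ is continuous at $(\lambda_0,\theta_0)$. The exceptional $z$-values reduce to a single point, which is $Z_\tau$-null by atomlessness of the pricing kernel, so pointwise a.e. convergence holds along any sequence $(\lambda_n,\theta_n)\to(\lambda_0,\theta_0)$. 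To construct an integrable dominating function on a product neighborhood $[\lambda_0-\delta,\lambda_0+\delta]\times[\theta_0-\delta,\theta_0+\delta]$ (with $\delta$ small enough for the box to lie inside the relevant domain), I will use $y_\lambda(z;\theta)\leq I_2(\lambda z;\theta)$ together with the monotonicity of $I_2$ (non-increasing in its first argument, non-decreasing in its second) to dominate by $z\,I_2((\lambda_0-\delta)z;\theta_0+\delta)$ in Part 1 and by $z\,I_2((\lambda_0-\delta)z;0)=\gamma z+(\alpha/(\lambda_0-\delta))^{1/(1-\alpha)}z^{-\alpha/(1-\alpha)}$ in Part 2. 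Asymptotic analysis of the defining equation of $I_2$ yields $I_2(q;\theta)\sim(\alpha(1+\theta)/q)^{1/(1-\alpha)}$ as $q\to 0^+$, so the dominating integrand behaves like $z^{-\alpha/(1-\alpha)}$ near the origin and like $\gamma z$ at infinity. Since $Z_\tau$ is log-normal and thus possesses finite moments of every (positive and negative) order, this dominating function is integrable and DCT applies.

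With joint continuity of $\zeta$ secured, continuity of $\lambda^*$ at an arbitrary $\theta_0$ follows from a direct intermediate value argument: for $\epsilon>0$ small, strict monotonicity and continuity of $\zeta(\cdot,\theta_0)$ give $\zeta(\lambda^*(\theta_0)-\epsilon,\theta_0)>1>\zeta(\lambda^*(\theta_0)+\epsilon,\theta_0)$; joint continuity propagates these strict inequalities to all $\theta$ sufficiently close to $\theta_0$, and strict monotonicity of $\zeta(\cdot,\theta)$ then forces $\lambda^*(\theta)\in(\lambda^*(\theta_0)-\epsilon,\lambda^*(\theta_0)+\epsilon)$. Equivalently, having obtained a local compact bound on $\lambda^*$ by this very argument, one may alternatively invoke Lemma \ref{lem:ctsdependence} applied to $(\lambda,\theta)\mapsto\zeta(\lambda,\theta)-1$.

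The main technical obstacle is pinning down the integrable dominating function, since $I_2$ blows up at rate $q^{-1/(1-\alpha)}$ as $q\to 0^+$, which translates to the integrand $z\,y_\lambda(z;\theta)$ having a singularity of order $z^{-\alpha/(1-\alpha)}$ near zero; verifying that the log-normal density of $Z_\tau$ absorbs this singularity relies crucially on $Z_\tau$ possessing negative moments of every order. A secondary subtlety arises in Part 2 when $\theta_0$ is close to $\underline{\theta}$: the assumption $\gamma>e^{r\tau}$ rules out Case 4 of Lemma \ref{lem:auxsol} (by forcing $|\theta|^{-1/(1-\alpha)}>1-\gamma e^{-r\tau}$ trivially), ensuring the Case 2 formula for $y_\lambda$ is valid throughout $(\underline{\theta},0]$, and the continuity of $m_2$ and $I_2$ provided by Corollary \ref{cor:cont_fundamental_quantities} remains applicable on this range.
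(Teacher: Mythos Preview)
Your proposal is correct and follows essentially the same strategy as the paper: characterize $\lambda^*(\theta)$ as the unique zero of the budget map $\zeta(\lambda,\theta)-1$, establish joint continuity of $\zeta$ via the continuity of $I_1,I_2,m_1,m_2$ from Corollary~\ref{cor:cont_fundamental_quantities}, and deduce continuity of $\lambda^*$. The only difference in organization is that the paper first derives explicit a priori bounds $\underline{\lambda}\le\lambda^*(\theta)\le\bar{\lambda}$ on each compact $[\epsilon,M]$ (by monotone comparison with auxiliary budget functions $\bar{\varphi},\underline{\varphi}$) and then invokes Lemma~\ref{lem:ctsdependence} directly, whereas you obtain the local compactness of $\lambda^*$ as a byproduct of the IVT argument; your treatment of the dominated convergence step is in fact more explicit than the paper's, which merely asserts continuity of $\varphi$ without spelling out the dominating function.
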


\begin{proof}
Suppose we are in Case 1 with $\theta>0$. For $Y^*(\theta)\in\argmax_{Y\in\cY}\bE[F(Y;\theta)]$, $\lambda^*>0$ is the unique solution to the equation $\varphi(\lambda;\theta)=0$ where
\begin{align*}
	\varphi(\lambda;\theta):=\bE[Z_\tau Y^*(\theta)]-1=\int_0^\infty z[I_1(\lambda z;\theta)\I_{\{\lambda z>m_1(\theta)\}}+I_2(\lambda z;\theta)\I_{\{\lambda z\leq m_1(\theta)\}}]\ell(z)\rd z-1,
\end{align*}
and we recall that $\ell(\cdot)$ denotes the pdf of $Z_\tau$. Fix some arbitrary constants $0<\underline{\lambda}<\bar{\lambda}$ and $0<\epsilon<M$. For $\theta\in[\epsilon,M]$, let $\bar{m}_1:=\max_{\theta\in[\epsilon,M]}m_1(\theta)$ and define $\bar{\lambda}$ as the unique solution to the equation $\bar{\varphi}(\lambda)=0$ with
\begin{align*}
	\bar{\varphi}(\lambda):=\int_0^\infty z[I_1(\lambda z;M)\I_{\{\lambda z>\bar{m}_1\}}+I_2(\lambda z;M)\I_{\{\lambda z\leq \bar{m}_1\}}]\ell(z)\rd z-1.
\end{align*}
Since $\varphi(\lambda;\theta)$ and $\bar{\varphi}(\lambda)$ are both non-increasing in $\lambda$ and $	\varphi(\lambda;\theta)\leq \bar{\varphi}(\lambda)$, we must have $\lambda^*(\theta)\leq \bar{\lambda}$. Following a similar argument, we can deduce $\lambda^*(\theta)\geq \underline{\lambda}$ where $\underline{\lambda}$ is defined as the solution to $\underline{\varphi}(\lambda)$ with
\begin{align*}
	\underline{\varphi}(\lambda):=\int_0^\infty z I_2(\lambda z;M)\I_{\{\lambda z\leq \underline{m}_1\}}\ell(z)\rd z-1
\end{align*}
where $\underline{m}_1:=\min_{\theta\in[\epsilon,M]}m_1(\theta)$. As $\varphi:[\underline{\lambda},\bar{\lambda}]\times [\epsilon,M]\mapsto \bR$ is continuous due to the continuity of $I_1$, $I_2$ and $m_1$ as shown in Corollary \ref{cor:cont_fundamental_quantities}, Lemma \ref{lem:ctsdependence} again implies $\lambda^*(\theta)$ the solution to $\varphi(\lambda,\theta)=0$ is continuous on $[\epsilon,M]$, and the continuity can be extended to $(0,\infty)$ due to the arbitrariness of $\epsilon$ and $M$. Case 2 can be handled similarly where the optimizer $Y^*$ takes a simpler form. 
\end{proof}

\begin{lemm}
On $\theta\in(\underline{\theta},-1)$, we have
\begin{align*}
	c_3(\theta)\gamma\I_{\{q<m_2(\theta)\}}\leq I_2(q;\theta) \I_{\{q<m_2(\theta)\}}\leq \frac{\gamma}{1-|\theta|^{-\frac{1}{1-\alpha}}} \I_{\{q<m_2(\theta)\}}.
\end{align*}
\label{lem:I2_bound}
\end{lemm}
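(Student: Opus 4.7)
The plan is to show that $F(\cdot;\theta)$ defined in \eqref{eq:F} is strictly concave with unique interior maximum at $y = c_4(\theta)\gamma$ on the relevant range, so that $F'(\cdot;\theta)$ is a strictly decreasing bijection between $(\gamma,c_4(\theta)\gamma]$ and $[0,\infty)$. Granted this, $I_2(q;\theta)$ lives in this interval, and the announced bounds are just the values of $F'$ at the endpoints $c_3(\theta)\gamma$ (tangent point) and $c_4(\theta)\gamma$ (critical point).

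First, I would compute $F''(y;\theta)=\alpha(\alpha-1)[(y-\gamma)^{\alpha-2}+\theta y^{\alpha-2}]$ for $y>\gamma$ and show, by rearranging, that $F''(y;\theta)<0$ if and only if $y<c_*(\theta)\gamma$ where $c_*(\theta):=1/(1-|\theta|^{-1/(2-\alpha)})$. A short algebraic comparison using $1-\alpha<2-\alpha$ and $|\theta|>1$ gives $c_4(\theta)<c_*(\theta)$. Consequently $F(\cdot;\theta)$ is strictly concave on $(\gamma,c_*(\theta)\gamma)$, an interval containing $[c_3(\theta)\gamma,c_4(\theta)\gamma]$, so $F'(\cdot;\theta)$ is strictly decreasing on $(\gamma,c_4(\theta)\gamma]$.

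Next, I would record the two endpoint values of $F'$. From $c_4=1/(1-|\theta|^{-1/(1-\alpha)})$ a direct substitution gives $(c_4-1)^{\alpha-1}=|\theta|c_4^{\alpha-1}$, hence $F'(c_4(\theta)\gamma;\theta)=0$. Since $F'$ blows up to $+\infty$ as $y\downarrow\gamma$ and is strictly decreasing on $(\gamma,c_4(\theta)\gamma]$, for every $q\geq0$ the equation $F'(y;\theta)=q$ has its unique solution on $y>\gamma$ in this interval, and this solution is precisely $I_2(q;\theta)$. On the other hand, the construction of $\bar{F}$ in Lemma \ref{lem:shape_F} forces the line of slope $m_2(\theta)$ joining $(0,F(0;\theta))$ and $(c_3(\theta)\gamma,F(c_3(\theta)\gamma;\theta))$ to be tangent to $F$ at the right endpoint, i.e.\ $F'(c_3(\theta)\gamma;\theta)=m_2(\theta)$ (this is exactly \eqref{eq:c3}).

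Finally, on the event $\{q<m_2(\theta)\}$, strict monotonicity of $F'$ together with $F'(c_3(\theta)\gamma;\theta)=m_2(\theta)$ and $F'(c_4(\theta)\gamma;\theta)=0\leq q$ yields
\begin{equation*}
c_3(\theta)\gamma \;<\; I_2(q;\theta) \;\leq\; c_4(\theta)\gamma \;=\; \frac{\gamma}{1-|\theta|^{-\frac{1}{1-\alpha}}},
\end{equation*}
so multiplying through by $\I_{\{q<m_2(\theta)\}}$ produces the claimed chain. The only place where one must be careful is the non-monotonicity of $F'$ on the whole of $(\gamma,\infty)$ when $\theta<-1$: the argument relies on identifying the unique $y>\gamma$ with $F'(y)=q\geq0$ as lying in the concave region $(\gamma,c_4(\theta)\gamma]$, which follows because $F'$ is strictly negative on $(c_4(\theta)\gamma,\infty)$ (attaining its minimum at $c_*(\theta)\gamma$ and returning to $0^-$ at infinity). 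Once this is observed, the rest is just reading off endpoint values.
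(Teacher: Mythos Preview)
Your proof is correct and follows essentially the same approach as the paper: both arguments exploit that $I_2(\cdot;\theta)$ is non-increasing in $q$ and then simply evaluate at the endpoints $q=m_2(\theta)$ (giving $c_3(\theta)\gamma$) and $q=0$ (giving $c_4(\theta)\gamma$). The paper's proof is terser because it invokes the monotonicity of $I_2$ as an already-established fact, whereas you re-derive it from the concavity of $F$ on $(\gamma,c_*(\theta)\gamma)$ and explicitly check that the unique root of $F'(\cdot;\theta)=q$ on $(\gamma,\infty)$ lies in $(\gamma,c_4(\theta)\gamma]$ by ruling out the region where $F'<0$; this extra care is a nice touch but not logically new.
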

\begin{proof}
Using the monotonicity of $I_2(q;\theta)$, we have $I_2(q;\theta)\I_{\{q<m_2(\theta)\}}\leq I_2(0;q) \I_{\{q<m_2(\theta)\}}=\frac{\gamma}{1-|\theta|^{-\frac{1}{1-\alpha}}} \I_{\{q<m_2(\theta)\}}$ where $I_2(0;\theta)$ is computed explicitly from the equation $\alpha[(y-\gamma)^{\alpha-1}+\theta y^{\alpha-1}]=0$. Similarly, for the lower bound we have
\begin{align*}
	I_2(q;\theta)  \I_{\{q<m_2(\theta)\}}\geq  I_2(m_2(\theta);\theta) \I_{\{q<m_2(\theta)\}}=c_3(\theta)\gamma \I_{\{q<m_2(\theta)\}},
\end{align*}
where the last equality holds because by construction of $m_2$ in \eqref{eq:m2}, $y=c_3\gamma$ is clearly the (unique) solution to the equation in \eqref{eq:I2_eq}.
\end{proof}

\begin{theo}
	$H(\theta)$ is continuous on $\theta\in(\underline{\theta},\infty)$. Moreover, if $\gamma>e^{r\tau}$ and $k>0$, then $\lim_{\theta\downarrow \underline{\theta}}H(\theta)=\underline{H}$ where $\underline{H}$ is defined in \eqref{eq:cri_h}.
	\label{thm:continuity}
\end{theo}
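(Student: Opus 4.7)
The plan is to prove continuity of $H$ on $(\underline{\theta},\infty)$ by a dominated convergence argument applied to the explicit representation of the optimizer in Lemma \ref{lem:auxsol}, treating the three regimes (Case 1: $\theta>0$; Case 2: $\theta\in[-1,0]$ or the non-degenerate part of $(\underline{\theta},-1)$) and the transition points $\theta=0,-1$ separately. For the second claim, we would pass to the limit $\theta\downarrow\underline{\theta}$ inside the Case 2 formula after showing that the support of the optimizer collapses to $\{0,\gamma(1+k^{-1/(1-\alpha)})\}$, with the mass location pinned down by the budget constraint.

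\emph{Continuity within each regime.} Fix $\theta_0\in(\underline{\theta},\infty)\setminus\{0,-1\}$ and take $\theta_n\to\theta_0$ staying in the same regime. By Corollary \ref{cor:cont_fundamental_quantities} and Proposition \ref{prop:lambda_cont}, the ingredients $m_1(\theta),m_2(\theta),c_3(\theta),I_1(q;\theta),I_2(q;\theta),\lambda^*(\theta)$ are continuous in $\theta$ and jointly continuous in $(q,\theta)$. Since $Z_\tau$ has an absolutely continuous law, the threshold events $\{\lambda^*(\theta_0)Z_\tau=m_i(\theta_0)\}$ are $\bP$-null, so $Y^*(\theta_n)(\omega)\to Y^*(\theta_0)(\omega)$ for $\bP$-a.e.\ $\omega$. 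A dominating function for $(Y^*(\theta_n))^\alpha$ is obtained from Lemma \ref{lem:bound}: the $I_1$-part is bounded by a constant, while the $I_2$-part is bounded by $I_2(\underline{\lambda}Z_\tau;M)$ for suitable $\underline{\lambda}>0$ and $M$, using the monotonicity of $I_2$ in $(q,\theta)$ and the a priori bounds on $\lambda^*$. Integrability of $(I_2(\underline{\lambda}Z_\tau;M))^\alpha$ follows from the asymptotic $I_2(q;M)\sim(M\alpha/q)^{1/(1-\alpha)}$ as $q\downarrow 0$ combined with $\bE[Z_\tau^{-\alpha/(1-\alpha)}]<\infty$ (log-normality of $Z_\tau$). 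Dominated convergence then yields $H(\theta_n)\to H(\theta_0)$.

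\emph{Transition points.} At $\theta_0=0$, examining the defining systems \eqref{eq:c1c2} and \eqref{eq:c3} one checks that, as $\theta\downarrow0$, $c_1(\theta)\to 0$ (using $c_1\leq\tilde{c}\gamma\to 0$), $c_2(\theta)\to c_3(0)$, and $m_1(\theta)\to m_2(0)$; continuity of $\lambda^*$ across $\theta=0$ is then obtained from the fact that $\lambda^*$ is the unique root of a continuous, strictly monotone function in $\lambda$ whose parameters vary continuously in $\theta$. Consequently the Case 1 formula for $Y^*$ degenerates into the Case 2 formula, and the DCT argument of the previous paragraph applies unchanged. The transition at $\theta_0=-1$ is easier because the formula for $Y^*$ is identical on both sides: only the activation of the upper bound $c_4(\theta)\gamma$ changes, and this bound is inactive in a neighbourhood of $\theta_0=-1$.

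\emph{Limit as $\theta\downarrow\underline{\theta}$.} Under $\gamma>e^{r\tau}$ and $k>0$, a direct calculation gives $|\underline{\theta}|^{-1/(1-\alpha)}=(1+k^{1/(1-\alpha)})^{-1}>1-\gamma e^{-r\tau}$, so for $\theta$ close enough to $\underline{\theta}$ we remain in the non-degenerate Case 2 with $Y^*(\theta)=I_2(\lambda^*(\theta)Z_\tau;\theta)\I_{\{\lambda^*(\theta)Z_\tau\leq m_2(\theta)\}}$. One checks that $c_3(\theta)$ and $c_4(\theta)$ both converge to $1+k^{-1/(1-\alpha)}$ as $\theta\downarrow\underline{\theta}$ (they collapse to the common location at which $F(\cdot;\underline{\theta})$ attains its tied maxima), and Lemma \ref{lem:I2_bound} squeezes $I_2(\lambda^*(\theta)Z_\tau;\theta)$ on its support to the constant $\gamma(1+k^{-1/(1-\alpha)})$. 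Along any convergent subsequence of the bounded family $\{(\lambda^*(\theta),m_2(\theta))\}$, passing to the limit in the binding budget constraint $\bE[Z_\tau Y^*(\theta)]=1$ gives $\gamma(1+k^{-1/(1-\alpha)})\int_0^{\eta_\infty}z\ell(z)\rd z=1$, which uniquely identifies $\eta_\infty=\eta^*$ thanks to the strict monotonicity used in the proof of Proposition \ref{prop:H}. Dominated convergence (with the same dominating function as before) then delivers $H(\theta)\to\gamma^\alpha(1+k^{-1/(1-\alpha)})^\alpha L(\eta^*)=\underline{H}$.

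\emph{Main obstacle.} The principal technical difficulty is twofold: securing a uniform integrable dominating function for $(Y^*(\theta))^\alpha$ across parameter neighborhoods (the $I_2$-piece is unbounded as $Z_\tau\downarrow 0$), and handling the limit at $\underline{\theta}$, where the support of the optimizer collapses to two atoms and the budget constraint must be used to select the unique atom location $\eta^*$ among the continuum of possibilities $H(\underline{\theta})=[0,\underline{H}]$ obtained in Proposition \ref{prop:H}.
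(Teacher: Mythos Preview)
Your proposal is correct and follows essentially the same route as the paper: explicit optimizer from Lemma \ref{lem:auxsol}, continuity of the ingredients from Corollary \ref{cor:cont_fundamental_quantities} and Proposition \ref{prop:lambda_cont}, and dominated convergence with the $I_2$-piece dominated via $I_2(\underline{\lambda}Z_\tau;M)^\alpha$ and its small-$q$ asymptotic. You are in fact more careful than the paper at the transitions $\theta=0,-1$, which the paper handles only by the phrase ``by a similar argument''.

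One point to tighten: in the limit $\theta\downarrow\underline{\theta}$ you speak of ``any convergent subsequence of the bounded family $\{(\lambda^*(\theta),m_2(\theta))\}$'', but $m_2(\theta)\to 0$ at $\underline{\theta}$ (direct computation with $c_3(\underline{\theta})=1+k^{-1/(1-\alpha)}$) and $\lambda^*(\theta)$ is only known to be continuous on $(\underline{\theta},0]$, so boundedness of the pair is not established. What you actually need, and what your budget-constraint argument delivers, is that every subsequential limit in $[0,\infty]$ of the \emph{ratio} $\eta(\theta):=m_2(\theta)/\lambda^*(\theta)$ equals $\eta^*$; the endpoints $0$ and $\infty$ are ruled out by the same budget identity under $\gamma>e^{r\tau}$. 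The paper reaches the identical conclusion by a slightly different device: Fatou's lemma on the budget constraint gives $\liminf\eta\leq\eta^*$, the upper squeeze from Lemma \ref{lem:I2_bound} gives $\liminf\eta\geq\eta^*$, and then monotonicity of $H$ (Lemma \ref{lem:H_increase}) upgrades the subsequential limit to a full limit. Either argument is fine; just rephrase yours in terms of the ratio.
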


\begin{proof}

We first show that $H(\theta)$ is continuous on $[\epsilon,M]$ for some $0<\epsilon<M$. In the case of $\theta>0$, $H(\theta)$ is a singleton given by
\begin{align}
	H(\theta)=\int_0^\infty [I_1(\lambda^*(\theta) z;\theta)^\alpha\I_{\{\lambda^*(\theta) z>m_1(\theta)\}}+I_2(\lambda^*(\theta) z;\theta)^\alpha\I_{\{\lambda^*(\theta) z\leq m_1(\theta)\}}]\ell(z)\rd z.
\label{eq:integrand_H}
\end{align}
The integrand in \eqref{eq:integrand_H} is continuous in $\theta$ since $m_1(\cdot)$, $\lambda^*(\cdot)$, $I_1(\cdot;\cdot)$ and $I_2(\cdot;\cdot)$ are all continuous by Corollary \ref{cor:cont_fundamental_quantities}. Moreover, \eqref{eq:integrand_H} has an upper bounded of
\begin{align*}
&[I_1(\lambda^*(\theta) z;\theta)^\alpha\I_{\{\lambda^*(\theta) z>m_1(\theta)\}}+I_2(\lambda^*(\theta) z;\theta)^\alpha\I_{\{\lambda^*(\theta) z\leq m_1(\theta)\}}]\ell(z)\\
&\leq
	[I_1(m_1(\theta);\theta)^\alpha+I_2(\underline{\lambda} z;M)^\alpha\I_{\{ z\leq \bar{m}/\underline{\lambda}\}}]\ell(z)\\
	&=(c_1(\theta)\gamma)^\alpha \ell(z) + I_2(\underline{\lambda} z;M)^\alpha\I_{\{ z\leq \bar{m}/\underline{\lambda}\}}\ell(z)\\
	&\leq \gamma^\alpha \ell(z) + I_2(\underline{\lambda} z;M)^\alpha\I_{\{ z\leq \bar{m}/\underline{\lambda}\}}\ell(z).
\end{align*}
Here $\underline{\lambda}$ and $\bar{m}$ are defined as in the proof of Proposition \ref{prop:lambda_cont}. The last expression is integrable using the fact that $I_2(q)\propto q^{-\frac{1}{1-\alpha}}$ for small $q$ (page ec19, e-companion of \cite{TsZh21}). Continuity of $H(\theta)$ on $\theta\in[\epsilon,M]$ now follows from dominated convergence theorem. As $\epsilon$ and $M$ are arbitrary, we conclude $H(\theta)$ is continuous on $(0,\infty)$. By a similar argument, we can deduce $H(\theta)$ is continuous on $(\underline{\theta},0]$.

Finally, we show that $\lim_{\theta\downarrow\underline{\theta}}H(\theta)=\underline{H}$. Using the continuity of $c_3(\theta)$ on $[\underline{\theta},0]$ under the stated assumption of $k>0$, $\lim_{\theta\downarrow\underline{\theta}}c_3(\theta)=c_3(\underline{\theta})$ where $c_3(\underline{\theta})$ is simply the solution to the equation
\begin{align*}
	\frac{(c_3(\underline{\theta})-1)^{\alpha}-(1+k^{\frac{1}{1-\alpha}})^{1-\alpha} c_3(\underline{\theta})^{\alpha}+k}{c_3(\underline{\theta})}=\alpha[(c_3(\underline{\theta})-1)^{\alpha-1}+\theta  c_3(\underline{\theta})^{\alpha-1}].
\end{align*}
This equation admits an explicit solution of $c_3(\underline{\theta})=1+k^{-\frac{1}{1-\alpha}}$. For $\theta\in(\underline{\theta},0)$, we have the budget constraint
\begin{align*}
	1=\bE[Z_\tau Y^*]=\int_0^\infty z I_2(\lambda^*(\theta)z;\theta)\I_{\{z<\frac{m_2(\theta)}{\lambda^*(\theta)}\}}\ell(z)\rd z.
\end{align*}
Then by Fatou's lemma, Lemma \ref{lem:I2_bound} and the fact that $c_3(\underline{\theta})=1+k^{-\frac{1}{1-\alpha}}$,
\begin{align*}
	1&\geq \int_0^\infty \liminf_{\theta\downarrow\underline{\theta}} [ I_2(\lambda^*(\theta)z;\theta)\I_{\{z<\frac{m_2(\theta)}{\lambda^*(\theta)}\}}\ell(z)]\rd z
	\geq  \int_0^\infty z\gamma (1+k^{-\frac{1}{1-\alpha}})\I_{\{z<\liminf_{\theta\downarrow\underline{\theta}}\frac{m_2(\theta)}{\lambda^*(\theta)}\}}\ell(z) \rd z.
\end{align*}
Then as $\eta\mapsto \int_0^\eta \gamma (1+k^{-\frac{1}{1-\alpha}})z\ell(z)\rd z=:\gamma (1+k^{-\frac{1}{1-\alpha}})R(\eta)$ is strictly increasing, we have
\begin{align*}
	\liminf_{\theta\downarrow\underline{\theta}} \frac{m_2(\theta)}{\lambda^*(\theta)}\leq \eta^*
\end{align*}
where $\eta^*$ is defined Proposition \ref{prop:H}. Note that $\eta^*$ satisfies $1=\gamma (1+k^{-\frac{1}{1-\alpha}})R(\eta^*)\iff \eta^*=R^{-1}\left(\frac{1}{\gamma (1+k^{-\frac{1}{1-\alpha}})}\right)$.

Now, one can choose a converging subsequence $\{\theta_k\}_{k\in\bN}$ with $\theta_k\downarrow\underline{\theta}$ such that
\begin{align*}
	\lim_{k\to\infty}\frac{m_2(\theta_k)}{\lambda^*(\theta_k)}=\liminf_{\theta\downarrow\underline{\theta}} \frac{m_2(\theta)}{\lambda^*(\theta)}\leq \eta^*.
\end{align*} 
On the other hand, with Lemma \ref{lem:I2_bound} again we have
\begin{align*}
	1&=\int_0^\infty z I_2(\lambda^*(\theta)z;\theta)\I_{\{z<\frac{m_2(\theta)}{\lambda^*(\theta)}\}}\ell(z)\rd z\leq \int_0^{\frac{m_2(\theta)}{\lambda^*(\theta)}}\frac{\gamma}{1-|\theta|^{-\frac{1}{1-\alpha}}}  z\ell(z)\rd z=\frac{\gamma}{1-|\theta|^{-\frac{1}{1-\alpha}}}R\left(\frac{m_2(\theta)}{\lambda^*(\theta)}\right)
\end{align*}
and hence
\begin{align*}
		\liminf_{\theta\downarrow\underline{\theta}}\frac{m_2(\theta)}{\lambda^*(\theta)}\geq  \liminf_{\theta\downarrow\underline{\theta}} R^{-1}\left( \frac{1-|\theta|^{-\frac{1}{1-\alpha}}}{\gamma}\right)\geq R^{-1}\left( \frac{1-|\underline{\theta}|^{-\frac{1}{1-\alpha}}}{\gamma}\right)=R^{-1}\left(\frac{1}{\gamma (1+k^{-\frac{1}{1-\alpha}})}\right)=\eta^*.
\end{align*}
We therefore conclude
\begin{align*}
		\lim_{k\to\infty}\frac{m_2(\theta_k)}{\lambda^*(\theta_k)}=\liminf_{\theta\downarrow\underline{\theta}} \frac{m_2(\theta)}{\lambda^*(\theta)}= \eta^*.
\end{align*}
Finally, since $H(\theta)$ is monotonic, all converging monotonic subsequences have the same limit such that
\begin{align*}
	\lim_{\theta\downarrow\underline{\theta}}H(\theta)&=\lim_{k\to\infty}H(\theta_k)\\
	&=\lim_{k\to\infty}\int_0^\infty  [I_2(\lambda^*(\theta_k)z;\theta_k)]^\alpha\I_{\{z<\frac{m_2(\theta_k)}{\lambda^*(\theta_k)}\}}\ell(z)\rd z\\	
	&=\int_0^{\eta^*} \gamma^\alpha(1+k^{-\frac{1}{1-\alpha}})^\alpha\ell(z)\rd z=L(\eta^*)\gamma^\alpha(1+k^{-\frac{1}{1-\alpha}})^\alpha=\underline{H}
\end{align*} 
by Lemma \ref{lem:I2_bound} and bounded convergence theorem.
\end{proof}

We now recall \eqref{eq:g_main} that $G:[0,e^{h\tau}]\mapsto 2^{[0,e^{h\tau}]}$ is defined as
\begin{align*}
	G(\xi):=\left\{\bE[Y^\alpha] \: \Bigl | \: Y\in\argmax_{Y\in\cY}\bE\left[F\left(Y;\: \frac{\beta}{1-(1-\beta)e^{-\delta\tau}\xi}\theta^*\left(\frac{\beta}{1-(1-\beta)e^{-\delta\tau}\xi}\right)\right)\right]\right\},
\end{align*}
where the function $\theta^*(\cdot)$ is defined in Proposition \ref{lemm_revise}. Clearly, $H$ and $G$ are linked via $$G(\xi)=H\left(\frac{\beta}{1-(1-\beta)e^{-\delta\tau}\xi}\theta^*\left(\frac{\beta}{1-(1-\beta)e^{-\delta\tau}\xi}\right)\right).$$  

\begin{prop}	
$G$ has the following properties:
	\begin{enumerate}
		\item If $\theta^*(0)>0$, then $G(\xi)$ is a singleton and is a continuous, monotonically increasing and strictly positive function on $\xi\in[0,e^{h\tau}]$.
		\item If $\theta^*(0)=0$, then $G(\xi)=\left\{\mathbb{E}(Y^{\alpha})|Y\in\argmax_{Y\in\mathcal{Y}}\mathbb{E}\left[U(Y-\gamma)\right]\right\}$ which is a strictly positive constant singleton for all $\xi\in[0,e^{h\tau}]$.
		\item If $\theta^*(0)<0$, then $G(\xi)$ is non-increasing.\footnote{Throughout this section, a set-valued map $f(x)$ is said to be non-decreasing (resp. non-increasing) if for any $x_2>x_1$ we have $f_1\leq f_2$ (resp. $f_2\leq f_1$) for all $f_i\in f(x_i)$ with $i\in\{1,2\}$.} Moreover, for
		\begin{align}
			\underline{\xi}:=\frac{1}{(1-\beta)e^{-\delta\tau}}\left[1-\frac{\beta\gamma^{\alpha}e^{-\delta\tau}}{(1+k^{-\frac{1}{1-\alpha}})^{1-\alpha}}\right],
			\label{eq:xi_low}
		\end{align}
		we have:
		\begin{enumerate}
			\item If $\underline{\xi}<0$, then $G(\xi)=\{0\}$ for all $\xi\in[0,e^{h\tau}]$.
			\item If $\underline{\xi}>e^{h\tau}$, then $G(\xi)$ is a strictly positive singleton for all $\xi\in[0,e^{h\tau}]$.
			\item If $\underline{\xi}\in[0,e^{h\tau}]$, then $G(\xi)$ is a singleton for $\xi\neq\underline{\xi}$. $G(\xi)>0$ and is continuous on $\xi\in[0,\underline{\xi})$, $G(\xi)=\{0\}$ on $\xi\in(\underline{\xi},e^{h\tau}]$, and $G(\underline{\xi})=[0,\underline{H}]$ where $\underline{H}$ is defined in \eqref{eq:cri_h}. Moreover, if $\underline{\xi}>0$ then $\lim_{\xi\uparrow\underline{\xi}}G(\xi)=\underline{H}$.
		\end{enumerate}
	\end{enumerate}
	\label{prop:G_properties}
\end{prop}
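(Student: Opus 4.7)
The plan is to exploit the decomposition $G(\xi)=H(\Theta(\xi))$, where $\Theta(\xi):=\kappa(\xi)\theta^*(\kappa(\xi))$ and $\kappa(\xi):=\beta/(1-(1-\beta)e^{-\delta\tau}\xi)$. First I would verify, using the standing assumption $\delta>h$, that $\kappa$ is a continuous strictly increasing bijection from $[0,e^{h\tau}]$ onto $[\beta,\kappa_{\max}]\subset[\beta,1)$, where $\kappa_{\max}:=\beta/(1-(1-\beta)e^{-(\delta-h)\tau})$. The properties of $G$ then reduce to combining this with the properties of $\theta^*$ from Proposition \ref{lemm_revise} and of $H$ from Proposition \ref{prop:H}, Lemma \ref{lem:H_increase} and Theorem \ref{thm:continuity}.

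For case (1) with $\theta^*(0)>0$, Proposition \ref{lemm_revise} gives $\theta^*(\kappa)>0$ continuous and strictly increasing, so $\Theta>0>\underline{\theta}$ is continuous and strictly increasing in $\xi$; composing with $H$, which is continuous, non-decreasing and singleton-valued on $(\underline{\theta},\infty)$, yields the stated properties, and strict positivity of $G$ follows from Case 1 of Lemma \ref{lem:auxsol} where $\bP(Y^*>0)=1$. For case (2) with $\theta^*(0)=0$, Proposition \ref{lemm_revise} forces $\theta^*\equiv 0$, hence $\Theta\equiv 0$ and $G(\xi)=H(0)$ is a constant singleton (uniqueness at $\theta=0\neq\underline{\theta}$ comes from Proposition \ref{prop:uniqueness}); strict positivity follows because Remark \ref{rem:impossible} forces $k>0$ here, so $Y^*\equiv 0$ would give $\bE[U(-\gamma)]=-k\gamma^\alpha<0$, contradicting $\sup_{Y\in\cY}\bE[U(Y-\gamma)]=0$.

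For case (3) with $\theta^*(0)<0$, parts 3 and 4 of Proposition \ref{lemm_revise} together imply that $\kappa\theta^*(\kappa)$ is continuous and strictly decreasing in $\kappa$, so $\Theta$ is strictly decreasing in $\xi$ and $G$ is non-increasing. The critical step is to identify $\underline{\xi}$ as the threshold at which $\Theta(\underline{\xi})=\underline{\theta}$. I would establish this by analyzing the Bellman map $T(\theta):=e^{-\delta\tau}\Phi(\theta)$, which is the constant $-e^{-\delta\tau}k\gamma^\alpha$ for $\theta\leq\underline{\theta}$: a short iterative argument seeded at $\theta=\underline{\theta}$ shows that for $\kappa\geq\tilde{\kappa}:=|\underline{\theta}|/(e^{-\delta\tau}k\gamma^\alpha)=(1+k^{-\frac{1}{1-\alpha}})^{1-\alpha}/(e^{-\delta\tau}\gamma^\alpha)$ (using the identity $|\underline{\theta}|=k(1+k^{-\frac{1}{1-\alpha}})^{1-\alpha}$), one has $\theta^*(\kappa)=-e^{-\delta\tau}k\gamma^\alpha$ with $\kappa\theta^*(\kappa)\leq\underline{\theta}$, while for $\kappa<\tilde{\kappa}$ one has $\kappa\theta^*(\kappa)>\underline{\theta}$ strictly. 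A direct substitution then confirms $\kappa(\underline{\xi})=\tilde{\kappa}$, and Remark \ref{rem:impossible} ensures $\gamma>e^{r\tau}$ and $k>0$ in this regime so Proposition \ref{prop:H} part 4 is applicable. The sub-cases then follow: (a) $\underline{\xi}<0\iff\tilde{\kappa}<\beta$ forces $\kappa(\xi)>\tilde{\kappa}$ throughout $[0,e^{h\tau}]$ and thus $\Theta<\underline{\theta}$, giving $G\equiv\{0\}$ by Lemma \ref{lem:auxsol} Case 3; (b) $\underline{\xi}>e^{h\tau}\iff\tilde{\kappa}>\kappa_{\max}$ keeps $\Theta>\underline{\theta}$ throughout, making $G$ a strictly positive continuous singleton; (c) $\underline{\xi}\in[0,e^{h\tau}]$ corresponds to $\tilde{\kappa}\in[\beta,\kappa_{\max}]$, in which case $G$ is a strictly positive continuous singleton on $[0,\underline{\xi})$, equals $\{0\}$ on $(\underline{\xi},e^{h\tau}]$, and $G(\underline{\xi})=H(\underline{\theta})=[0,\underline{H}]$ by Proposition \ref{prop:H} part 4, while $\lim_{\xi\uparrow\underline{\xi}}G(\xi)=\underline{H}$ comes directly from Theorem \ref{thm:continuity}. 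The hardest step will be this threshold identification, specifically the iterative fixed-point argument pinning down $\theta^*(\kappa)=-e^{-\delta\tau}k\gamma^\alpha$ on $\kappa\geq\tilde{\kappa}$ and the subsequent algebraic manipulation showing $\kappa(\underline{\xi})=\tilde{\kappa}$, since this is what ties the explicit formula for $\underline{\xi}$ to the structural regimes of Proposition \ref{lemm_revise}.
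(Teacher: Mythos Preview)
Your proposal is correct and follows essentially the same route as the paper: both decompose $G=H\circ\Theta$ with $\Theta(\xi)=\kappa(\xi)\theta^*(\kappa(\xi))$, and both derive the sub-cases of part (3) from the key observation that $\Phi$ is constant (equal to $-k\gamma^\alpha$) on $(-\infty,\underline{\theta}]$, which pins down $\theta^*(\kappa)=-e^{-\delta\tau}k\gamma^\alpha$ whenever $\kappa\theta^*(\kappa)\leq\underline{\theta}$. The only presentational difference is that the paper argues backward (assume $\Theta(\xi)=\underline{\theta}$, use the fixed-point identity to get $\theta^*(\kappa(\xi))=-e^{-\delta\tau}k\gamma^\alpha$, solve for $\xi$), whereas you first isolate $\tilde{\kappa}$ and then verify $\kappa(\underline{\xi})=\tilde{\kappa}$; the algebra is identical.
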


\begin{proof}
	
	Let
	\begin{align}
		\Theta(\xi):=\frac{\beta }{1-(1-\beta)e^{-\delta\tau}\xi} \theta^*\left(\frac{\beta}{1-(1-\beta)e^{-\delta\tau}\xi}\right).
	\label{eq:Theta_xi}
	\end{align}
	We first claim that, in the case of $\theta^*(0)<0$ (which implies $k>0$ and $\gamma>e^{r\tau}$), we have $\Theta(\xi)\gtreqqless\underline{\theta}$ if and only if $\xi \lesseqqgtr\underline{\xi}$. Suppose there exists $\xi$ such that $\Theta(\xi)=\underline{\theta}<0$. By Lemma \ref{lem:auxsol}, we know $Y^*=0$ is an optimizer to \eqref{eq:axu} when $\theta=\underline{\theta}$. Then
	\begin{align*}
		-k\gamma^\alpha e^{-\delta\tau}=e^{-\delta \tau}F(Y=0;\underline{\theta})=e^{-\delta\tau}\sup_{Y\in\cY}\bE[F(Y;\Theta(\xi))]&=e^{-\delta\tau}\Phi\left(\frac{\beta }{1-(1-\beta)e^{-\delta\tau}\xi} \theta^*\left(\frac{\beta}{1-(1-\beta)e^{-\delta\tau}\xi}\right)\right)\\
		&=\theta^*\left(\frac{\beta}{1-(1-\beta)e^{-\delta\tau}\xi}\right)
	\end{align*}
	by construction that $\theta^*(\kappa)$ is the fixed point of $\theta\mapsto e^{-\delta\tau}\Phi(\kappa\theta)$. But then we must have
	\begin{align*}
		-(1+k^{\frac{1}{1-\alpha}})^{1-\alpha}=	:\underline{\theta}=\Theta(\xi)=\frac{\beta }{1-(1-\beta)e^{-\delta\tau}\xi} \theta^*\left(\frac{\beta}{1-(1-\beta)e^{-\delta\tau}\xi}\right)=-\frac{\beta 	k\gamma^\alpha e^{-\delta\tau}}{1-(1-\beta)e^{-\delta\tau}\xi}
	\end{align*}
	which implies
	\begin{align*}
		\xi=\frac{1}{(1-\beta)e^{-\delta\tau}}\left[1-\frac{\beta\gamma^{\alpha}e^{-\delta\tau}}{(1+k^{-\frac{1}{1-\alpha}})^{1-\alpha}}\right]=:\underline{\xi}.
	\end{align*}
	The claim now follows from Proposition \ref{lemm_revise} that $\theta^*(\cdot)$ is strictly negative and non-increasing when $\theta^*(0)<0$ such that $\Theta(\xi)$ is strictly decreasing.
	
	Since $G(\xi)=H(\Theta(\xi))$, the stated results now easily follow from the properties of $H(\cdot)$ in Proposition \ref{prop:H} and those of $\theta^*(\cdot)$ in Proposition \ref{lemm_revise}. Note that $G(\xi)=\{0\}$ if and only if $\xi$ is such that $\sup_{Y\in\cY}\bE[F(Y;\Theta(\xi))]$ has a unique optimizer being zero almost surely, which in turn occurs if and only if $\Theta(\xi)<\underline{\theta}<0$. Hence $G(\xi)$ must be strictly positive if $\theta^*(0)> 0$ under which $\Theta(\xi)\geq 0$. When $\theta^*(0)<0$, $G(\xi)$ is strictly positive singleton if $\Theta(\xi)>\underline{\theta}\iff \xi<\underline{\xi}$.
	
	As a remark on the special case of $\theta^*(0)=0$, this implies $\Theta(\xi)=0$ due to Proposition \ref{lemm_revise} and hence 
	\begin{align*}
		G(\xi)=H(0)=\left\{\bE[Y^\alpha]\Bigl|Y\in\argmax_{Y\in\cY}\bE[F(Y;0)]\right\}=\left\{\bE[Y^\alpha]\Bigl|Y\in\argmax_{Y\in\cY}\bE[U(Y-\gamma)]\right\},
	\end{align*}
	which is a strictly positive singleton as the optimizer is non-degenerate and unique under $\theta=0$.
\end{proof}

The lemma below is a useful result for establishing the comparative statics with respect to $\beta$.
\begin{lemm}
	Write $G=G(\xi;\beta)$ to stress the dependence of $G$ on $\beta$. For any fixed $\xi\in[0,e^{h\tau}]$, $G(\xi;\beta)$ is non-decreasing (resp. non-increasing) in $\beta$ if $\theta^*(0)>0$ (resp. $\theta^*(0)<0$). 
	\label{lem:G_beta_mono}
\end{lemm}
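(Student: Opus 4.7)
The plan is to factor the map as $G(\xi;\beta) = H(\Theta(\xi;\beta))$ where $\Theta(\xi;\beta) := \kappa(\beta)\,\theta^*(\kappa(\beta))$ and $\kappa(\beta) := \beta/[1-(1-\beta)e^{-\delta\tau}\xi]$, and to reduce the desired monotonicity in $\beta$ to (i) the monotonicity of $\kappa\mapsto \kappa\theta^*(\kappa)$ recorded in Proposition \ref{lemm_revise} and (ii) the monotonicity of $H$ recorded in Lemma \ref{lem:H_increase}. First I would check that $\kappa(\beta)$ is strictly increasing on $[0,1]$ with $\kappa(0)=0$ and $\kappa(1)=1$: under the standing assumption \eqref{eq:assump} we have $e^{-\delta\tau}\xi\leq e^{-(\delta-h)\tau}<1$, so the denominator stays positive, and a direct derivative gives $\kappa'(\beta)=(1-e^{-\delta\tau}\xi)/[1-(1-\beta)e^{-\delta\tau}\xi]^2>0$.

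In the easy case $\theta^*(0)>0$, Proposition \ref{lemm_revise}(2) tells me that $\theta^*(\kappa)>0$ is strictly increasing, so $\kappa\theta^*(\kappa)$ is a product of strictly positive, strictly increasing functions, hence strictly increasing in $\kappa$. Composing with $\beta\mapsto\kappa(\beta)$, the map $\beta\mapsto\Theta(\xi;\beta)$ is strictly increasing and stays strictly positive. Since $\Theta(\xi;\beta)>0>\underline{\theta}$, Proposition \ref{prop:H}(2) guarantees that $H(\Theta(\xi;\beta))$ is a singleton, and Lemma \ref{lem:H_increase} then yields $G(\xi;\beta)$ non-decreasing in $\beta$.

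In the case $\theta^*(0)<0$, Proposition \ref{lemm_revise}(3)-(4) gives $\theta^*(\kappa)<0$ and either strictly decreasing on $[0,1]$ or strictly decreasing on $[0,\underline{\kappa}]$ followed by the constant value $\theta^*(1)$ on $[\underline{\kappa},1]$. I would verify that in both sub-cases $\kappa\mapsto \kappa\theta^*(\kappa)$ is strictly decreasing on $[0,1]$: on any sub-interval where $\theta^*$ is strictly decreasing, pick $0\leq\kappa_1<\kappa_2$ and chain
\begin{equation*}
	\kappa_2\theta^*(\kappa_2)<\kappa_2\theta^*(\kappa_1)\leq\kappa_1\theta^*(\kappa_1),
\end{equation*}
where the second inequality uses $\theta^*(\kappa_1)<0$ and $\kappa_2>\kappa_1\geq 0$; on the flat sub-interval $[\underline{\kappa},1]$ the product $\kappa\theta^*(1)$ is trivially strictly decreasing because $\theta^*(1)<0$; and continuity of $\theta^*$ at $\underline{\kappa}$ glues the pieces together. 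Hence $\beta\mapsto\Theta(\xi;\beta)$ is strictly decreasing.

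The main (and really only) delicate step is to promote strict decreasingness of $\Theta(\xi;\cdot)$ to the set-valued inequality $G(\xi;\beta_2)\leq G(\xi;\beta_1)$ for $\beta_1<\beta_2$, because $H$ is set-valued exactly at $\underline{\theta}$ by Proposition \ref{prop:H}(4). I would split into three sub-cases. If neither $\Theta(\xi;\beta_i)$ equals $\underline{\theta}$, both $H$-images are singletons and Lemma \ref{lem:H_increase} closes the case. If $\Theta(\xi;\beta_1)=\underline{\theta}$ then $\Theta(\xi;\beta_2)<\underline{\theta}$, so $G(\xi;\beta_2)=\{0\}$ while $G(\xi;\beta_1)\subseteq[0,\underline{H}]$, and every element of the latter dominates $0$. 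If $\Theta(\xi;\beta_2)=\underline{\theta}$ then $\Theta(\xi;\beta_1)>\underline{\theta}$ and $G(\xi;\beta_2)=[0,\underline{H}]$; to conclude I would invoke Theorem \ref{thm:continuity}, which gives $\lim_{\theta\downarrow\underline{\theta}}H(\theta)=\underline{H}$, combined with Lemma \ref{lem:H_increase} to deduce $H(\Theta(\xi;\beta_1))\geq\underline{H}$, so every element of $G(\xi;\beta_2)$ is dominated by $G(\xi;\beta_1)$. Handling this boundary value $\underline{\theta}$ via Theorem \ref{thm:continuity} is the one place where a non-routine result is really needed; the remainder is a straightforward composition of monotonicities.
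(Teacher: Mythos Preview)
Your proof is correct and follows essentially the same approach as the paper: factor $G(\xi;\beta)=H(\Theta(\xi;\beta))$, establish the monotonicity of $\beta\mapsto\kappa(\beta)$ and of $\kappa\mapsto\kappa\theta^*(\kappa)$ via Proposition~\ref{lemm_revise}, and then push through the (set-valued) monotonicity of $H$ using Proposition~\ref{prop:H}, Lemma~\ref{lem:H_increase} and Theorem~\ref{thm:continuity}. The paper compresses the set-valued argument at $\underline{\theta}$ into a single sentence asserting that $H$ is non-decreasing, whereas you carefully unpack the three sub-cases; your treatment is more explicit but uses precisely the same ingredients.
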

\begin{proof}
	By Proposition \ref{prop:H}, Lemma \ref{lem:H_increase} and Theorem \ref{thm:continuity}, the set-valued map $H(\theta)$ is non-decreasing. 
	Note that $G(\xi;\beta)=H(\Theta(\xi,\beta))$ where
	\begin{align}
		\Theta(\xi,\beta):=\frac{\beta }{1-(1-\beta)e^{-\delta\tau}\xi} \theta^*\left(\frac{\beta}{1-(1-\beta)e^{-\delta\tau}\xi}\right).
	\label{eq:Theta_xi_beta}
	\end{align}
	The result follows from the fact that $H(\cdot)$ has no dependence on $\beta$, Proposition \ref{lemm_revise} that $\theta^*(\cdot)$ is positive and strictly increasing (resp. negative and weakly decreasing) when $\theta^*(0)>0$ (resp. $\theta^*(0)<0$), and the observation that $\beta\mapsto \frac{\beta }{1-(1-\beta)e^{-\delta\tau}\xi}$ is non-negative and strictly increasing.
\end{proof}

We are now finally ready to prove Theorem \ref{thm:G_fp}. 

\begin{proof}[Proof of Theorem \ref{thm:G_fp}]
	Using the properties derived in Proposition \ref{prop:G_properties}, $G:[0,e^{h\tau}]\to2^{[0,e^{h\tau}]}$ is a closed graph and $G(\xi)$ is non-empty and convex for all $\xi\in[0,e^{h\tau}]$. By Kakutani's fixed-point theorem, at least one fixed point exists on $[0,e^{h\tau}]$.
	
	When $\theta^*(0)=0$, $G(\xi)$ is a constant singleton given by $\left\{\mathbb{E}(Y^{\alpha})|Y\in\argmax_{Y\in\mathcal{Y}}\mathbb{E}\left[U(Y-\gamma)\right]\right\}$ and hence it must be the unique fixed point of $G$.
	
	To show that the fixed point is unique when $\theta^*(0)<0$. Suppose on contrary that $\hat{\xi}_1<\hat{\xi}_2$ are both fixed points. Then by definition we have $\hat{\xi}_i\in G(\hat{\xi}_i)$ for $i\in\{1,2\}$. There are five possible cases: i) $\hat{\xi}_1<\hat{\xi}_2<\underline{\xi}$; ii) $\underline{\xi}<\hat{\xi}_1<\hat{\xi}_2$; iii) $\hat{\xi}_1<\underline{\xi}=\hat{\xi}_2$; iv) $\hat{\xi}_1=\underline{\xi}<\hat{\xi}_2$; and v) $\hat{\xi}_1<\underline{\xi}<\hat{\xi}_2$. Case i) cannot happen because this would imply $G(\hat{\xi}_1)=\hat{\xi}_1<\hat{\xi}_2=G(\hat{\xi}_2)$ which contradicts the fact that $G$ is non-increasing. Case ii) cannot happen because $G(\xi)=\{0\}$ on $\xi>\underline{\xi}$ and then we have $0=G(\hat{\xi}_1)=\hat{\xi}_1<\hat{\xi}_2=G(\hat{\xi}_2)=0$ as a contradiction. Case iii) cannot happen as this would otherwise imply $G(\hat{\xi}_1)=\hat{\xi}_1< \hat{\xi}_2=\underline{\xi}\in G(\underline{\xi})=[0,\underline{H}]$ and in turn $G(\hat{\xi}_1)< \underline{H}=\lim_{\xi\uparrow \underline{\xi}}G(\xi)$ which contradicts the fact that $G(\xi)$ is continuously non-increasing on $[0,\underline{\xi})$. Case iv) also cannot happen because we would then have $0\leq \hat{\xi}_1<\hat{\xi}_2=G(\hat{\xi}_2)=0$ as a contradiction. Case v) is impossible since this will lead to $0<G(\hat{\xi}_1)=\hat{\xi}_1<\hat{\xi}_2=G(\hat{\xi}_2)=0$ which is a contradiction. Thus the fixed point must be unique when $\theta^*(0)<0$.
\end{proof}

\section{Other proofs}
\label{app:cs}

\begin{proof}[Proof of Proposition \ref{prop:unique}]
	Without loss of generality, we just need to prove $\bP(Y^1_1=Y^2_1)=1$. Let $V^{\hat{\pi}^i}$ be the equilibrium value function and $Y^i=Y^i_1$ be the corresponding first-period equilibrium gross return variable induced by $\hat{\pi}^i$. Then $V^{\hat{\pi}^i}(x)=\hat{A}^i x^\alpha$ for some constant $\hat{A}^i$, and then $(\hat{A}^i,Y^i)$ must be a solution to \eqref{eq:sophi_sys} by Proposition \ref{prop:sophis_system}. Consequently, $\xi_i:=\bE[(Y^i)^\alpha]$ must be a fixed point of $\xi\mapsto G(\xi)$. Due to the uniqueness of the fixed point of $G$ under $A_{\my}\leq 0\iff \theta^*(0)\leq 0$ as per Theorem \ref{thm:G_fp}, we have $\xi_1=\xi_2$.
	
	Recall the definition of $\Theta(\xi)$ in \eqref{eq:Theta_xi}. Since $$Y^i\in\argmax_{Y\in \cY}\bE\left[F(Y;\Theta(\xi_i))\right]=\argmax_{Y\in \cY}\bE\left[F(Y;\Theta(\xi_1))\right],$$ by Proposition \ref{prop:uniqueness} we must have $\bP(Y^1=Y^2)=1$ unless $\Theta(\xi_1)=\underline{\theta}$.
	
	Now suppose $A_{\my}=0\iff\theta^*(0)=0$. By Proposition \ref{lemm_revise}, $\theta^*(\kappa)=0$ for any $\kappa\in[0,1]$ and therefore $\Theta(\xi_1)=0>\underline{\theta}$. Hence we must have $\bP(Y^1=Y^2)=1$. If instead $A_{\my}<0\iff\theta^*(0)<0$ and $\Theta(\xi_1)=\underline{\theta}\iff \xi_1=\underline{\xi}$, then Proposition \ref{prop:uniqueness}  suggests $\bP(Y^i\in\{0,\gamma(1+k^{-\frac{1}{1-\alpha}})\})=1$. But then for $i\in\{1,2\}$, the requirement that $\xi_1=\xi_i=\bE[(Y^i)^\alpha]$ uniquely determines the law of $Y^i$ such that
	\begin{align*}
		\bP(Y^i=\gamma(1+k^{-\frac{1}{1-\alpha}}))=\frac{\xi_1}{\gamma^\alpha(1+k^{-\frac{1}{1-\alpha}})^\alpha},\qquad \bP(Y^i=0)=1-\frac{\xi_1}{\gamma^\alpha(1+k^{-\frac{1}{1-\alpha}})^\alpha}.
	\end{align*}
	As a remark, the expressions above are well-defined probability values because $$0\leq \xi_1=\underline{\xi}\leq \underline{H}< \gamma^\alpha(1+k^{-\frac{1}{1-\alpha}})^\alpha,$$ where the second inequality is due to the properties of $G$ in Proposition \ref{prop:G_properties} and the fact that $\xi_1$ is the fixed point of $G$, while the last inequality is due to the definition of $\underline{H}$ in \eqref{eq:cri_h}.
	
	Finally, by the Hardy-Littlewood inequality \eqref{eq:HL}, we must have $Y^i=Q_{Y^i}(1-L(Z_\tau))$ where $Q_{Y^i}$ is the quantile function of $Y^i$ (see again part (ii) of Theorem 1 in \cite{jin-zhou2010}). The conclusion now follows from the fact that $Y^1$ and $Y^2$ have the same probability distribution.
\end{proof}

\begin{proof}[Proof of Proposition \ref{prop:myopia_cs}]

	The monotonicity of $\beta A_{\expo}$ is obvious as $A_{\expo}$ does not depend on $\beta$.
	
	Under the stated assumption, let $\hat{\xi}=\hat{\xi}(\beta)$ be the unique fixed point of $\xi\mapsto G(\xi;\beta)$. Then we have
	\begin{align*}
		A_{\so}=A_{\so}(\beta)=\frac{\beta }{1-(1-\beta)e^{-\delta\tau}\hat{\xi}(\beta)}\theta^*\left(\frac{\beta }{1-(1-\beta)e^{-\delta\tau}\hat{\xi}(\beta)}\right).
	\end{align*}
	Under $A_\my>0\iff\theta^*(0)>0$, it is known from Proposition \ref{lemm_revise} that $\theta^*(\kappa)$ and in turn $\kappa\theta^*(\kappa)$ are positive and non-decreasing. Hence the result follows if we can show that $\hat{\xi}(\beta)$ is non-decreasing in $\beta$. From Proposition \ref{prop:G_properties}, $G(\xi;\beta)$ is strictly positive singleton when $\theta^*(0)>0$. Then if there is a unique fixed point, it must be given by an up-crossing by the diagonal line. Together with Lemma \ref{lem:G_beta_mono} which suggests $G(\xi;\beta)$ is non-decreasing in $\beta$, $\hat{\xi}(\beta)$ must be non-decreasing in $\beta$ as well.
	
	If $A_\my<0\iff\theta^*(0)<0$, then by Lemma \ref{lem:G_beta_mono}, $G(\xi;\beta)$ is non-increasing in $\beta$ which implies $\hat{\xi}(\beta)$ is non-increasing in $\beta$ as well. For $\beta_2>\beta_1$, write $\hat{\Theta}_i:=\Theta(\hat{\xi}(\beta_i),\beta_i)$ for $i\in\{1,2\}$ where $\Theta(\xi,\beta)$ is defined in \eqref{eq:Theta_xi_beta}. Then 
	$$H(\hat{\Theta}_2)=G(\hat{\xi}_2;\beta_2)=\hat{\xi}_2\leq \hat{\xi}_1=G(\hat{\xi}_1;\beta_1)=H(\hat{\Theta}_1).$$ Note that $H(\cdot)$ has no dependence on $\beta$ and is non-decreasing, we must have $\hat{\Theta}_2\leq \hat{\Theta}_1$. The result immediately follows on observing that $A_{\so}(\beta)= \Theta(\hat{\xi}(\beta),\beta)$.
\end{proof}


\end{appendices}


\bibliographystyle{abbrv}
\bibliography{ref}

\end{document}